\documentclass[twoside]{article}
\usepackage{amssymb}
\usepackage{amsmath}
\usepackage[dvipsnames]{xcolor}
\usepackage{qic,epsfig}
\usepackage{algorithm,algorithmic}
\renewenvironment{proof}[1][Proof]{\vspace*{12pt}\noindent{\bf #1: }}

\textwidth=5.6truein
\textheight=8.0truein

\newtheorem{proposition}{Proposition}
\newtheorem{remark}{Remark}
\newtheorem{problem}{Problem}



\begin{document}
\setlength{\textheight}{8.0truein}    

\runninghead{Infeasibility of constructing a special orthogonal matrix for the deterministic remote preparation... }
            {Wenjie Liu, Zixian Li, and Gonglin Yuan}

\normalsize\textlineskip
\thispagestyle{empty}
\setcounter{page}{1289}


\vspace*{0.88truein}

\alphfootnote

\fpage{1289}

\centerline{\bf
INFEASIBILITY OF CONSTRUCTING A SPECIAL }
\vspace*{0.035truein}
\centerline{\bf ORTHOGONAL MATRIX FOR THE DETERMINISTIC REMOTE }
\vspace*{0.035truein}
\centerline{\bf PREPARATION  STATE OF AN ARBITRARY N-QUBIT STATE}

\vspace*{0.37truein}
\centerline{\footnotesize
WENJIE LIU}
\vspace*{0.015truein}
\centerline{\footnotesize\it Engineering Research Center of Digital Forensics, Ministry of Education, }
\baselineskip=10pt
\centerline{\footnotesize\it Nanjing University of Information Science and Technology,}
\baselineskip=10pt
\centerline{\footnotesize\it Nanjing, 210044, China}

\vspace*{0.015truein}
\centerline{\footnotesize\it School of Computer and Software, Nanjing University of Information Science and Technology,}
\baselineskip=10pt
\centerline{\footnotesize\it Nanjing, 210044, China}
\baselineskip=10pt
\centerline{\footnotesize\it wenjiel@163.com}

\vspace*{10pt}
\centerline{\footnotesize 
ZIXIAN LI}
\vspace*{0.015truein}
\centerline{\footnotesize\it School of Computer and Software, Nanjing University of Information Science and Technology,}
\baselineskip=10pt
\centerline{\footnotesize\it Nanjing, 210044, China}
\baselineskip=10pt
\centerline{\footnotesize\it zixianli157@163.com}

\vspace*{10pt}
\centerline{\footnotesize 
GONGLIN YUAN}
\vspace*{0.015truein}
\centerline{\footnotesize\it School of Mathematics and Information Science, Center for Applied Mathematics of Guangxi,  }
\baselineskip=10pt
\centerline{\footnotesize\it Guangxi University, Nanning, 530004, China}
\baselineskip=10pt
\centerline{\footnotesize\it yuangl0417@126.com}

\vspace*{0.225truein}
\publisher{July 9, 2022}{October 8, 2022}

\vspace*{0.21truein}

\abstracts{
In this paper, we present a polynomial-complexity algorithm to construct a special orthogonal matrix for the deterministic remote state preparation (DRSP) of an arbitrary $n$-qubit state, and prove that if $n > 3$, such matrices do not exist. Firstly, the construction problem is split into two sub-problems, i.e., finding a solution of a semi-orthogonal matrix and generating all semi-orthogonal matrices. Through giving the definitions and properties of the matching operators, it is proved that the orthogonality of a special matrix is equivalent to the cooperation of multiple matching operators, and then the construction problem is reduced to the problem of solving an XOR linear equation system, which reduces the construction complexity from exponential to polynomial level. Having proved that each semi-orthogonal matrix can be simplified into a unique form, we use the proposed algorithm to confirm that the unique form does not have any solution when $n > 3$, which means it is infeasible to construct such a special orthogonal matrix for the DRSP of an arbitrary $n$-qubit state.}{}{}

\vspace*{10pt}
\keywords{Quantum information, remote state preparation, arbitrary $n$-qubit state, orthogonal matrix}


\vspace*{1pt}\textlineskip    
\section{Introduction}        
\noindent
Remote state preparation (RSP)\cite{2000LoCla, 2001BennettRem, 2003BerryOpt, 2005BennettRem} is a method for quantum state transmission via quantum entanglement in which several classical channels are required. In RSP, the sender (known as Alice) knows the state to be sent but does not need to prepare it. RSP is an important issue in the field of quantum information because of the lower cost of classical communication, and various RSP schemes of a quantum state with a certain number of qubits, such as a single-qubit state\cite{2010BarreiroRem, 2015WangCon, 2021ShiCon,2021WangRem, 2021PengBid}, two-qubit state\cite{2017AdepojuJoi,2021SunDou,2021QianEff}, three-qubit state\cite{2011XiaoJoi,2012ZhanDet, 2013ZhanDet,2016RaRem,2017ChenCon}, four-qubit state\cite{2006DaiCla,2019XueRem,2020SangOpt,2020DuDet}, and five-qubit state\cite{2016ChenEco}, have been proposed. In addition, since the theoretical probability of success is $100\%$, deterministic RSP (DRSP) schemes\cite{2021ShiCon,2021SunDou,2012ZhanDet,2013ZhanDet,2020SangOpt,2020DuDet,2016ChenEco} are more concerned than probabilistic ones\cite{2015WangCon,2011XiaoJoi,2017ChenCon,2019XueRem}.

For the sake of generality, several DRSP schemes of an arbitrary $n$-qubit state (here, ``arbitrary" means any number of qubits) are proposed\cite{2018WeiOpt, 2019JiangCon, 2019ZhouDet, 2019WeiDet, 2021PengPer, 2022ZhaoMul}. There is a similar and key element in all of them: a kind of special $2^n\times 2^n$ real-parameter orthogonal matrices, where each column is consisted of permutations of real coefficients $ \pm a_0,\pm a_1,\cdots,\pm a_{2^n-1}$ of the prepared state. These matrices have been used in several schemes\cite{2021SunDou,2012ZhanDet, 2013ZhanDet, 2019XueRem} for a certain number of qubits, and all of them are constructed through accidental patchwork or violent traversal. For an arbitrary $n$, it is necessary to design a general method to construct such a special orthogonal matrix. In 2018, Wei et al.\cite{2018WeiOpt} proposed an algorithm to construct such a matrix for an arbitrary $n$ that limits the traversal range, but it is essentially still an exponential violent traversal. In this paper, we present an polynomial-complexity general algorithm to construct such a matrix for the DRSP of an arbitrary $n$-qubit state, where the construction problem is reduced to the problem of solving an XOR linear equation system. In addition, we prove that such special orthogonal matrices do not exist for $4$ or more qubits through proving that each semi-orthogonal matrix (formally defined in the text) can be simplified into a unique form and then using the proposed algorithm to confirm that the unique form does not have any solution. As a result, we prove that it is infeasible to construct such a orthogonal matrix for the DRSP of an arbitrary $n$-qubit state.

The rest of this paper is organized as follows: in Section~\ref{sec2}, these schemes are briefly reviewed and analyzed. The problem of constructing a special orthogonal matrix is defined and split in Section~\ref{sec3}. The construction algorithm is presented in Section~\ref{sec4}. The infeasibility proof is presented in Section~\ref{sec5}. We conclude the paper in Section~\ref{sec6}.

\section{Schemes review and analysis}\label{sec2}
\noindent
In this section, we briefly review the DRSP of a real-parameter state via maximally two-qubit states as a representative of the above schemes and focus on the review and analysis of Wei et al.s' algorithm\cite{2018WeiOpt}.

\subsection{DRSP of a real-parameter state via maximally two-qubit states}\label{subsec2.1}
\noindent
Assume a sender Alice and a receiver Bob both have a set of $n$ particles respectively, where the particles $j=1,3,5,\cdots, 2n-1$ belong to Alice and $k=2,4,6,\cdots, 2n$ to Bob. There is a quantum entanglement between each particle pair $(j,k)$ held by Alice and Bob, respectively. Each of these pairs forms a maximally two-qubit entangled state which can be 

\noindent
\begin{equation}
\left | \Psi \right \rangle_{jk} =\frac{1}{\sqrt{2} } \left ( \left | 00  \right \rangle +\left | 11  \right \rangle  \right )_{jk},j=1,3,\cdots,2n-1, k=2,4,\cdots,2n\,.
\end{equation}
or other similar Bell states. Thus, we have a total state as $\left | \Psi \right \rangle _{total} =\left | \Psi \right \rangle _{12}\otimes \left | \Psi \right \rangle _{34} \otimes \cdots \otimes \left | \Psi \right \rangle _{(2n-1)2n}$. Now Alice wants to transmit a state $\left | \psi  \right \rangle =\sum_{i=0}^{2^n-1} a_i e^{\imath \varphi _i } \left | i  \right \rangle$ called an arbitrary $n$-qubit \textbf{general} state, where $\imath$ is the imaginary unit, $a_i$, $\varphi_i$ are all arbitrary real numbers and $\sum_{i=0}^{2^n-1} a_i^2=1 $. The follows are the general processes of the RSP via maximally two-qubit states.

\textit{Step 1} Alice needs to find a special projective measurement basis $\{ \left |\tau _i\right\rangle | i=0,1,\cdots,2^n-1 \}$ in the Hilbert space of the particles $j$. Therefore, the total state can be written as 

\noindent
\begin{equation}
\left | \Psi  \right \rangle _{total}=\frac{1}{\sqrt{2^n}}\sum_{i=0}^{2^n-1}\left | \tau_i  \right \rangle _{13\cdots (2n-1)} \otimes\left | \theta _i  \right \rangle_{24\cdots 2n},
\end{equation}
where 
\noindent
\begin{equation}
 \left | \theta _i  \right \rangle_{B}= \left | \tau _i  \right \rangle_A^{\dagger }\otimes\left | \Psi  \right \rangle_{AB},  i=0,1,\cdots,2^n-1
\end{equation}
is a basis in Hilbert space of particles $k$. Now Alice needs to measure her particles on the measurement basis $\left \{ \left |\tau _i\right\rangle \right \}$.

\textit{Step 2} After the measurement, Bob’s particles will collapse into $\left \{ \left |\theta _i\right\rangle \right \}$ where $i$ is one of the $16$ results. Meanwhile, Alice informs the measurement result $i$ to Bob via classical channels.

\textit{Step 3} According to the value of $i$, Bob needs to select a previously agreed unitary operator $U_i$ to act on $\left |\theta _i\right\rangle$, to restore the target state $ \left |\psi\right\rangle$, i.e., $U_i \left |\theta _i\right\rangle=\left |\psi\right\rangle$.

In the last step, we should recover the target state $ \left |\psi\right\rangle$ by using the unitary operator $U_{i}$ independent of $ \left |\psi\right\rangle$. For an arbitrary $n$-qubit state it is hard to realize with a high success probability, e.g., if letting $\left |\tau _0\right\rangle=\left |\psi\right\rangle$, then only in the case of $i=0$, it can be recovered by using Pauli operators with a probability $\frac{1}{16}$\cite{2019XueRem}. 

If only considering a \textbf{real-parameter} state, i.e., $\left | \psi  \right \rangle =\sum_{i=0}^{2^n-1} a_i \left| i  \right \rangle$, it is proposed\cite{2018WeiOpt} to use a complete orthogonal basis 
$\left | \tau_i  \right \rangle =U[\Theta_n^n] 
\begin{pmatrix} \left | 0 \right \rangle,\left | 1 \right \rangle,\cdots,\left | 2^n-1 \right \rangle \end{pmatrix}^T$, where the elements of the $2^n\times 2^n$ orthogonal matrix $U[\Theta_n^n]$ are consisted of permutations of $ \pm a_0,\pm a_1,\cdots,\pm a_{2^n-1}$. Thus, the $ \left |\psi\right\rangle$ in the above scheme can be recovered by simply element-rearranging and phase-inversion. If any above orthogonal matrix $U[\Theta_n^n]$ is constructed, then the RSP of an arbitrary $n$-qubit real-parameter state can be realized simply with a probability of $100\% $.

Although the above scheme is only applicable to the DRSP of a real-parameter state, there are also several DRSP schemes\cite{2019JiangCon,2019WeiDet,2021PengPer, 2022ZhaoMul} of a general state that use the above orthogonal matrix. Therefore, the construction of such an orthogonal matrix is a key point to realize the DRSP of an arbitrary $n$-qubit state.

\subsection{Wei et al.'s algorithm to construct a special orthogonal matrix}\label{subsec2.2}
\noindent
Wei et al.\cite{2018WeiOpt} proposed an algorithm to construct an above orthogonal $2^n\times 2^n$ matrix $U[\Theta^n_n]$ as follows.

\textit{Step 1} Set the elements of the $1$st and $(2^{n-1}+1)$th rows together with the $1$st and $(2^{n-1}+1)$th columns

\noindent
\begin{equation}\label{eq1}
U[\Theta_n^n]=\begin{pmatrix}
 a_0 & a_1 & \cdots & a_{2^{n-1}-1} & a_{2^{n-1}} & a_{2^{n-1}+1} & \cdots & a_{2^n-1}\\
 a_1 & {\color{Red} \ddots}   & {\color{Red} \cdots} & {\color{Red} \cdots} & a_{2^{n-1}+1} & {\color{Blue} \cdots}  & {\color{Blue} \cdots} & {\color{Blue} \cdots}\\
 \vdots & {\color{Red} \vdots} & {\color{Red} \ddots} & {\color{Red} \cdots} & \cdots & {\color{Blue} \cdots} & {\color{Blue} \cdots} & {\color{Blue} \cdots}\\
 a_{2^{n-1}-1} & {\color{Red} \vdots} & {\color{Red} \vdots} & {\color{Red} \ddots} & a_{2^n-1} & {\color{Blue} \cdots} & {\color{Blue} \cdots} & {\color{Blue} \cdots}\\
 a_{2^{n-1}} & -a_{2^{n-1}+1} & \vdots & -a_{2^n-1} & -a_0 & a_1 & \cdots & a_{2^{n-1}-1}\\
 a_{2^{n-1}+1} & {\color{Green} \vdots} & {\color{Green} \vdots} & {\color{Green} \vdots} & -a_1 & {\color{Blue} \ddots} & {\color{Blue} \cdots} & {\color{Blue} \cdots}\\
 \vdots & {\color{Green} \vdots} & {\color{Green} \vdots} & {\color{Green} \vdots} & \vdots & {\color{Blue} \vdots} & {\color{Blue} \ddots} & {\color{Blue} \cdots}\\
 a_{2^n-1} & {\color{Green} \vdots} & {\color{Green} \vdots} & {\color{Green} \vdots} & -a_{2^{n-1}-1} & {\color{Blue} \vdots} & {\color{Blue} \vdots} &{\color{Blue} \ddots}
\end{pmatrix}.
\end{equation}

When $n=1$, we can obtain the matrix $U[\Theta^1_1]$ from this step, and the following processes would be ignored.

\textit{Step 2} If $n \le 2$, then set that $U[\Theta^n_n](i,j)=U[\Theta^{n-1}_{n-1}](i,j)$, here $2\le i, j \le2^{n-1}$. The parameter $U[\Theta^n_n](i,j)$ presents the element in the $i$th row and $j$th column of the matrix $U[\Theta^n_n]$. After that, the red items on the above equation can be determined.

\textit{Step 3} According to this equation $\sum_{i=1}^{2^n}U[\Theta^n_n](i,j)\cdot U[\Theta^n_n](i, 2^{n-1}+1) = 0 (j\le 2^{n-1})$, we can obtain the elements from the $1$st column to $2^{n-1}$th column, i.e., the green elements of $U[\Theta^n_n](i,j)$ could be calculated.

\textit{Step 4} Based on $\sum_{i=1}^{2^n}U[\Theta^n_n](i,j)\cdot U[\Theta^n_n](i, k) = 0 (k=1, 2^{n-1}+1, 2\le j\le 2^n)$ the elements from the $1$st column to $2^n$th column could be completed; namely, the blue elements of the matrix $U[\Theta^n_n](i,j)$ can be fulfilled.

\subsection{Analysis of Wei et al.'s algorithm}\label{subsec2.3}
\noindent
As can be seen, the last two steps of the above algorithm do not give clear execution steps, but only the constraints of the solution, which is essentially a violent traversal. Let $N=2^n$. Even if the element arrangement is determined, it needs to traverse $2^{\left(\frac{N}{2}-1 \right)^2}+2^{2\left(\frac{N}{2}-1\right)^2}=O (2^{\frac{N^2}{2}})$ times to determine the distribution of minus signs in the worst case, because those signs can be regarded as Boolean variables. As an instance, let $n=4$, i.e., $N=16$, then we need to traverse $2^{\left(\frac{16}{2}-1 \right)^2}+2^{2\left(\frac{16}{2}-1\right)^2}\approx 3.169\times 10^{29}$ times at worst, which is impossible. Therefore, for the RSP with a large $n$ ($n>3$), the algorithm needs unimaginable time.

In fact, we will prove that for $n>3$ the above orthogonal matrix does not exist, which means that the above algorithm will not output any solution. To prove it, we will first propose a more efficient construction algorithm with polynomial time and space complexity for $N$.

\section{Problem definition and split of constructing a special orthogonal matrix}\label{sec3}
\noindent
In this section, we will define the problem of constructing an above orthogonal matrix as Problem~\ref{pro1} in Section~\ref{subsec3.1} and split it into two sub-problems as Problem~\ref{pro2} and Problem~\ref{pro3} in Section~\ref{subsec3.2}.

\subsection{Problem definition}\label{subsec3.1}
\noindent
We define the construction problem as follows.

\begin{definition}[\textbf{Special orthogonal matrix}]
Given a set $U=\left \{ a_0,a_1,\cdots,a_{15} \right\}$ of real parameters of size $N$, where $N$ is a positive integer power of 2 and $\sum_{i=0}^{N-1}a_i^2=1$. A \textbf{special orthogonal matrix} $S_N$ is an orthogonal matrix, where each column vector $\textbf{\textit{s}}^{(i)}$ is obtained by adding positive and negative signs to the elements of $U$ and arranging them arbitrarily without repeating them, i.e., 
$\textbf{\textit{s}}^{(i)}=\begin{pmatrix} \left( -1 \right)^{\sigma_{0i}}a_{\alpha_{0i}} ,
\left( -1 \right)^{\sigma_{1i}}a_{\alpha_{1i}} , \cdots, \left( -1 \right)^{\sigma_{(N-1)i}}a_{\alpha_{(N-1)i}} \end{pmatrix}^T$
where $\left \{ \alpha_{0i},\alpha_{1i},\cdots,\alpha_{(N-1)i} \right \}$ is a permutation of index sequence $\left [N \right ]=\left \{ 0,1,\cdots,N-1 \right \}$, and $\sigma_{0i},\sigma_{1i},\cdots,\sigma_{(N-1)i}$ are each valued as $0,1$, e.g., $\textbf{\textit{s}}=\begin{pmatrix} -a_1, a_0,-a_3,-a_2 ,\cdots,-a_{N-2} \end{pmatrix}^T$.
\end{definition}

\begin{problem}\label{pro1}(\textbf{Construct a special orthogonal matrix})
Given a set of real parameters $U$ of size $N$ where $N$ is a positive integer power of $2$ and $\sum_{i=0}^{N-1}a_i^2=1$, construct an $N$-order special orthogonal matrix.
\end{problem}

\subsection{Problem split}\label{subsec3.2}
\noindent
We have a definition as follows.

\begin{definition}[\textbf{Semi-orthogonal matrix}]
An $N$-order matrix $\left| S_N\right|$ is a \textbf{semi-orthogonal matrix} if:
\begin{itemlist}
\item Each column vector $\textbf{\textit{s}}^{(i)}=\begin{pmatrix} a_{\alpha_{0i}} ,a_{\alpha_{1i}} ,\cdots,a_{\alpha_{(N-1)i}} \end{pmatrix}^T$, where $\left \{ \alpha_{0i},\alpha_{1i},\cdots,\alpha_{(N-1)i} \right \}$ is a permutation of index sequence $\left [N \right ]$.

\item For any two column vectors $\textbf{\textit{s}}^{(i)},\textbf{\textit{s}}^{(j)}$, all elements in $\textbf{\textit{s}}^{(i)}$ can be divided into several $2$-tuples, and in $\textbf{\textit{s}}^{(j)}$, the elements in each $2$-tuple are and only exchanged.
\end{itemlist}
In addition, if for a subset $V$ of $U$, a matrix $\left| S_V \right|$ of order $\left | V \right |$ satisfies the above description, then it is also called semi-orthogonal.
\end{definition}

The following basic proposition can be given.

\begin{proposition}\label{prop3.1}
Given two column vectors

\noindent
\begin{equation}
\begin{aligned}
\textbf{\textit{s}}^{(i)}=\begin{pmatrix}\left( -1 \right) ^{\sigma_{0i}}a_{\alpha_{0i}}\\
\left( -1 \right)^{\sigma_{1i}}a_{\alpha_{1i}}\\ \vdots\\
\left( -1 \right)^{\sigma_{(N-1)i}}a_{\alpha_{(N-1)i}} \end{pmatrix},
\textbf{\textit{s}}^{(j)}=\begin{pmatrix}\left( -1 \right)^{\sigma_{0j}}a_{\alpha_{0j}} \\
\left( -1 \right)^{\sigma_{1j}}a_{\alpha_{1j}}\\ 
\vdots\\
\left( -1 \right)^{\sigma_{(N-1)j}}a_{\alpha_{(N-1)j}} \end{pmatrix}
\end{aligned},
\end{equation}
then the necessary and sufficient condition for $\textbf{\textit{s}}^{(i)}\perp  \textbf{\textit{s}}^{(j)}$ is that all elements of $\textbf{\textit{s}}^{(i)}$ can be divided into several $2$-tuples, and in $\textbf{\textit{s}}^{(j)}$, the elements in each 2-tuple are exchanged and one of them is multiplied by $-1$. i.e.,
\begin{itemlist}
\item $\forall k\in \left[ N\right ]$, $ \alpha_{ki} \ne\alpha_{kj}$;

\item $\forall k,l\in \left [N \right ]$ and $k\ne l$, if $\alpha_{ki} =\alpha_{lj}$, then $\alpha_{li} =\alpha_{kj}$ and $\sigma_{ki}\oplus\sigma_{kj}= \sigma_{li}\oplus\sigma_{lj}\oplus 1$, where ``$\oplus$" means XOR.
\end{itemlist}
\end{proposition}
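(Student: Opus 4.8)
The plan is to read off the inner product as a quadratic form in the parameters and to exploit that $\mathbf{s}^{(i)}\perp\mathbf{s}^{(j)}$ must hold for \emph{arbitrary} values of $a_0,\dots,a_{N-1}$. Writing the inner product as $Q=\sum_{k=0}^{N-1}(-1)^{\sigma_{ki}\oplus\sigma_{kj}}\,a_{\alpha_{ki}}a_{\alpha_{kj}}$, orthogonality for all admissible choices of the $a_i$ is equivalent to $Q$ being the zero polynomial, i.e. to every monomial coefficient vanishing. I would therefore split the monomials into the ``diagonal'' squares $a_p^2$ and the ``off-diagonal'' products $a_pa_q$ with $p\neq q$, determine exactly which rows $k$ can contribute to each monomial, and impose that the total coefficient is $0$. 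The central tool throughout is that $\{\alpha_{0i},\dots,\alpha_{(N-1)i}\}$ and $\{\alpha_{0j},\dots,\alpha_{(N-1)j}\}$ are each permutations of $[N]$, so every index value occurs in exactly one row of column $i$ and exactly one row of column $j$.

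For sufficiency I would assume the two bullet conditions and group the rows into pairs. Given a row $k$, let $l$ be the unique row with $\alpha_{lj}=\alpha_{ki}$; the first bullet forces $l\neq k$, and the second bullet then yields $\alpha_{li}=\alpha_{kj}$ together with $\sigma_{ki}\oplus\sigma_{kj}=\sigma_{li}\oplus\sigma_{lj}\oplus1$. This pairing is a fixed-point-free involution, so the rows partition into such pairs, and the two rows of each pair contribute $[(-1)^{\sigma_{ki}\oplus\sigma_{kj}}+(-1)^{\sigma_{li}\oplus\sigma_{lj}}]\,a_{\alpha_{ki}}a_{\alpha_{kj}}=0$ to $Q$; summing over the pairs gives $Q=0$, hence $\mathbf{s}^{(i)}\perp\mathbf{s}^{(j)}$.

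For necessity I would argue coefficient by coefficient. First, a square $a_p^2$ can arise only from a row $k$ with $\alpha_{ki}=\alpha_{kj}=p$, and by injectivity of the two permutations at most one row can do this; since its coefficient is $\pm1\neq0$ and cannot be cancelled by any other monomial, $Q\equiv0$ forces $\alpha_{ki}\neq\alpha_{kj}$ for every $k$, which is the first bullet. Next, for a product $a_pa_q$ with $p\neq q$ there are exactly two candidate contributing rows, namely the unique row $k$ with $\alpha_{ki}=p$ and the unique row $l$ with $\alpha_{li}=q$, and these are distinct. If row $k$ has $\alpha_{kj}=q$ it contributes $(-1)^{\sigma_{ki}\oplus\sigma_{kj}}=\pm1$; since this is nonzero, cancellation demands that row $l$ also contribute, i.e. $\alpha_{lj}=p=\alpha_{ki}$ and hence $\alpha_{li}=q=\alpha_{kj}$, with the opposite sign $\sigma_{li}\oplus\sigma_{lj}=\sigma_{ki}\oplus\sigma_{kj}\oplus1$. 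Reading off $p=\alpha_{ki}=\alpha_{lj}$ and $q=\alpha_{kj}=\alpha_{li}$ reproduces exactly the second bullet.

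The routine part is the polynomial-identity bookkeeping; the step I expect to be the main obstacle — indeed essentially the only substantive point — is justifying that each off-diagonal monomial receives contributions from at most the two rows identified above and from no others. This rests entirely on the permutation (injectivity) hypothesis and on having first established the first bullet to rule out self-paired rows. Once that is pinned down, matching the single surviving cancellation requirement to the stated XOR relation on the sign exponents is immediate.
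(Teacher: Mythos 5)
Your proposal is correct and follows essentially the same route as the paper's proof: sufficiency by pairing rows into a fixed-point-free involution whose two members contribute cancelling terms, and necessity by observing that a ``square'' term $a_p^2$ has no cancelling partner while an off-diagonal monomial $a_pa_q$ has exactly one candidate partner row, whose existence and opposite sign force the stated conditions. Your explicit framing of orthogonality for arbitrary parameters as the vanishing of every monomial coefficient of the homogeneous quadratic form $Q$ is a slightly cleaner packaging of the paper's ``since the elements are valued arbitrarily'' argument, but it is the same proof in substance.
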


\begin{proof}
\begin{itemlist}
\item (\textit{Sufficiency}) $\forall k\in\left [N\right]$, it must be able to find $l\in\left [N \right ]$ and $l\ne k$ so that $\alpha_{ki} =\alpha_{lj}$, because $\left\{\alpha _{ki} | k\in\left [N \right ] \right \}$ and $\left\{\alpha _{lj} | l\in \left [N \right ]\right \}$ are both permutations of $\left [N \right ]$. \\
Without loss of generality, if $\alpha_{ki} =\alpha_{lj}$, then $\alpha_{li} =\alpha_{kj}$ and $\sigma_{ki}\oplus\sigma_{kj}= \sigma_{li}\oplus\sigma_{lj}\oplus 1$, thus in the inner product expression of $\textbf{\textit{s}}^{(i)}, \textbf{\textit{s}}^{(j)}$, there must be a term 

\noindent
\begin{equation}
\begin{aligned}
 &\left( -1 \right) ^{\sigma_{ki}}a_{\alpha_{ki}}\left( -1 \right) ^{\sigma_{kj}}a_{\alpha_{kj}}+\left( -1 \right) ^{\sigma_{li}}a_{\alpha_{li}}\left( -1 \right) ^{\sigma_{lj}}a_{\alpha_{lj}} \\
&=\left( -1 \right) ^{\sigma_{ki}\oplus \sigma_{kj}}a_{\alpha_{ki}}a_{\alpha_{kj}}+\left( -1 \right) ^{\sigma_{li}\oplus \sigma_{lj}}a_{\alpha_{li}}a_{\alpha_{lj}}\\
&=\left( -1 \right) ^{\sigma_{li}\oplus \sigma_{lj}\oplus 1}a_{\alpha_{li}}a_{\alpha_{lj}}+\left( -1 \right) ^{\sigma_{li}\oplus \sigma_{lj}}a_{\alpha_{li}}a_{\alpha_{lj}}\\
&=\left( -1 \right) ^{\sigma_{li}\oplus \sigma_{lj}}a_{\alpha_{li}}a_{\alpha_{lj}}\left( -1+1 \right)\\
&=0.
\end{aligned}
\end{equation}

It is consistent for each $2$-tuple in $\textbf{\textit{s}}^{(i)}$, so the inner product $\textbf{\textit{s}}^{(i)\dagger}\textbf{\textit{s}}^{(j)}=0$, i.e., $\textbf{\textit{s}}^{(i)}\perp \textbf{\textit{s}}^{(j)}$.

\item (\textit{Necessity}) Without losing generality, suppose $\alpha_{ki}=\alpha_{kj}$, then there is a term $\pm \alpha_{ki}\alpha_{kj}$ in the inner product expression without any corresponding term $\mp \alpha_{ki}\alpha_{kj}$ because elements in a permutation cannot be repeated. Thus, for any element values, the inner product of $\textbf{\textit{s}}^{(i)},\textbf{\textit{s}}^{(j)}$ cannot be guaranteed to be 0, so $\forall k\in \left [N \right ]$, $ \alpha_{ki} \ne\alpha_{kj}$.

Assume $\alpha_{ki}=\alpha_{lj}$ where $k\ne l$. There must be a term $\left( -1 \right) ^{\sigma_{ki}\oplus \sigma_{kj}}a_{\alpha_{ki}}a_{\alpha_{kj}}+\left( -1 \right) ^{\sigma_{li}\oplus \sigma_{lj}}a_{\alpha_{li}}a_{\alpha_{lj}}$
in the inner product expression $\textbf{\textit{s}}^{(i)\dagger} \textbf{\textit{s}}^{(j)}$. Since the elements are valued arbitrarily, $\left( -1 \right) ^{\sigma_{ki}\oplus \sigma_{kj}}a_{\alpha_{ki}}a_{\alpha_{kj}}$ must correspond to $\left( -1 \right) ^{\sigma_{ki}\oplus \sigma_{kj}\oplus 1}a_{\alpha_{ki}}a_{\alpha_{kj}}$ to be offset into $0$. Because $\alpha_{ki}=\alpha_{lj}$, 
if $\alpha_{li} =\alpha_{kj}$ and $\sigma_{ki}\oplus\sigma_{kj}= \sigma_{li}\oplus\sigma_{lj}\oplus 1$, then $\left( -1 \right) ^{\sigma_{li}\oplus \sigma_{lj}}a_{\alpha_{li}}a_{\alpha_{lj}}=\left( -1 \right) ^{\sigma_{ki}\oplus \sigma_{kj}\oplus 1}a_{\alpha_{ki}}a_{\alpha_{kj}}$, which can be offset by $\left( -1 \right) ^{\sigma_{ki}\oplus \sigma_{kj}}a_{\alpha_{ki}}a_{\alpha_{kj}}$; if $\alpha_{li} \ne\alpha_{kj}$, it’s impossible to find another term $a_{\alpha_{mi}}a_{\alpha_{mj}}=a_{\alpha_{ki}}a_{\alpha_{kj}}$ because $\forall m \in \left [N \right ]$, $a_{\alpha_{mj}}\ne a_{\alpha_{lj}} = a_{\alpha_{ki}}$, and $a_{\alpha_{mj}}\ne a_{\alpha_{kj}} $. Thus, $\alpha_{li} =\alpha_{kj}$ and $\sigma_{ki}\oplus\sigma_{kj}= \sigma_{li}\oplus\sigma_{lj}\oplus 1$ hold $\Box$.
\end{itemlist}
\end{proof}

\begin{remark}\label{rem3.1}
\begin{romanlist}
\item According to Proposition~\ref{prop3.1}, the process of obtaining another orthogonal column vector $b$ from one column vector $a$ is as follows: divide the elements in $a$ into several $2$-tuples, and for each $2$-tuple, exchange the elements and multiply one of them by $-1$.

\item By (i), if removing the signs of all the elements of a special orthogonal matrix $S$, it obviously becomes a semi-orthogonal matrix, denoted as $\left | S \right |$. However, a semi-orthogonal matrix may correspond to one or more special orthogonal matrices, or even not to any special orthogonal matrix.

\item Most schemes using special orthogonal matrices choose to piece up them by using smaller blocks, e.g., Wei et al.'s scheme\cite{2018WeiOpt}. For example, the $2^n\times 2^n$ matrix in Eq.~(\ref{eq1}) can be pieced up by $4$ smaller $2^{\frac{n}{2}}\times 2^{\frac{n}{2}}$ orthogonal matrices. However, it should be considered that the elements arrangement of a special orthogonal matrix is out of order, and the feasibility of constructing without using blocks has not been logically denied. We will prove that the method based on blocks is universal because all special orthogonal matrices can be simplified into forms pieced up by blocks.
\end{romanlist}
\end{remark}

As shown in Remark~\ref{rem3.1}, each special orthogonal matrix corresponds to a semi-orthogonal matrix. If we have a general method to determine whether a semi-orthogonal matrix has a \textbf{solution} (i.e., the corresponding special orthogonal matrix), and generate all semi-orthogonal matrices, then we can determine whether each semi-orthogonal matrix has a solution, so as to find at least one special orthogonal matrix, or deny its existence. In fact, for the case of $n > 3$, there is indeed no solution for each $N$-order semi-orthogonal matrix, which is proved in Section~\ref{sec5}. Therefore, Problem~\ref{pro1} can be split into the following two sub-problems. 

\begin{problem}\label{pro2}(\textbf{Find a solution of a semi-orthogonal matrix}) 
Given any $N$-order semi-orthogonal matrix, find one of the corresponding special orthogonal matrices.
\end{problem}

\begin{problem}\label{pro3}(\textbf{Generate all semi-orthogonal matrices})
Given a positive integer $n$, generate all $2^n$-order semi-orthogonal matrices.
\end{problem}

Problem~\ref{pro2} will be solved in Section~\ref{sec4}, while Problem~\ref{pro3} will be solved in Section~\ref{sec5}.

\section{Proposed algorithm to construct a special orthogonal matrix}\label{sec4}
\noindent
In this section, we present the algorithm for Problem~\ref{pro2}. This section is arranged as follows: in Section~\ref{subsec4.1}, we define the matching operator and show several property of it. It is proved that the orthogonality of every two vectors in a special orthogonal matrix is equivalent to the cooperation of their matching operators, and thus to the the establishment of several matching equations in Section~\ref{subsec4.2}, and then Problem~\ref{pro2} is reduced to the problem of solving an XOR linear equation system in Section~\ref{subsec4.3}. The construction algorithm is presented also in Section~\ref{subsec4.3}.

\subsection{Matching operator}\label{subsec4.1}
\noindent
We have the following definitions.

\begin{definition} [\textbf{Matching operation}]
Since the elements in a column vector are one-to-one corresponding to the row indexes, the process of obtaining another orthogonal column vector $b$ from one column vector $a$ is equivalent to dividing $N$ row indexes in $a$ into $\frac{N}{2}$ row index $2$-tuples, and for each $2$-tuple, exchanging the two rows and multiplying one row of them by $-1$. We call the above operation a \textbf{matching} operation.
\end{definition}

\begin{definition} [\textbf{Couple}]
A \textbf{couple} $\left \langle i,j \right \rangle $ is defined as an operation that exchanges two rows $i,j$ of a column vector and then  multiplies one of them by $-1$. A Boolean variable for each couple is defined: $\left \langle i,j \right \rangle =0,1$, where $0$ means multiplying the row $i$ (i.e., the previous $j$) by $-1$ after the exchanging, and $1$ means the opposite.
\end{definition}

\begin{definition} [\textbf{Division}]
A matching operation $M$ is always a combination of $\frac{N}{2}$ couples that do not overlap, and the set $D(M)$ of these couples is defined as the \textbf{division} of $M$.
\end{definition}

\begin{definition}[\textbf{Scattered matrix}]
A square matrix is called a \textbf{scattered matrix}, if in each row or column there is only one $\pm1$, and the other elements are $0$. These $\pm1$ are called \textbf{scattered points}.
\end{definition}

\begin{remark}\label{rem4.1}
\begin{romanlist}
\item The value of a couple $\left \langle i,j \right \rangle $ means a ``pointing", i.e., $0$ points to $i$ and $1$ points to $j$. We denote multiplying row $i$ by $-1$ as $-i$ (e.g., $i\mapsto j$ means mapping $i$ to $j$ and then multiplying the new $j$ by $-1$), then there are two mappings: $i\overset{\left\langle i,j\right\rangle}{\mapsto} {(-1)}^{\left\langle i,j\right\rangle}j,j\overset{\left\langle j,i\right\rangle}{\mapsto} {(-1)}^{\left\langle j,i\right\rangle}i$, where $\left\langle i,j\right\rangle$ and $\left\langle j,i\right\rangle$ describe the same operation, denoted as  $\left\langle i,j\right\rangle\equiv \left\langle j,i\right\rangle$. Obviously,  $\left\langle i,j\right\rangle=\left\langle j,i\right\rangle\oplus 1$.

\item $\overline{\left\langle i,j\right\rangle}$ is defined as a new couple meaning exchanging rows $i,j$ as well but multiplying the row contrary to $\left\langle i,j\right\rangle$ by $-1$, i.e., $\overline{\left\langle i,j\right\rangle}=\left\langle i,j\right\rangle\oplus 1$. Note no thinking that $\overline{\left\langle i,j\right\rangle}\equiv\left\langle j,i\right\rangle$. 

\item One couple $\left \langle i,j \right \rangle $ always corresponds to one bijection $\sigma : \left \{ i\right \} \leftrightarrow  \left \{ j\right \} $, but one bijection $\sigma$ corresponds to two couples: $\left \langle i,j \right \rangle $ and $\overline{\left\langle i,j\right\rangle}$. If the operation of multiplying $-1$ is not considered, the couple $\left \langle i,j \right \rangle $ can be considered as the bijection $\sigma$.

\item A couple $\left \langle i,j \right \rangle $ one-to-one corresponds to an $N$-order matrix $E_{\left \langle i,j\right\rangle}(-1)E(i,j)$, where $E(i,j)$ is a row-exchanging matrix and $E_{\left \langle i,j\right\rangle}(-1)$ is a row-multiplying matrix that multiplies the row pointed by $\left \langle i,j \right \rangle $ ($0$ points to $i$, $1$ points to $j$) by $-1$. Note that $E(i,j)$ and $E_{\left \langle i,j\right\rangle}(-1)$ are both scattered matrices.
\end{romanlist}
\end{remark}

A matching operation $M$ one-to-one corresponds to an $N$-order matrix, i.e., the product of some elementary matrices as
\noindent
\begin{equation}\label{eq4}
M=\prod_{\left\langle i,j \right \rangle \in D(M)}\left [E_{\left \langle i,j\right\rangle}(-1)E(i,j)\right]=\prod_{\left\langle i,j \right \rangle \in D(M)}E_{\left \langle i,j\right\rangle}(-1)\prod_{\left\langle i,j \right \rangle \in D(M)}E(i,j),
\end{equation}
which called the \textbf{matching operator} corresponding to the matching operation. By Proposition~\ref{prop3.1}, if a matching operator $M$ can be found such that the column vector $b=Ma$, then $b\perp a$; conversely, if $b\perp a$, a corresponding matching operator $M$ can be found.

We have the following proposition about some properties of matching operators, which will be widely used.

\begin{proposition}\label{prop4.1}
The following properties of a matching operator $M$ hold:
\begin{romanlist}
\item (\textit{Anti-self-reversibility}) $M^{-1}=-M$;
\item (\textit{Anti-symmetry}) $M^T=-M$, and all elements on the main diagonal of $M$ are $0$;
\item $-M$ is also a matching operator;
\item A scattered matrix must be a matching operator if it’s anti-self-reversible or anti-symmetric.
\end{romanlist}
\end{proposition}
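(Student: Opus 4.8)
The plan is to derive all four properties from one structural observation about matching operators. Since a division $D(M)$ consists of $\frac{N}{2}$ pairwise non-overlapping couples that partition the index set $[N]$, the elementary factors $E_{\langle i,j\rangle}(-1)E(i,j)$ coming from different couples act on disjoint pairs of coordinates and therefore commute. Hence, after relabelling the basis so that the two indices of each couple are adjacent, $M$ is block-diagonal: a direct sum of $\frac{N}{2}$ blocks, one per couple. First I would compute a single-couple block directly from the definition — exchanging rows $i,j$ and negating the pointed row gives, on the two coordinates of $\langle i,j\rangle$, either $B=\begin{pmatrix} 0 & -1 \\ 1 & 0\end{pmatrix}$ or $-B$ according to the value of the couple. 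The two facts I would record are $B^2=-I_2$ and $B^T=-B$, each block having zero diagonal.

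From this block picture the first three properties are immediate. For (i), $M^2=\bigoplus_c B_c^2=\bigoplus_c(-I_2)=-I_N$, so $M^{-1}=-M$. For (ii), transposition acts block-wise, $M^T=\bigoplus_c B_c^T=\bigoplus_c(-B_c)=-M$, and since every block has zero diagonal so does $M$; here I would emphasise that the transpose must be read off from the direct-sum form rather than from the reversed product of the full $N\times N$ factors, which would otherwise introduce a spurious sign $(-1)^{N/2}$. For (iii), negating $M$ negates each block, and $-B_c$ is again a legitimate couple block — precisely the block of the couple with reversed pointing $\overline{\langle i,j\rangle}$ — so $-M$ is the matching operator of the matching operation obtained by flipping every couple over the same division $D(M)$.

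For the converse statement (iv) I would proceed as follows. A scattered matrix $S$ is a signed permutation matrix and is therefore orthogonal, so $S^{-1}=S^T$; consequently the two hypotheses, anti-self-reversibility $S^{-1}=-S$ and anti-symmetry $S^T=-S$, are equivalent for scattered matrices, and it suffices to treat the anti-symmetric case. Anti-symmetry forces $S_{ii}=0$ (no fixed points) and makes the nonzero pattern symmetric; together with the scattered property (exactly one nonzero entry in each row and column) this means the underlying permutation $\pi$ is a fixed-point-free involution, i.e. a product of $\frac{N}{2}$ disjoint transpositions $\{i,\pi(i)\}$. On each such pair the block is $\begin{pmatrix} 0 & \pm1 \\ \mp1 & 0\end{pmatrix}$ by anti-symmetry, which is exactly a couple block; hence $S$ is a direct sum of couple blocks over a partition of $[N]$ into pairs, that is, a matching operator.

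The only steps needing genuine care are (iv) and the sign in (ii). In (iv) one must check that the scattered pattern together with anti-symmetry really forces a fixed-point-free involution and leaves no other admissible sign configuration, and in (ii) one must avoid the transpose-of-a-product pitfall. Both difficulties dissolve once the direct-sum/block viewpoint is adopted at the start: with that observation in hand, the proposition reduces to an elementary $2\times2$ block computation together with the fact that scattered matrices are orthogonal.
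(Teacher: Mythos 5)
Your proof is correct, and it takes a genuinely different route from the paper's. The paper keeps $M$ in the factored form $\prod E_{\left\langle i,j\right\rangle}(-1)\prod E(i,j)$ and proves (i) by commuting the elementary factors past one another via three identities (its Lemma~\ref{lemma4.1}); it then obtains (ii) only indirectly, through the orthogonality of scattered matrices ($M^T=M^{-1}=-M$), and proves (iv) by operationally building a division $D$ whose associated operator reduces the given anti-symmetric scattered matrix to $I$, then invoking (iii). Your direct-sum decomposition into $2\times2$ couple blocks $\pm\begin{pmatrix}0&-1\\1&0\end{pmatrix}$ replaces all of this with one structural observation --- the couples of $D(M)$ partition $[N]$ --- after which everything reduces to $B^2=-I_2$ and $B^T=-B$. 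That buys a proof of (ii) that does not route through orthogonality (and sidesteps the reversed-product sign issue you flag), and a cleaner (iv): anti-symmetry plus the scattered pattern forces the underlying permutation to be a fixed-point-free involution, and each of its $2$-cycles carries exactly a couple block. The one step you should state explicitly rather than leave implicit is that the relabelling is a conjugation $M\mapsto PMP^T$ by a permutation matrix, and that this conjugation commutes with squaring and transposition, permutes the diagonal, and sends matching operators to matching operators; with that sentence added, the argument is complete and, if anything, tighter than the paper's.
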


To prove the above proposition, we first need the following lemmas.

\begin{lemma}\label{lemma4.1}
The following equations hold:
\begin{romanlist}
\item $E(i,j)^2=E_{\left \langle i,j\right\rangle}(-1)^2=I$,
\item $E(i,j)E_{\left \langle i,j\right\rangle}(-1)=E_{\overline{\left \langle i,j\right\rangle}}(-1)E(i,j)$,
\item $\prod_{\left\langle i,j \right \rangle \in D(M)}E_{\left \langle i,j\right\rangle}(-1)\prod_{\left\langle i,j \right \rangle \in D(M)}E_{\overline{\left \langle i,j\right\rangle}}(-1)=-I$.
\end{romanlist}
where $I$ is the identity matrix and $M$ is a matching operator.
\end{lemma}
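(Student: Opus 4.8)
The plan is to work throughout with the explicit descriptions of the two elementary factors: $E(i,j)$ is the transposition matrix that swaps rows $i$ and $j$, and $E_{\langle i,j\rangle}(-1)$ is the diagonal matrix equal to $I$ except for a single entry $-1$ in the position pointed to by the couple (row $i$ when $\langle i,j\rangle=0$, row $j$ when $\langle i,j\rangle=1$). Part (i) is then immediate: swapping two rows twice restores the original, and squaring a diagonal matrix whose only non-unit entry is $-1$ returns $I$ since $(-1)^2=1$. Hence both $E(i,j)$ and $E_{\langle i,j\rangle}(-1)$ are involutions.

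For part (ii) the key observation I would isolate is that conjugating a single-entry negation matrix by a transposition simply moves the negated index along that transposition. Since $E(i,j)^{-1}=E(i,j)$ by (i), I would compute $E(i,j)\,E_{\langle i,j\rangle}(-1)\,E(i,j)$ and check that it negates the row that $E(i,j)$ carries into the originally negated position, i.e. it negates row $j$ when $\langle i,j\rangle$ pointed to $i$, and row $i$ when it pointed to $j$. By the definition $\overline{\langle i,j\rangle}=\langle i,j\rangle\oplus 1$, this conjugate is exactly $E_{\overline{\langle i,j\rangle}}(-1)$. Multiplying the resulting identity $E(i,j)\,E_{\langle i,j\rangle}(-1)\,E(i,j)=E_{\overline{\langle i,j\rangle}}(-1)$ on the right by $E(i,j)$ and applying (i) once more yields the stated commutation $E(i,j)\,E_{\langle i,j\rangle}(-1)=E_{\overline{\langle i,j\rangle}}(-1)\,E(i,j)$. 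I would accompany this with a two-line componentwise verification on an arbitrary column vector to remove any ambiguity in the pointing convention.

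For part (iii) I would exploit two facts. First, every factor appearing is diagonal, so all the factors commute; this lets me regroup the product couple by couple as $\prod_{\langle i,j\rangle\in D(M)}\big(E_{\langle i,j\rangle}(-1)\,E_{\overline{\langle i,j\rangle}}(-1)\big)$. Second, within a single couple the two factors negate the two \emph{different} members of $\{i,j\}$ (one pointed to by $\langle i,j\rangle$, the other by its bar), so their product is the diagonal matrix carrying $-1$ in positions $i$ and $j$ and $+1$ elsewhere. Because $D(M)$ is a division, namely a partition of the full index set $[N]$ into $N/2$ disjoint couples, the product over all couples negates each of the $N$ diagonal entries exactly once, giving $-I$.

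The genuinely delicate step is part (ii): its entire content lies in tracking the pointing convention through the swap, and it is easy to conflate $\overline{\langle i,j\rangle}$ with $\langle j,i\rangle$, which the paper explicitly warns against in Remark~\ref{rem4.1}(ii). I would therefore state the ``conjugation moves the negated index'' principle fully explicitly and confirm it by direct action on coordinates, after which parts (i) and (iii) reduce to routine bookkeeping built on involutivity and the partition property of $D(M)$.
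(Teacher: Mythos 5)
Your proposal is correct and follows essentially the same route as the paper: part (i) from the involutive definitions, part (ii) by tracking how the swap carries the negated index across (the paper does this by direct case analysis on the value of $\left\langle i,j\right\rangle$, you by conjugating the diagonal negation matrix by the transposition, which amounts to the same computation), and part (iii) by pairing the factors couple by couple so that each row index is negated exactly once over the partition $D(M)$. Your explicit appeal to commutativity of the diagonal factors when regrouping the product in (iii) is a small point the paper leaves implicit, but nothing of substance differs.
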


\begin{proof}
\begin{romanlist}
\item Obviously according to the definition of the elementary matrix.
\item Assume $\left \langle i,j \right \rangle=0$, then $E(i,j)E_{\left \langle i,j\right\rangle}(-1)$ means multiplying by $-1$ to $i$ and then exchanging $i$ to $j$, and $E_{\overline{\left \langle i,j\right\rangle}}(-1)E(i,j)$ means exchanging $i$ to $j$ and then multiplying by $-1$ to the new $j$, which are the same. Similarly in the case of $\left \langle i,j \right \rangle=1$.
\item For each couple $\left \langle i,j \right \rangle$, $E_{\left \langle i,j\right\rangle}(-1)E_{\overline{\left \langle i,j\right\rangle}}(-1)$ means multiplying by $-1$ for the two rows $i,j$. Consider all couples in $D(M)$, then each row is multiplied by $-1$, i.e., the operation is equivalent to $-I$ $\Box$.
\end{romanlist}
\end{proof}

\begin{lemma}\label{lemma4.2}
The following properties of a scattered matrix $A$ hold:
\begin{romanlist}
\item If $B$ is also a scattered matrix, then $AB$ is still scattered.
\item $A$ is orthogonal.
\end{romanlist}
\end{lemma}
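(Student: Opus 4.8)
The plan is to exploit the observation that a scattered matrix is nothing but a signed permutation matrix. Concretely, I would begin by recording $A$ in the parametrized form $a_{ij}=\epsilon_i\,\delta_{j,\pi(i)}$, where $\pi$ is a permutation of $[N]$ and $\epsilon_i\in\{+1,-1\}$ is the sign of the scattered point in row $i$. The map $i\mapsto\pi(i)$ is well defined because each row carries exactly one nonzero entry, and it is a bijection precisely because each column also carries exactly one; since the domain and codomain are the finite set $[N]$, injectivity and surjectivity coincide. Establishing this representation at the outset turns both assertions into short index computations.

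For (i), I would write the second scattered matrix as $b_{jk}=\eta_j\,\delta_{k,\rho(j)}$ and compute the product entrywise: $(AB)_{ik}=\sum_j a_{ij}b_{jk}=\sum_j \epsilon_i\delta_{j,\pi(i)}\,\eta_j\delta_{k,\rho(j)}=\epsilon_i\,\eta_{\pi(i)}\,\delta_{k,(\rho\circ\pi)(i)}$, the single surviving term coming from $j=\pi(i)$. This exhibits $AB$ in exactly the scattered form, with underlying permutation $\rho\circ\pi$ and sign $\epsilon_i\eta_{\pi(i)}=\pm1$ in row $i$. The row structure is then immediate, and because $\rho\circ\pi$ is a composite of bijections it is itself a bijection, so each column of $AB$ likewise receives exactly one scattered point; hence $AB$ is scattered.

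For (ii), the cleanest route is to verify orthonormality of the columns directly. Each column of $A$ holds a single $\pm1$ entry, so it is a signed standard basis vector of unit norm; two distinct columns place their nonzero entries in distinct rows (every row contains only one scattered point), so their inner product vanishes. Equivalently, one computes $(A^{T}A)_{jk}=\sum_i a_{ij}a_{ik}=\sum_i \epsilon_i^{2}\,\delta_{j,\pi(i)}\delta_{k,\pi(i)}=\delta_{jk}$, using $\epsilon_i^{2}=1$ and the bijectivity of $\pi$ to collapse the sum to a single term at $i=\pi^{-1}(j)$. Either way $A^{T}A=I$, so $A$ is orthogonal.

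I expect no deep obstacle here; the statement is a standard fact about signed permutation matrices. The only point genuinely requiring care is confirming that the scattered structure is preserved for \emph{both} the rows and the columns of $AB$: the row claim falls out of the single surviving index $j=\pi(i)$, whereas the column claim rests on $\rho\circ\pi$ being a bijection rather than merely a function. Making the permutation-and-sign parametrization explicit from the start is what reduces both parts to mechanical verification.
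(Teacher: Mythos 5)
Your proof is correct and follows essentially the same route as the paper: part (i) tracks how the unique scattered point in row $i$ of $A$ composes with the one in the corresponding row of $B$ to leave exactly one scattered point per row and column of $AB$, and part (ii) verifies that the columns are orthonormal signed basis vectors. Your explicit signed-permutation parametrization $a_{ij}=\epsilon_i\,\delta_{j,\pi(i)}$ is merely a tidier notation for the same index-chasing argument the paper gives.
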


\begin{proof}
\begin{romanlist}
\item $\forall i$, there is a scattered point $(i,j)$ in $A$ and $(j,k)$ in $B$, thus $(i,k)$ in $AB$. Similarly, $\forall k$, there is a scattered point $(i,k)$ in $B$. $\forall i$, there is not any other scattered point $(i,j')$ in $A$, not any $(j,k')$ in $B$, thus not any $(i,k')$ in $AB$. Similarly, there is not any other scattered point $(i',k)$ in $AB$. Therefore, each row or column in $AB$ has only one scattered point, so $AB$ is a scattered matrix.
\item The inner product between every two column vectors of the scattered matrix $A$ is $0$, and each column vector is a unit vector, so $A$ is orthogonal, i.e., $A^{-1}=A^T$ $\Box$.
\end{romanlist}
\end{proof}

By Lemma~\ref{lemma4.2} and Eq.~(\ref{eq4}), any matching operator $M$ is a scattered matrix as well. Therefore, we can prove Proposition~\ref{prop4.1}.

\begin{proof}[Proof of Proposition~{\upshape\ref{prop4.1}}]
\begin{romanlist}
\item Let matching $M=\prod_{\left\langle i,j \right \rangle \in D(M)}E_{\left \langle i,j\right\rangle}(-1)\prod_{\left\langle i,j \right \rangle \in D(M)}E(i,j)$. By Lemma~\ref{lemma4.1}, then 
\noindent
\begin{equation}
\begin{aligned}
 &MM=\left [ \prod_{\left\langle i,j \right \rangle \in D(M)}E_{\left \langle i,j\right\rangle}(-1)\prod_{\left\langle i,j \right \rangle \in D(M)}E(i,j)\right ]\left [ \prod_{\left\langle i,j \right \rangle \in D(M)}E_{\left \langle i,j\right\rangle}(-1)\prod_{\left\langle i,j \right \rangle \in D(M)}E(i,j)\right ] \\
&=\left [ \prod_{\left\langle i,j \right \rangle \in D(M)}E_{\left \langle i,j\right\rangle}(-1)\prod_{\left\langle i,j \right \rangle \in D(M)}E_{\overline{\left \langle i,j\right\rangle}}(-1)\right ]\left [ \prod_{\left\langle i,j \right \rangle \in D(M)}E(i,j)\prod_{\left\langle i,j \right \rangle \in D(M)}E(i,j)\right ]\\
&=-I
\end{aligned},
\end{equation}
 thus $M^{-1}=-M$.

\item By Lemma~\ref{lemma4.2} and (i), $M^T=M^{-1}=-M$. If there was any non-zero element on the main diagonal of $M$, then $M^T\ne -M$, which causes a contradiction.

\item Assume matching operator $\overline{M}$ where $D(\overline{M})=\left \{  \overline {\left \langle i,j \right \rangle} |  \left \langle i,j \right \rangle \in D(M) \right \}$, then by Lemma~\ref{lemma4.1} and Eq.~(\ref{eq4}) we have 
\noindent
\begin{equation}
\begin{aligned}
 &-M=-\prod_{\left\langle i,j \right \rangle \in D(M)}E_{\left \langle i,j\right\rangle}(-1)\prod_{\left\langle i,j \right \rangle \in D(M)}E(i,j) \\
&=\left [ \prod_{\left\langle i,j \right \rangle \in D(M)}E_{\overline{\left \langle i,j\right\rangle}}(-1)\prod_{\left\langle i,j \right \rangle \in D(M)}E_{\left \langle i,j\right\rangle}(-1)\right ] \left [\prod_{\left\langle i,j \right \rangle \in D(M)}E_{\left \langle i,j\right\rangle}(-1)\prod_{\left\langle i,j \right \rangle \in D(M)}E(i,j)\right ] \\
&= \prod_{\left\langle i,j \right \rangle \in D(M)}E_{\overline{\left \langle i,j\right\rangle}}(-1)\prod_{\left\langle i,j \right \rangle \in D(M)}E(i,j) \\
&=\prod_{\overline{\left\langle i,j \right \rangle} \in D(\overline{M})}E_{\overline{\left \langle i,j\right\rangle}}(-1)\prod_{\overline{\left\langle i,j \right \rangle} \in D(\overline{M})}E(i,j) \\
&=\overline{M}
\end{aligned},
\end{equation}
thus $-M$ is also a matching operator.

\item For any scattered matrix $A$, by Lemma~\ref{lemma4.2} we have $A^{-1}=A^T$, thus its’ anti-self-reversibility and anti-symmetry are equivalent: $A^{-1}=A^T=-A$. Let $A$ be anti-symmetric, for every two anti-symmetrical elements in $A$, we exchange them, and multiply the row where $-1$ is by $-1$, then we get the identity matrix $I$. The above operations define $\frac{N}{2}$ couples and a set $D$ of these couples, then we have $\prod_{\left\langle i,j \right \rangle \in D}E_{\left \langle i,j\right\rangle}(-1)$ $\prod_{\left\langle i,j \right \rangle \in D}E(i,j)A=I$, thus $A^{-1}=\prod_{\left\langle i,j \right \rangle \in D}E_{\left \langle i,j\right\rangle}(-1)\prod_{\left\langle i,j \right \rangle \in D}E(i,j)$, which is a matching operator of which the division is $D$. By (iii), $A$ is also a matching operator $\Box$.
\end{romanlist}
\end{proof}

\subsection{Cooperation and matching equation}\label{subsec4.2}
\noindent
We give the following definitions.

\begin{definition}[\textbf{Cooperation}] 
Two matching operators $A,B$ are called \textbf{cooperative} if there is a new matching operator $C$ so that $B=CA$. By Proposition~\ref{prop4.1} we have $A=C^{-1}B$, thus this definition is symmetric for $A,B$.
\end{definition}

\begin{definition}[\textbf{Cooperative set}]\label{def4.6}
A set $\left \{ M_1, \cdots, M_{m} \right \}$ of $N$-order matching operators is called a \textbf{cooperative set} if any two of these operators are cooperative. A cooperative set $G=\left \{ I,M_1, \cdots, M_{N-1} \right \}$ is called \textbf{complete} (for convenience, $I$ is also regarded as a ``matching operator" but ignored).
\end{definition}

\begin{definition}[\textbf{Path length}]
For a row index $i$, we count the number of times $i$ was multiplied by $-1$ through its \textbf{path} in an operation $g$ (i.e., the mapping process of $i$), and call the value
of the number modulo $2$ the \textbf{length} of the path, denoted as $len_g(i)$. A length is actually a Boolean value indicating whether $i$ is multiplied by $-1$. e.g., let $g=BA$, then the path $i$ passes through in $g$ is $i \overset{\left\langle i,j \right \rangle _A \left \langle j,k \right \rangle _B}{\mapsto} (-1)^{len_g(i)}k$. According to the definition of a couple, $len_g(i)$ equals the XOR of all couples values on the path, i.e., $len_g(i)=\left \langle i,j \right \rangle _A \oplus \left \langle j,k \right \rangle _B$.
\end{definition}

\begin{remark}\label{rem4.2}
\begin{romanlist}
\item Given two matching operators $A,B$, then from vector $a$, two vectors $Aa, Ba$ orthogonal with $a$ can be obtained. If $Aa \perp Ba$ as well, then these vectors are pairwise orthogonal. By Proposition~\ref{prop3.1}, this means that vector $Ba$ is generated by performing a matching operator $C$ to vector $Aa$, i.e., $B=CA$. That’s why to define cooperation.

\item Given $N$ $N$-dimensional column vectors $\textbf{\textit{s}}^{(0)},\textbf{\textit{s}}^{(1)},\cdots,\textbf{\textit{s}}^{(N-1)}$ which constitute a special orthogonal matrix $S_N$. For the first vector $\textbf{\textit{s}}^{(0)}$, any other vector must be generated by performing a matching operator to it: $\textbf{\textit{s}}^{(i)}=M_i\textbf{\textit{s}}^{(0)}$, where $M_i$ is the matching operator of column $i$ to column $0$, and $M_0=I$. Therefore, the set $G_{S_N}=\left \{ I,M_1, \cdots, M_{N-1} \right \}$ must be a complete cooperative set. Regardless of $\textbf{\textit{s}}^{(0)}$, the matrix $S$ must be a special orthogonal matrix if its set of matching operators is cooperative.
\end{romanlist}
\end{remark}

We give the following condition of cooperation.

\begin{proposition}\label{prop4.2}
The necessary and sufficient condition for matching operators $A,B$ is as follows:
\begin{itemize}
\item $D(A)$ can be divided into several \textbf{$4$-tuples} (e.g., $\left \langle i,j \right \rangle\& \left \langle k,l\right \rangle \to \left \{ i,j,k,l \right \}$, shown in Definition~\ref{def5.4});

\item Operation $B$ corresponds to a bijection between $2$-tuples in any $4$-tuples (e.g., $\left \{ i,j \right \}\leftrightarrow \left \{ k,l \right \}$);

\item Let $g=BABA$, then for any row index $i$, $len_g(i)=1$.
\end{itemize}
\end{proposition}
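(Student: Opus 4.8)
The plan is to first translate cooperation into a single clean algebraic identity and then read the three combinatorial conditions off that identity. By definition $A,B$ are cooperative iff there is a matching operator $C$ with $B=CA$; using $A^{-1}=-A$ from Proposition~\ref{prop4.1}(i) this forces $C=BA^{-1}=-BA$. Since $A$ and $B$ are scattered matrices, Lemma~\ref{lemma4.2} makes $-BA$ scattered, so by Proposition~\ref{prop4.1}(iv) it is a matching operator exactly when it is anti-self-reversible, i.e. $(-BA)^{-1}=-(-BA)$. Expanding the left side with $A^{-1}=-A$ and $B^{-1}=-B$ gives $-AB$, so the condition collapses to $AB=-BA$. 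Conversely, if $AB=-BA$ then $C:=-BA$ is scattered and anti-self-reversible, hence matching, and $CA=-BA^2=B$ because $A^2=-I$. Thus cooperation is equivalent to the anti-commutation relation $AB=-BA$, and it remains to show that $AB=-BA$ is equivalent to the three listed conditions.

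Next I would separate the permutation structure from the signs. Dropping the $\pm1$ factors, $A$ and $B$ become fixed-point-free involutions $\pi_A,\pi_B$ (products of $\frac{N}{2}$ disjoint transpositions, one per couple), and $AB=-BA$ immediately forces $\pi_A\pi_B=\pi_B\pi_A$, so the involution $\pi_B$ permutes the couples of $\pi_A$ among themselves. I would then rule out any couple shared by $A$ and $B$: on the two rows of a shared couple both $A$ and $B$ restrict to $\pm J$ with $J=\bigl(\begin{smallmatrix}0&-1\\1&0\end{smallmatrix}\bigr)$, whence $AB=BA$ on that block, contradicting $AB=-BA$. Therefore $\pi_B$ acts on the couples of $A$ as a fixed-point-free involution, pairing them two at a time; each such pair $\langle i,j\rangle,\langle k,l\rangle$ yields a $4$-tuple $\{i,j,k,l\}$ on which $B$ matches $\{i,j\}$ bijectively onto $\{k,l\}$. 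This is exactly the first two conditions.

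Finally I would pin down the signs block by block. On a single $4$-tuple, writing $A$ and $B$ as signed permutations with couple values $a_1,a_2$ (for $A$) and $b_1,b_2$ (for $B$), a direct comparison of $AB$ and $BA$ on the four basis vectors shows that $AB=-BA$ on that block holds precisely when $a_1\oplus a_2\oplus b_1\oplus b_2=1$. To recognise this as the third condition, take (without loss of generality) the $B$-couples in the tuple to be $\langle i,k\rangle,\langle j,l\rangle$ and trace the path of $g=BABA=(BA)^2$ starting at $i$: since $\pi_{BA}$ restricts to the involution $(il)(jk)$, the permutation $\pi_g=\pi_{BA}^2$ is the identity, and the content of $i$ travels the closed loop $i\to j\to l\to k\to i$. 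Two of these four edges run against their couple orientation, so they each contribute an extra $\oplus1$ through $\langle j,i\rangle=\langle i,j\rangle\oplus1$; these two corrections cancel and leave $len_g(i)=a_1\oplus a_2\oplus b_1\oplus b_2$, independent of the starting index of the tuple. Hence $len_g(i)=1$ for every $i$ is equivalent to the per-block sign identity, and assembling the blocks recovers $AB=-BA$ globally; the converse direction simply reruns these equivalences backwards.

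The main obstacle I anticipate is the sign bookkeeping in this last step: the whole equivalence hinges on the fact that exactly two of the four edges of the $BABA$-loop are traversed backwards, so their orientation corrections cancel and $len_g$ reduces to the clean XOR of the four couple values. I would take care to confirm that $g=BABA$ really restricts to a single closed $4$-cycle on each tuple, so that one path witnesses $len_g$ for every index of the tuple, and to verify that the ``no shared couple'' step is genuinely necessary, since it is precisely what upgrades the commuting-involution picture into the pure $4$-tuple decomposition demanded by the first two conditions.
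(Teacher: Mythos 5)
Your proposal is correct and follows essentially the same route as the paper: both reduce cooperation to the anti-commutation $AB=-BA$ (equivalently $(BA)^2=-I$) via the anti-self-reversibility and scatteredness properties of Proposition~\ref{prop4.1} and Lemmas~\ref{lemma4.2}--\ref{lemma4.4}, both extract the $4$-tuple/bijection structure from the fact that $B$ cannot share a couple with $A$, and both identify the sign condition with the XOR of the four couple values along the closed $BABA$-loop, where the two reversed-orientation corrections cancel. The only difference is presentational: you verify the sign condition by an explicit $4\times4$ block computation and get the $4$-tuples from commuting underlying involutions, whereas the paper traces paths directly and forces the $4$-tuple closure from path-connectivity ($u=l$); the content is identical.
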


The above proposition is symmetric to $A,B$, and the equation $len_g(i)=1$ is called a \textbf{matching equation}. To prove Proposition~\ref{prop4.2}, the following lemmas are required.

\begin{lemma}\label{lemma4.3}
$AB=-BA$ is a necessary and sufficient condition for the product $AB$ of anti-symmetric matrices $A,B$ to be anti-symmetric.
\end{lemma}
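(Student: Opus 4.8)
The plan is to reduce the statement to a single application of the transpose-of-a-product identity together with the anti-symmetry hypotheses. Recall that $A$ and $B$ are both anti-symmetric, so $A^T=-A$ and $B^T=-B$, and the goal is to characterize exactly when $(AB)^T=-AB$, i.e.\ when $AB$ is anti-symmetric.

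First I would compute $(AB)^T$ directly. Using the rule $(AB)^T=B^TA^T$ and substituting the two anti-symmetry relations gives $(AB)^T=(-B)(-A)=BA$. This single identity, $(AB)^T=BA$, is the crux of the whole lemma and already contains all of its content.

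With this in hand, both directions follow immediately and symmetrically. For sufficiency, if $AB=-BA$, then $(AB)^T=BA=-AB$, so $AB$ is anti-symmetric. For necessity, if $AB$ is anti-symmetric, then by definition $(AB)^T=-AB$; comparing this with $(AB)^T=BA$ yields $BA=-AB$, that is, $AB=-BA$. Hence the condition is both necessary and sufficient.

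There is essentially no real obstacle: the argument is a two-line transpose computation, and the only point demanding care is tracking the two minus signs correctly when substituting $A^T=-A$ and $B^T=-B$ into $(AB)^T=B^TA^T$, so that the signs cancel to give $BA$ rather than $-BA$. I also note that the lemma uses nothing about matching operators beyond anti-symmetry, so it holds for arbitrary anti-symmetric $A,B$; in the paper it will presumably be applied to matching operators by invoking the anti-symmetry established in Proposition~\ref{prop4.1}(ii).
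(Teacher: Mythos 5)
Your proposal is correct and follows essentially the same route as the paper's proof: both compute $(AB)^T=B^TA^T=(-B)(-A)=BA$ using the anti-symmetry of $A$ and $B$, and then read off both directions from the identity $(AB)^T=BA$. Your observation that the lemma holds for arbitrary anti-symmetric matrices, with anti-symmetry of matching operators supplied by Proposition~\ref{prop4.1}(ii), matches how the paper uses it.
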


\begin{proof}
If $AB=-BA$, then $(AB)^T=B^TA^T=BA=-AB$. Conversely, if $(AB)^T=-AB$, then $AB=-(AB)^T=-B^TA^T=-BA$ $\Box$.
\end{proof}

\begin{lemma}\label{lemma4.4}
The following properties of cooperation hold:
\begin{romanlist}
\item The necessary and sufficient condition for matching operators $A,B$ to cooperate is that the product $AB$ or $BA$ is a matching operator.
\item There are not any two similar couples (i.e., they have the same row indexes, shown in Definition~\ref{def5.6}) between the divisions of matching operators $A,B$ which are cooperative.
\item There are nt any two similar couples in the divisions of any two matching operators in a cooperative set.
\end{romanlist}
\end{lemma}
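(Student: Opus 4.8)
The plan is to establish the three parts in sequence, with each resting on its predecessor and on the structural facts about matching operators collected in Proposition~\ref{prop4.1}. Part (i) is purely algebraic: the key tools are anti-self-reversibility, $A^{-1}=-A$, and the closure fact that $-C$ is again a matching operator whenever $C$ is (Proposition~\ref{prop4.1}(iii)). For the forward implication, if $A,B$ cooperate then by definition there is a matching operator $C$ with $B=CA$; right-multiplying by $A^{-1}=-A$ yields $C=BA^{-1}=-BA$, so $BA=-C$ is a matching operator, and by the symmetry of cooperation (swapping the roles of $A$ and $B$) the same argument shows $AB$ is a matching operator. For the converse, if say $BA$ is a matching operator $C$, then $B=CA^{-1}=(-C)A$ with $-C$ a matching operator, so $A,B$ cooperate by definition; the case of $AB$ is symmetric. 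This gives the stated equivalence.

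For part (ii) I would argue by contradiction, using the zero-diagonal property of matching operators (Proposition~\ref{prop4.1}(ii)). Suppose $A,B$ are cooperative but their divisions $D(A),D(B)$ contain two similar couples, i.e. couples acting on a common index pair $\{i,j\}$. Since, ignoring signs, a couple on $\{i,j\}$ realizes the transposition $i\leftrightarrow j$ (Remark~\ref{rem4.1}(iii)), both $A$ and $B$ send $i\mapsto j$ and $j\mapsto i$ as permutations. Hence in the composite $BA$ the index $i$ is first carried to $j$ by $A$ and then back to $i$ by $B$, so that the $(i,i)$ entry of $BA$ equals $\pm1\ne0$. But part (i) guarantees that the cooperation of $A,B$ makes $BA$ a matching operator, whose main diagonal vanishes entirely --- a contradiction. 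Therefore cooperative operators share no similar couples.

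Part (iii) then follows at once: by the definition of a cooperative set any two of its matching operators are cooperative, so part (ii) applies to every such pair and excludes similar couples between their divisions. The step I expect to be the crux is the bridge in part (ii) from the combinatorial hypothesis (a shared index pair in the two divisions) to the algebraic conclusion (a nonzero diagonal entry of the product); once one notices that acting on the same pair forces the composite to fix that index, the clash with the zero-diagonal property of matching operators closes the argument. The remaining parts are short consequences of Proposition~\ref{prop4.1} and of part (ii), respectively.
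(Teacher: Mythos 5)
Your proposal is correct and follows essentially the same route as the paper: part (i) by right-multiplying with $A^{-1}=-A$ and invoking Proposition~\ref{prop4.1}(iii), part (ii) by deriving a nonzero diagonal entry of the product from a shared couple and contradicting the zero-diagonal property, and part (iii) as an immediate consequence. The only cosmetic difference is that you treat $AB$ and $BA$ via the symmetry of cooperation where the paper instead notes $AB=-BA$; both are equivalent one-line reductions.
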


\begin{proof}
\begin{romanlist}
\item By Proposition~\ref{prop4.1}, if $BA$ is a matching operator, so is $AB$ because $AB=-BA$, and vice versa. Therefore, we only need to prove in the case of $BA$ to be a matching operator, and then we have $B=-(BA)A$, which means that there is a matching operator $C=-BA$ so that $B=CA$, i.e., $A,B$ cooperate; Conversely, let $A,B$ cooperate, then there is a matching operator $C$ so that $B=CA$. Right multiply both sides of $B=CA$ by $A$ at the same time, then $BA=-C$, i.e., $BA$ is a matching operator.
\item By (i), $AB$ is a matching operator, in which any element $(i,i)=0$ by Proposition~\ref{prop4.1}. If there were any two similar couples between $D(A),D(B)$, e.g., $\left\langle i,j \right \rangle$, then there is a scattered point $(i,j)$ in $A$, $(j,i)$ in $B$, thus $(i,i)$ in $AB$, which causes a contradiction.
\item Obviously by (ii) and the definition of a cooperative set (see Definition~\ref{def4.6}) $\Box$.
\end{romanlist}
\end{proof}

Now we can prove Proposition~\ref{prop4.2}.

\begin{proof}
[Proof of Proposition~\upshape{\ref{prop4.2}}]
\begin{itemize}
\item (\textit{Sufficiency}) Without loss of generality, let $\left \langle i,j \right \rangle _A,\left \langle k,l \right \rangle _A \in D(A)$, $\left \langle j,k \right \rangle _B,\left \langle i,l \right \rangle _B \in D(B)$, and $\left \{ i,j,k,l \right \}$ be a $4$-tuple. Then for $i$, we can write down its path in $g=BABA$ as
\noindent
\begin{equation}\label{eq7}
i\overset{\left \langle i,j \right \rangle _A \left \langle j,k \right \rangle _B \left \langle k,l \right \rangle _A \left \langle l,i \right \rangle _B}{\mapsto} (-1)^{len_g(i)}i,
\end{equation}
thus 
\noindent
\begin{equation}\label{eq8}
len_g(i)=\left \langle i,j \right \rangle _A\oplus \left \langle j,k \right \rangle _B \oplus \left \langle k,l \right \rangle _A \oplus \left \langle l,i \right \rangle _B.
\end{equation}
And for $j$, its path is
\noindent
\begin{equation}\label{eq9}
j\overset{\left \langle j,i \right \rangle _A \left \langle i,l \right \rangle _B \left \langle l,k \right \rangle _A \left \langle k,j \right \rangle _B}{\mapsto} (-1)^{len_g(j)}j.
\end{equation}
Note that 
\noindent
\begin{equation}\label{eq10}
\begin{aligned}
&len_g(j)=\left \langle  j,i \right \rangle _A\oplus \left \langle i,l \right \rangle _B \oplus \left \langle l,k \right \rangle _A \oplus \left \langle k,j \right \rangle _B\\
&=\left(\left \langle i,j \right \rangle _A \oplus 1\right ) \oplus\left( \left \langle j,k \right \rangle _B\oplus 1\right)\oplus\left(  \left \langle k,l \right \rangle _A\oplus 1\right)\oplus\left(  \left \langle l,i \right \rangle _B\oplus 1\right) \\
&=\left \langle i,j \right \rangle _A\oplus \left \langle j,k \right \rangle _B \oplus \left \langle k,l \right \rangle _A \oplus \left \langle l,i \right \rangle _B\\
&=len_g(i)
\end{aligned},
\end{equation}
so we get $len_g(j)=len_g(i)=1$. Similarly, we have $len_g(k)=len_g(l)=1$, i.e., $\left \{i,j,k,l \right \}\overset{g}{\mapsto}\left \{ -i,-j,-k,-l\right \}$. The above conclusions are valid for all $4$-tuples, which means for any row $i$, $i \overset{g}{\mapsto} -i $, i.e., $BABA=-I$. By Lemma~\ref{lemma4.2}, $BA$ is a scattered matrix, thus is a matching operator by Proposition~\ref{prop4.1}, then $A,B$ cooperate by Lemma~\ref{lemma4.4}.

\item (\textit{Necessity}) Without loss of generality, for any couple $\left \langle j,k \right \rangle _B \in D(B)$, we know $\left \langle j,k \right \rangle \notin 	D(A)$ by Lemma~\ref{lemma4.4}. Suppose that $\left \langle i,j \right \rangle _A, \left \langle k,l \right \rangle _A\in D(A)$ correspondingly, and then $\left \langle u,i\right \rangle _B\in D(B)$, where $u$ is an undetermined index. We know that $BA$ is a matching operator by Lemma~\ref{lemma4.4}, so we have $g=BABA=-I$ by Proposition~\ref{prop4.1}. Therefore, for any row index $i$, we have $i\overset{g}{\mapsto} -i$, so $len_g(i)=1$. The specific path in $g$ is 
\noindent
\begin{equation}\label{eq11}
i\overset{\left \langle i,j \right \rangle _A \left \langle j,k \right \rangle _B \left \langle k,l \right \rangle _A \left \langle u,i \right \rangle _B}{\mapsto} (-1)^{len_g(i)}i,
\end{equation}
note that $u=l$ must hold, otherwise the path is not connected. Thus, we get a $4$-tuple $\left \{ i,j,k,l \right \}$, and $B$ is corresponding to a bijection $\left \{ i,j\right \} \leftrightarrow \left \{ k,l \right \}$. The above discussion applies to any couple in $D(B)$, so we can deduce the condition $\Box$.
\end{itemize}
\end{proof}

\subsection{Algorithm to generate a special orthogonal matrix}\label{subsec4.3}
\noindent
Based on Proposition~\ref{prop4.2}, we have the following proposition which allows us to reduce Problem~\ref{pro2} to the problem of a solving linear equation system.

\begin{proposition}\label{prop4.3}
Given an $N$-order semi-orthogonal matrix, the necessary and sufficient condition for it to have any solution is that the matching equation system combined from all matching equations about it has a solution.
\end{proposition}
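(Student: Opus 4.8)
The plan is to reduce the statement to two facts already available: Remark~\ref{rem4.2}(ii), which says that a signed version of the given matrix is a special orthogonal matrix exactly when its family of matching operators forms a complete cooperative set, and Proposition~\ref{prop4.2}, which characterizes when two matching operators cooperate. A complete cooperative set requires every pair of its operators to cooperate, and Proposition~\ref{prop4.2} splits each such cooperation into two purely structural conditions (the $4$-tuple decomposition of $D(A)$ and the $2$-tuple bijection induced by $B$) together with the matching equations $len_g(i)=1$. My strategy is to show that for a semi-orthogonal matrix the two structural conditions are forced by the hypothesis and hold regardless of signs, so that the only genuine constraints are the matching equations, which are XOR-linear in the sign unknowns.

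For necessity, suppose $\left|S_N\right|$ admits a solution $S_N$. Taking its first column as the base, write each remaining column as $\textbf{\textit{s}}^{(i)}=M_i\textbf{\textit{s}}^{(0)}$; by Remark~\ref{rem4.2}(ii) the set $\{I,M_1,\dots,M_{N-1}\}$ is a complete cooperative set, so every pair $M_i,M_j$ cooperates. Proposition~\ref{prop4.2} then forces each associated matching equation $len_g(i')=1$ to hold, and the couple values read off from the signs of $S_N$ constitute a solution of the combined matching equation system.

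For sufficiency, I would first isolate the permutation data of $\left|S_N\right|$, which the hypothesis fixes completely. Since any two columns differ only by exchanges within disjoint $2$-tuples, each $M_i$ is a fixed-point-free involution on $[N]$; moreover, because the rearrangement between columns $i$ and $j$ must itself be fixed-point-free, the underlying involutions share no common transposition and are forced to commute. Two commuting fixed-point-free involutions with no shared transposition have all orbits of size $4$, and these orbits are precisely the $4$-tuples of Proposition~\ref{prop4.2}, with $M_j$ inducing the required bijection between the two $2$-tuples inside each. Hence the two structural conditions hold automatically for every pair, independently of any sign choice. Now, given a solution of the matching equation system, assign the corresponding couple values (equivalently, the signs) to $\left|S_N\right|$; each pair $M_i,M_j$ then satisfies all three hypotheses of Proposition~\ref{prop4.2} and so cooperates. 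Thus $\{I,M_1,\dots,M_{N-1}\}$ is a complete cooperative set, and by Remark~\ref{rem4.2}(ii) the signed matrix is special orthogonal, i.e.\ a solution of $\left|S_N\right|$.

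The main obstacle I anticipate is justifying cleanly that the structural conditions of Proposition~\ref{prop4.2} depend only on the permutation part of each matching operator and not on the signs being solved for; this requires separating every $M_i$ into its permutation and its sign component and verifying that the $4$-tuple and bijection requirements are statements about the permutation component alone, which the commuting-involution argument above supplies. A secondary point needing care is the bookkeeping: the equation coming from a pair $(M_i,M_j)$ draws its couple values from both $D(M_i)$ and $D(M_j)$, so the same Boolean unknown is shared across many equations. One must confirm that pooling all couple values into a single set of unknowns yields one XOR-linear system whose solvability is genuinely equivalent to the existence of a globally cooperative sign assignment, rather than merely to pairwise consistency.
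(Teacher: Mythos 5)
Your proposal is correct and follows essentially the same route as the paper: reduce to Remark~\ref{rem4.2}(ii) and Proposition~\ref{prop4.2}, observe that the structural (4-tuple and bijection) conditions are fixed by the permutation data of the semi-orthogonal matrix alone, and conclude that solvability of the pooled XOR system is exactly the existence of a sign assignment making every pair cooperate. Your commuting fixed-point-free-involution argument actually supplies the justification for the ``bijective condition'' that the paper's proof only asserts, and the pooling worry you raise resolves itself since a complete cooperative set is by definition nothing more than pairwise cooperation.
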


\begin{proof} It can be seen from the proof process of Proposition~\ref{prop4.2} that for any $4$-tuple, the matching equations of the $4$ row indexes of it are equivalent to each other, so one $4$-tuple only corresponds to one independent matching equation. Therefore, there are $\frac{N}{4}$ independent matching equations between any two $N$-order matching operators satisfying the bijective condition in Proposition~\ref{prop4.2} (i.e., (a) and (b)), and the necessary and sufficient condition for these two matching operators to cooperate is that the equation system combined from these $\frac{N}{4}$ matching equations has a solution. For a semi-orthogonal matrix, it must satisfy the bijective condition similar to the proof of Proposition~\ref{prop4.2}, so this proposition can be proved $\Box$.
\end{proof}

\begin{remark}\label{rem4.3}
An $N$-order semi-orthogonal matrix corresponds to $N-1$ matching operators, and there are $\frac{N}{4}$ independent matching equations between every two matching operators, thus the above equation system has $\frac{N}{4}\binom{N-1}{2}=\frac{(N-1)(N-2)N}{8} $ equations, and there are $\frac{N(N-1)}{2}$ unknowns which is the total number of all couples. Therefore, we can get an $R\times C$ augmented matrix $\overline{A}$, where $R=\frac{(N-1)(N-2)N}{8}$ and $C=\frac{N(N-1)}{2}+1$. 
\end{remark}

We can use XOR Gaussian elimination to determine the existence of solutions of an $N$-order semi-orthogonal matrix, to solve Problem~\ref{pro2}. We have presented Algorithm~\ref{algo1} (see Appendix~\ref{secA} for details). The algorithm only solves Problem~\ref{pro2}, but because it is very simple to generate a semi-orthogonal matrix (as described by Wei et al.\cite{2018WeiOpt}), it can be regarded as an almost complete algorithm to generate a special orthogonal matrix. In addition, the algorithm can also be used to determine the existence of solutions of a semi-orthogonal matrix, which can be used in the following section.

\begin{algorithm}
\caption{Find a solution of an $N$-order semi-orthogonal matrix. }\label{algo1}
\begin{algorithmic}[1]
\REQUIRE An $N$-order matrix $S$ where $N=2^n$ and $n >0$.
\ENSURE  $S$ is semi-orthogonal and constructed by $U=[N]$.
\STATE Compute all divisions of matching operators of $S$ and store them in table $T$.
\STATE Set an $R\times C$ zero Boolean matrix $\overline{A}$ as an augmented matrix.
\STATE Input matching equations into $\overline{A}$ for every two matching operators stored in $T$.
\STATE Use XOR Gaussian elimination on $\overline{A}$.
\IF{$rank(\overline{A})=rank(A)$}
    \STATE Generate a special solution $X$ of the matching equation.
    \STATE Generate a special orthogonal matrix corresponding to $S$ by $X$.
    \STATE Return $TRUE$.
\ELSE
    \STATE Return $FALSE$.
\ENDIF
\end{algorithmic}
\end{algorithm}

We analyze the complexity of Algorithm~\ref{algo1} in Appendix~\ref{secA}-(3). The exact time and space complexity of Algorithm~\ref{algo1} are $O(N^9)$ and $O(N^5)$ respectively. Obviously, the algorithm is polynomial for $N$.

\section{Infeasibility proof of constructing a special orthogonal matrix for the DRSP of an arbitrary $n$-qubit state}\label{sec5}
\noindent
In this section, we will prove the infeasibility of constructing a special orthogonal matrix, i.e., the non-solvability of Problem~\ref{pro1}. The main idea is to prove that each semi-orthogonal matrices can be simplified into a unique form (called ordered type), and then use Algorithm~\ref{algo1} to determine that the unique form of order $16$ or larger has no solution. The proof is arranged as follows: in Section~\ref{subsec5.1}, we define the semi-matching operator and semi-cooperation and prove that an $N$-size semi-cooperative set of $N$-order semi-matching operators is a group. In Section~\ref{subsec5.2}, we study $n$-bijections so-called and prove that the generator set of a semi-cooperative group satisfies a property about $n$-bijections. We prove that any mapping table of the generator set of a semi-cooperative group can be ordered in Section~\ref{subsec5.3}, and then in Section~\ref{subsec5.4}, we prove that any semi-orthogonal matrix can be simplified into the ordered type based on it. In the end, we prove that for $n>3$ the ordered type is non-solvable by using Algorithm~\ref{algo1}, so as to prove the infeasibility in Section~\ref{subsec5.5}.

\subsection{Semi-matching operators and semi-cooperative group}\label{subsec5.1}
\noindent
For semi-orthogonal matrices, we only need to study its exchanging operation, so we have the following definitions.

\begin{definition}[\textbf{Semi-matching operator}]
Similar to Eq.~(\ref{eq4}), a \textbf{semi-matching operator} $M$ is as follows:
\noindent
\begin{equation}\label{eq5.1}
M=\prod_{\left\langle i,j \right \rangle \in D(M)}E(i,j),
\end{equation}which is the operator corresponding to an operation of pairwise row-exchanging. Its couples such as $\left\langle i ,j \right\rangle$ and division $D(M)$ can also be defined similarly.
\end{definition}

\begin{definition}[\textbf{Semi-cooperation}]
Two semi-matching operators $A,B$ are called \textbf{semi-cooperative} if there is a new semi-matching operator $C$ so that $B=CA$.
\end{definition}

\begin{definition}[\textbf{Semi-cooperative set}]
A set $\left \{ M_1, \cdots, M_{m} \right \}$ of $N$-order semi-matching operators is called a \textbf{semi-cooperative set} if any two of these operators are semi-cooperative. A semi-cooperative set $G=\left \{I, M_1, \cdots, M_{N-1} \right \}$ is called \textbf{complete}.
\end{definition}

The following proposition is a magical property of a semi-cooperative set.

\begin{proposition}\label{prop5.1}
A complete semi-cooperative set $G=\left \{ I,M_1, \cdots, M_{N-1} \right \}$ of $N$-order operators is an Abel group under matrix multiplication.
\end{proposition}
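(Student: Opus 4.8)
The plan is to realize $G$ concretely as a set of permutations of the row indices $[N]$ and to show that the subgroup it generates inside the symmetric group collapses back onto $G$ itself. First I would record the facts forced by the definitions. Since each semi-matching operator $M_i$ is a product of $\frac{N}{2}$ disjoint transpositions $E(i,j)$, it is a fixed-point-free involution, so $M_i^2 = I$ and $M_i^{-1} = M_i$. Moreover, because $G$ is semi-cooperative, for $i \neq j$ there is a semi-matching operator $C$ with $M_j = C M_i$; right-multiplying by $M_i$ and using $M_i^{-1} = M_i$ gives $C = M_j M_i$, so the product $M_j M_i$ is itself a fixed-point-free involution. From $(M_j M_i)^2 = I$ together with $M_i^2 = M_j^2 = I$ one gets $M_i M_j = M_j M_i$; indeed a product of two involutions is an involution exactly when they commute. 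Hence all elements of $G$ pairwise commute, which already settles the \emph{Abelian} half of the claim.

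Next I would establish that $G$ acts transitively on $[N]$. Consider the evaluation map $f\colon G \to [N]$ with $f(M) = M(0)$, reading each operator as the permutation it induces. If $M_i(0) = M_j(0) = b$ for distinct non-identity $M_i, M_j$, then $M_j(b) = 0$ (as $M_j$ is an involution), so $(M_j M_i)(0) = 0$; but $M_j M_i$ is a fixed-point-free involution by the previous step, a contradiction. Since also $M_i(0) \neq 0$ for $i \neq 0$, the map $f$ is injective, and as completeness gives $|G| = N = |[N]|$ it is a bijection. In particular every index equals $M(0)$ for some $M \in G$, so $G$ acts transitively.

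The crux---and the step I expect to be the main obstacle---is \emph{closure}, i.e. showing $M_i M_j \in G$ rather than merely that it is some semi-matching operator. I would handle this by passing to the subgroup $H = \langle M_1, \dots, M_{N-1} \rangle$ of the (faithful) symmetric group on $[N]$. Because the generators commute and square to $I$, every element of $H$ squares to $I$, so $H$ is an elementary abelian $2$-group; and since $G \subseteq H$ already acts transitively, so does $H$. I then invoke the standard fact that a faithful transitive action of an abelian group is regular: all point-stabilizers are conjugate, hence equal since $H$ is abelian, hence equal to their intersection, which is the (trivial, by faithfulness) kernel; so by orbit--stabilizer $|H| = N \cdot 1 = N$. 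Comparing cardinalities, $G \subseteq H$ with $|G| = N = |H|$ forces $G = H$. As $H$ is a subgroup it is closed under multiplication and inverses and contains $I$, and $G$ is commutative by the first step; therefore $G$ is an Abelian group.

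The one point needing care is the bookkeeping that completeness genuinely supplies $N$ distinct operators inducing $N$ distinct permutations, which is exactly what the injectivity of $f$ certifies; once that is in hand, the regularity argument does all the work and avoids any case analysis on the divisions $D(M_i)$.
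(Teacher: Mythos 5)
Your proof is correct, and for the crucial closure step it takes a genuinely different route from the paper. The paper proves closure by an explicit combinatorial computation: given $M^{(a)},M^{(b)}$ it names the candidate $M^{(c)}$ via $\left\langle b,c\right\rangle_a\in D(M^{(a)})$ and then chases couples through $4$-tuples (using Lemma~\ref{lemma5.2} and the bijection characterization of semi-cooperation) to verify $M^{(b)}M^{(a)}=M^{(c)}$ on every $4$-tuple. You instead observe that the generated group $H=\left\langle M_1,\dots,M_{N-1}\right\rangle$ is an elementary abelian $2$-group of permutations acting transitively on $[N]$ (your injectivity of $f(M)=M(0)$ is exactly the paper's remark that the couples $\left\langle 0,i\right\rangle$ are distributed one per operator), and then invoke the standard fact that a faithful transitive abelian permutation group is regular, so $|H|=N=|G|$ forces $G=H$. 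This is cleaner and avoids all case analysis on divisions; your derivations that each $M_i$ is a fixed-point-free involution and that commutativity follows from $(M_jM_i)^2=I$ are also sound and replace the paper's appeal to Lemma~\ref{lemma5.1}. What the paper's longer computation buys is the explicit identification of \emph{which} element of $G$ the product equals, namely $M^{(b)}M^{(a)}=M^{(c)}$ with $\left\langle b,c\right\rangle\in D(M^{(a)})$, a piece of bookkeeping consistent with the $M^{(i)}$ indexing used later in Section~\ref{sec5}; your argument establishes membership without naming the element, which suffices for the proposition as stated.
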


To prove Proposition~\ref{prop5.1}, we first need the following lemmas.

\begin{lemma}\label{lemma5.1}
The following properties of a semi-matching operator $M$ hold:
\begin{romanlist}
\item (\textit{Self-reversibility}) $M^{-1}=M$;
\item (\textit{Symmetry}) $M^T=M$, and all elements on the main diagonal of $M$ are $0$;
\item A scattered matrix $M$ must be a semi-matching operator if of which all elements on the main diagonal are $0$, and it’s self-reversible or symmetric.
\item The cooperation between semi-matching operators $A,B$ is equivalent to that the product $AB$ or $BA$ is also a semi-matching operator, and to that $AB=BA$.
\item There are not any two similar couples between the divisions of two semi-cooperative semi-matching operators $A,B$, and thus in the divisions of any two operators in a semi-cooperative set.
\item The necessary and sufficient condition for semi-matching operators $A,B$ to cooperate is that $D(A)$ can be divided into several $4$-tuples, and operation $B$ corresponds to a bijection between $2$-tuples in any $4$-tuples.
\end{romanlist}
\end{lemma}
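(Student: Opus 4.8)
The plan is to read Lemma~\ref{lemma5.1} as the sign-free shadow of the machinery already built in Section~\ref{subsec4.1} and Section~\ref{subsec4.2}: parts (i)--(ii) mirror Proposition~\ref{prop4.1}(i),(ii), part (iii) mirrors Proposition~\ref{prop4.1}(iv), and parts (iv)--(vi) mirror Lemma~\ref{lemma4.4} and Proposition~\ref{prop4.2}, with the one systematic change that every $-1$ produced by sign-tracking now disappears. The governing observation is that a semi-matching operator is a product of disjoint transposition matrices $E(i,j)$ alone, with no row-multiplying factors $E_{\langle i,j\rangle}(-1)$; hence it is simply a permutation matrix realizing a fixed-point-free involution.

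For (i) and (ii) I would repeat the computation of Proposition~\ref{prop4.1}(i),(ii) with the sign factors deleted. The transpositions in $D(M)$ are disjoint, so they commute, and each squares to $I$ by Lemma~\ref{lemma4.1}(i); therefore $M^2=I$, which yields $M^{-1}=M$ (the self-reversible analog of $M^{-1}=-M$). Being a permutation matrix, $M$ is orthogonal by Lemma~\ref{lemma4.2}(ii), so $M^T=M^{-1}=M$, and its diagonal vanishes because a fixed-point-free involution fixes no index. For (iii), as in Proposition~\ref{prop4.1}(iv), orthogonality makes self-reversibility and symmetry coincide ($M^{-1}=M^T$); given a symmetric scattered matrix with zero diagonal, the nonzero entries pair off as $(i,j),(j,i)$, and reading each pair as a transposition reduces $M$ to $I$ and exhibits $M=\prod_{\langle i,j\rangle}E(i,j)$, a semi-matching operator.

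For (iv)--(vi) I would first establish (iv) and then read (v) and (vi) off it. The equivalence ``$A,B$ semi-cooperate if and only if $BA$ (equivalently $AB$) is a semi-matching operator'' follows by right-multiplying $B=CA$ by $A$ and using $A^{-1}=A$ from (i), exactly as in Lemma~\ref{lemma4.4}(i); the new ingredient is the symmetric counterpart of Lemma~\ref{lemma4.3}, namely that since $A,B$ are symmetric by (ii), $AB$ is symmetric if and only if $AB=(AB)^T=B^TA^T=BA$, so cooperation becomes commutation $AB=BA$ in place of anti-commutation. For (v), a shared similar couple $\langle i,j\rangle$ in $D(A)$ and $D(B)$ would place the symmetric scattered point at $(i,j),(j,i)$ in both factors, forcing $(AB)_{ii}=1\neq 0$ and contradicting the zero diagonal of the semi-matching operator $AB$; the claim for a whole semi-cooperative set is then immediate. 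For (vi), I would rerun the $g=BABA$ path argument of Proposition~\ref{prop4.2}, noting that self-reversibility makes $g=(BA)^2=I$ automatic, so the sign-bearing matching equations collapse and only the purely combinatorial $4$-tuple and $2$-tuple-bijection structure survives.

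The main obstacle is bookkeeping rather than ideas: one must verify that no $-1$ scattered point can enter in (iii)--(iv), since genuine semi-matching operators are $0/1$ permutation matrices, and one must invoke the symmetric analog of Lemma~\ref{lemma4.3} correctly so that the cooperation criterion turns into commutation $AB=BA$ rather than the anti-commutation $AB=-BA$ of the full matching case. Once these sign changes are tracked, every step is a faithful, sign-stripped transcription of the corresponding Section~4 argument.
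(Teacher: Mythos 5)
Your proposal is correct and takes exactly the paper's approach: the paper's entire proof of Lemma~\ref{lemma5.1} is the single line ``Similar to Proposition~\ref{prop4.1}, Proposition~\ref{prop4.2}, Lemma~\ref{lemma4.3} and Lemma~\ref{lemma4.4},'' and your write-up is a faithful, correctly sign-stripped expansion of that analogy (in particular you correctly identify the two genuine changes: anti-commutation becomes commutation via the symmetric analog of Lemma~\ref{lemma4.3}, and the matching equation $len_g(i)=1$ disappears because $(BA)^2=I$ is automatic for a fixed-point-free involution).
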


\begin{proof} 
Similar to Proposition~\ref{prop4.1}, Proposition~\ref{prop4.2}, Lemma~\ref{lemma4.3} and Lemma~\ref{lemma4.4} $\Box$.
\end{proof}

\begin{lemma}\label{lemma5.2}
For a complete semi-cooperative set of $N$-order operators, $\forall 0\le i < N$, we denote $M^{(i)}$ as the operator satisfying $\left \langle 0,i \right \rangle\in D(M^{(i)})$. Then $\forall 0\le i,j <N$, if $\left \langle j,x \right \rangle\in D(M^{(i)})$, then $\left \langle i,x \right \rangle\in D(M^{(j)})$.
\end{lemma}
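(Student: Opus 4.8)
The plan is to read every semi-matching operator as the fixed-point-free involution (equivalently, permutation matrix) that it actually is: since $M=\prod_{\langle i,j\rangle\in D(M)}E(i,j)$ is a product of disjoint row-exchange matrices, it is the permutation matrix of the involution that swaps the two indices of each couple in $D(M)$. Writing $\sigma_k$ for the involution carried by $M^{(k)}$, the defining property $\langle 0,k\rangle\in D(M^{(k)})$ becomes $\sigma_k(0)=k$ (and $\sigma_k(k)=0$), with $M^{(0)}=I$ giving $\sigma_0=\mathrm{id}$. I would first note that $M^{(i)}$ is well defined: by the no-similar-couples property (Lemma~\ref{lemma5.1}(v)) the couple $\langle 0,i\rangle$ occurs in at most one division, and since each of the $N-1$ nontrivial operators sends $0$ to a nonzero index (no fixed points, by Lemma~\ref{lemma5.1}(ii)) and does so injectively, the $N-1$ operators realize the $N-1$ indices $1,\dots,N-1$ exactly once each. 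Under this dictionary the hypothesis $\langle j,x\rangle\in D(M^{(i)})$ says precisely $\sigma_i(j)=x$ (and $\sigma_i(x)=j$), while the goal $\langle i,x\rangle\in D(M^{(j)})$ says $\sigma_j(i)=x$.

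The single structural input I would use is commutativity. Because $G$ is a complete semi-cooperative set, any two of its members are semi-cooperative, so by Lemma~\ref{lemma5.1}(iv) their matrices commute; in particular $M^{(i)}M^{(j)}=M^{(j)}M^{(i)}$, hence the corresponding involutions satisfy $\sigma_i\sigma_j=\sigma_j\sigma_i$ as permutations. Evaluating this identity at the index $0$ and using $\sigma_j(0)=j$ and $\sigma_i(0)=i$ gives
\[
\sigma_i(j)=\sigma_i(\sigma_j(0))=\sigma_j(\sigma_i(0))=\sigma_j(i).
\]
Combined with the hypothesis $\sigma_i(j)=x$, this yields $\sigma_j(i)=x$; since $\sigma_j$ is an involution, $\sigma_j(x)=i$ as well, i.e.\ $\langle i,x\rangle\in D(M^{(j)})$, which is exactly the claim.

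I do not expect a serious obstacle: once the operators are recognized as commuting involutions through $0$, the conclusion is a one-line evaluation. The only points needing care are bookkeeping ones --- keeping the composition order straight so that evaluating $\sigma_i\sigma_j$ and $\sigma_j\sigma_i$ at $0$ really produces $\sigma_i(j)$ and $\sigma_j(i)$, confirming that commuting permutation matrices correspond to commuting permutations regardless of the row/column action convention, and disposing of the degenerate cases $i=0$, $j=0$, or $i=j$, where the statement is vacuous or immediate. I would also take care to invoke only Lemma~\ref{lemma5.1} and not Proposition~\ref{prop5.1}, since this lemma is used in the proof of the latter.
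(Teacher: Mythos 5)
Your proof is correct, and it takes a recognizably different route from the paper's. The paper proves this lemma combinatorially via Lemma~\ref{lemma5.1}(vi): since $M^{(i)}$ and $M^{(j)}$ semi-cooperate and $\left\langle 0,i\right\rangle,\left\langle j,x\right\rangle\in D(M^{(i)})$, the set $\{0,i,j,x\}$ is a $4$-tuple on which $M^{(j)}$ acts as a bijection between the two couples of $M^{(i)}$; the known couple $\left\langle 0,j\right\rangle\in D(M^{(j)})$ then pins that bijection down and forces $\left\langle i,x\right\rangle\in D(M^{(j)})$. You instead use Lemma~\ref{lemma5.1}(iv) --- semi-cooperation implies $M^{(i)}M^{(j)}=M^{(j)}M^{(i)}$ --- read the operators as fixed-point-free involutions $\sigma_i,\sigma_j$ with $\sigma_k(0)=k$, and evaluate the commutation relation at $0$ to get $\sigma_i(j)=\sigma_j(i)$. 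The two arguments are ultimately two faces of the same structural fact (the $4$-tuple bijection property is essentially what commutativity of these involutions looks like pointwise), but yours is shorter and purely algebraic, avoids having to argue that $\{0,i,j,x\}$ is connected into a single $4$-tuple, and is insensitive to the row-versus-column convention since commutativity is symmetric. Your side remarks are also sound: well-definedness of $M^{(i)}$ is exactly what the paper's Remark after the lemma establishes, and restricting yourself to Lemma~\ref{lemma5.1} rather than Proposition~\ref{prop5.1} correctly avoids circularity, since this lemma feeds into the proof of that proposition.
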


\begin{proof}
Because $M^{(i)}$ semi-cooperate with $M^{(j)}$ and $\left \langle 0,i \right \rangle , \left \langle j,x \right \rangle\in D(M^{(i)})$, then $\left\{ 0,i,j,x \right \}$ is a $4$-tuple and $M^{(j)}$ corresponds to a bijection between $\left \langle 0,i \right \rangle , \left \langle j,x \right \rangle$ by Lemma~\ref{lemma5.1}. By $\left \langle 0,j \right \rangle\in D(M^{(j)})$, we have $\left \langle i,x \right \rangle\in D(M^{(j)})$ $\Box$.
\end{proof}

\begin{remark}
By Lemma~\ref{lemma5.1}, there are not any two similar couples in $G$, thus all $(N-1)\frac{N}{2}$ couples in $G$ traverse exactly all possible $\binom{N}{2}=\frac{N(N-1)}{2}$ couples. Thus, $\forall 0\le i < N$, $\left \langle 0,i \right \rangle$ must exist in $G$, and for $i\ne i'$, $\left \langle 0,i \right \rangle$ and $\left \langle 0,i' \right \rangle$ belong to different operators. 
\end{remark}

Now we can prove Proposition~\ref{prop5.1}.

\begin{proof}[Proof of Proposition~\upshape{\ref{prop5.1}}] Obviously the associativity, reversibility ($M^{-1}=M$), unit element existence ($I$), and commutativity ($AB=BA$ if they semi-cooperate) of $G$ are all satisfied. For the closeness, we only need to prove that for any two operators $M^{(a)},M^{(b)}\in G$, $M^{(b)}M^{(a)}\in G$ also holds, where $a,b$ is as specified in Lemma~\ref{lemma5.2}.

Assume $\left \langle b,c \right \rangle_a \in D(M^{(a)})$ and consider $M^{(c)}\in G$. Because of the semi-cooperation between $M^{(a)},M^{(b)}$, we can consider each $4$-tuple of them. Without loss of generality, assume $\left \langle i,j \right \rangle_a \left \langle k,l \right \rangle_a \in D(M^{(a)}) ,\left \langle i,k \right \rangle_b \left \langle j,l \right \rangle_b \in D(M^{(b)})$, then $\left \langle i,l \right \rangle_{ba} \left \langle j,k \right \rangle_{ba} \in D(M^{(b)}M^{(a)})$. Consider $M^{(c)}$, then for $\left \langle i,j \right \rangle_a $, $M^{(c)}$ corresponds to a bijection from it to a new couple $\left \langle x,y \right \rangle_a \in D(M^{(a)})$, and we assume $\left \langle i,x \right \rangle_c, \left \langle j,y \right \rangle_c \in D(M^{(c)})$. Consider $M^{(i)}$, then we have $\left \langle b,k \right \rangle_i,\left \langle c,x \right \rangle_i \in D(M^{(i)})$ because $\left \langle i,k \right \rangle_b \in D(M^{(b)}), \left \langle i,x \right \rangle_c \in D(M^{(c)})$ by Lemma~\ref{lemma5.2}. Note that $M^{(i)}$ and $M^{(a)}$ semi-cooperate and $\left \langle b,c \right \rangle_a\in D(M^{(a)})$, then $M^{(i)}$ have a bijection $\left \langle b,c \right \rangle_a \leftrightarrow \left \langle k,x \right \rangle _a$, which means $x=l$ thus $y=k$. Therefore,  $\left \langle i,l \right \rangle_{c}\left\langle j,k \right \rangle_c \in D(M^{(c)})$, which is the same as $M^{(b)}M^{(a)}$. 

For each $4$-tuple, $M^{(c)}$ is the same as $M^{(b)}M^{(a)}$, thus $M^{(b)}M^{(a)}=M^{(c)}\in G$. Consequently, the set is an Abel group $\Box$.
\end{proof}

Considering the self-reversibility,  as an Abel group of order $N=2^n$, a semi-cooperative group $G$ of $N$-order semi-matching operators has $n$ independent generators $\left \{ M_1, M_2,\cdots, M_{n} \right \}$, which form a generator set $\hat{G}$ of size $n$. Proposition~\ref{prop5.1} indicates that if we have a generator set of the group, then the group is determined. Thus, Problem~\ref{pro3} can be reduced to the following problem.

\begin{problem}\label{pro4}(\textbf{Find all semi-cooperative generator sets})
Given $N=2^n$, find all $n$-size generator sets of $N$-order semi-matching operators.
\end{problem}

\subsection{Bijections and semi-cooperative generator set}\label{subsec5.2}
\noindent
Observe the following semi-orthogonal matrices:
\noindent
\begin{equation}\label{eq12}
 \left | \overline{S}_4 \right |=\begin{pmatrix}
 a_0 & a_1 & a_2 & a_3\\
 a_1 & a_0 & a_3 & a_2\\
 a_2 & a_3 & a_0 & a_1\\
 a_3 & a_2 & a_1 & a_0
\end{pmatrix},
\left | \overline{S}_8 \right |=\begin{pmatrix}
 a_0 & a_1 & a_2 & a_3 & a_4 & a_5 & a_6 & a_7\\
 a_1 & a_0 & a_3 & a_2 & a_5 & a_4 & a_7 & a_6\\
 a_2 & a_3 & a_0 & a_1 & a_6 & a_7 & a_4 & a_5\\
 a_3 & a_2 & a_1 & a_0 & a_7 & a_6 & a_5 & a_4\\
 a_4 & a_5 & a_6 & a_7 & a_0 & a_1 & a_2 & a_3\\
 a_5 & a_4 & a_7 & a_6 & a_1 & a_0 & a_3 & a_2\\
 a_6 & a_7 & a_4 & a_5 & a_2 & a_3 & a_0 & a_1\\
 a_7 & a_6 & a_5 & a_4 & a_3 & a_2 & a_1 & a_0
\end{pmatrix},
\end{equation}
note that compared to column $0$, column $1$ has row-exchanging: $\{0\}\leftrightarrow \{1\},\{2\}\leftrightarrow \{3\},\cdots$; column $2, 3$ have row-exchanging: $\{0,1\}\leftrightarrow \{2,3\},\{4,5\}\leftrightarrow \{6,7\}$; columns $4, 5, 6, 7$ have row-exchanging $\{0,1,2,3\}\leftrightarrow \{4,5,6,7\}$. The above property is abstracted as follows: a semi-cooperative set $\{ M_1,M_2,\cdots, M_n\}$ of $N$-order semi-matching operators satisfies:
\begin{itemize}
\item $M_1$ has row-exchanging: $\{i\}\leftrightarrow \{j\},\{k\}\leftrightarrow \{l\},\cdots$
\item $M_2$ has row-exchanging: $\{i,j\}\leftrightarrow \{k,l\},\{m,n\}\leftrightarrow \{o,p\},\cdots$
\item $M_3$ has row-exchanging: $\{i,j,k,l\}\leftrightarrow \{m,n,o,p\},\{q,r,s,t\}\leftrightarrow \{u,v,w,x\},\cdots$
\item $\cdots$
\end{itemize}

We will strictly describe and prove this property. Firstly, take any positive integer power $n$ of $2$, where $1\le n\le N$, we have the following definitions.

\begin{definition}[\textbf{$n$-tuple}]\label{def5.4}
A set $\left \{ k_1,k_2,\cdots.k_n\right \}$ of  row (or column) indexes is called an \textbf{$n$-tuple}, denoted as $T^n$, which is the promotion of a couple or a $4$-tuple.
\end{definition}

\begin{definition}[\textbf{$n$-bijection}]
An \textbf{$n$-bijection} satisfies the following recursive definition:
\begin{itemize}
\item A $1$-bijection $\sigma^1$ is a bijection between two $1$-tuples, i.e., $\sigma^1:T^1_i \leftrightarrow T^1_j$.
\item An $n$-bijection $\sigma^n$ is a bijection between two $n$-tuples: $\sigma^n:\left (T^{\frac{n}{2}}_i+ T^{\frac{n}{2}}_j \right ) \leftrightarrow \left (T^{\frac{n}{2}}_k+ T^{\frac{n}{2}}_l\right )$, where ``$+$" means taking the union set. We stipulate that $\sigma^n=\left \{T^{\frac{n}{2}}_i \leftrightarrow T^{\frac{n}{2}}_l ,T^{\frac{n}{2}}_j \leftrightarrow T^{\frac{n}{2}}_k\right \} $ or $\left \{T^{\frac{n}{2}}_i \leftrightarrow T^{\frac{n}{2}}_k ,T^{\frac{n}{2}}_j \leftrightarrow T^{\frac{n}{2}}_l\right \} $, where $T^{\frac{n}{2}}_i \leftrightarrow T^{\frac{n}{2}}_l$ and others are all $\frac{n}{2}$-bijections.
\end{itemize}
\end{definition}

\begin{definition}[\textbf{$n$-similar}]\label{def5.6}
Two $n$-bijections $\sigma^n_i:T^n_{i1} \leftrightarrow  T^n_{i2},\sigma^n_j:T^n_{j1} \leftrightarrow T^n_{j2}$ are called \textbf{similar}, denoted as $\sigma^n_i \sim \sigma^n_j$, if their image and preimage sets are the same respectively, i.e., $T^n_{i1}=T^n_{j1}$ and $T^n_{i2}=T^n_{j2}$, and they are combined from the same $\frac{n}{2}$-bijections. Two $N$-order semi-matching operators $A,B$ are called \textbf{$n$-similar}, denoted as $A \overset{n}{\sim }B$, if each $n$-bijection $\sigma^n_i$ of $A$ one-to-one corresponds to an $n$-bijection $\sigma^n_j$ of $B$ so that $\sigma^n_i \sim \sigma^n_j$, and called \textbf{$n$-dissimilar} if not, denoted as $A \overset{n}{\nsim }B$. $n$-similarity is obviously an equivalence relation.
\end{definition}

\begin{remark}
\begin{romanlist}
\item Obviously, each $n$-bijection $\sigma^n$ can be written as $\sigma^n=T^n_i \leftrightarrow T^n_j$, and even can be a self-mapping, i.e., $T^n_i=T^n_j$.

\item Any $n$-bijection $\sigma^n=\left \{T^{\frac{n}{2}}_i \leftrightarrow T^{\frac{n}{2}}_l ,T^{\frac{n}{2}}_j \leftrightarrow T^{\frac{n}{2}}_k\right \} $ can be seen as a combination (i.e., product) of two $\frac{n}{2}$-bijections:  $\sigma^n=\sigma^{\frac{n}{2}}_1 \sigma^{\frac{n}{2}}_2$, where 
$\sigma^{\frac{n}{2}}_1:T^{\frac{n}{2}}_i \leftrightarrow T^{\frac{n}{2}}_l ,\sigma^{\frac{n}{2}}_2:T^{\frac{n}{2}}_j \leftrightarrow T^{\frac{n}{2}}_k$.

\item For any $n$-bijection $\sigma^n=T^n_i \leftrightarrow T^n_j$ that is not a self-mapping, there is only one $2n$-tuple $T^{2n}=T^n_i \cup T^n_j$ corresponding to it. Without confusion, we can consider $\sigma^n$ as $T^{2n}$, e.g., a couple is both a $1$-bijection and a $2$-tuple. Let semi-matching operator $A$ has two $\frac{n}{2}$-bijections $T^{\frac{n}{2}}_i \leftrightarrow T^{\frac{n}{2}}_k ,T^{\frac{n}{2}}_j \leftrightarrow T^{\frac{n}{2}}_l $, semi-matching operator $B$  has $\left (T^{\frac{n}{2}}_i+ T^{\frac{n}{2}}_j \right ) \leftrightarrow \left (T^{\frac{n}{2}}_k+ T^{\frac{n}{2}}_l\right )$, then we say $B$ performs on an $n$-tuple $\left (T^{\frac{n}{2}}_i+ T^{\frac{n}{2}}_j \right )$  of $A$.

\item Similar $n$-bijections can be called of the same class. Any class of $n$-bijection $\sigma^n:\left (T^{\frac{n}{2}}_i+ T^{\frac{n}{2}}_j \right ) \leftrightarrow \left (T^{\frac{n}{2}}_k+ T^{\frac{n}{2}}_l\right )$ corresponds to $2$ classes of specific $n$-bijection:  $\sigma^n_1=\left \{T^{\frac{n}{2}}_i \leftrightarrow T^{\frac{n}{2}}_l ,T^{\frac{n}{2}}_j \leftrightarrow T^{\frac{n}{2}}_k\right \} $ and $\sigma^n_2\left \{T^{\frac{n}{2}}_i \leftrightarrow T^{\frac{n}{2}}_k ,T^{\frac{n}{2}}_j \leftrightarrow T^{\frac{n}{2}}_l\right \}$. 
\end{romanlist}
\end{remark}

Now we give the following sufficient condition for the property proposed at the beginning.

\begin{proposition}\label{prop5.2}
The following properties about an $m$-size generator subset $\{ M_1,M_2,\cdots, M_m\}$ hold:
\begin{romanlist}
\item In the subgroup $\left \langle M_1,M_2,\cdots, M_m \right \rangle$ generated by $\{ M_1,M_2,\cdots, M_m\}$, all operators are pairwise $2^m$-similar.
\item $\forall 1\le j< i \le m$, $M_i$ performs on each $2^j$-tuple of $M_j$.
\end{romanlist}
\end{proposition}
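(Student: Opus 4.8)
The plan is to prove both statements of Proposition~\ref{prop5.2} by induction on the size $m$ of the generator subset, relying heavily on the group structure from Proposition~\ref{prop5.1} and the semi-cooperation characterization in Lemma~\ref{lemma5.1}(vi). The two claims are intertwined, so I would carry the induction through simultaneously. For the base case $m=1$, the subgroup $\langle M_1\rangle = \{I, M_1\}$ contains only $I$ and $M_1$; since $M_1$ is built from couples (i.e.\ $2$-tuples), claim (ii) is vacuous and claim (i) reduces to the observation that $I$ and $M_1$ have matching $2$-bijections, which holds because each couple $\langle i,j\rangle$ of $M_1$ is a $1$-bijection $\{i\}\leftrightarrow\{j\}$, and the induced $2$-bijection on $\{i,j\}$ agrees for $I$ (self-mapping) and $M_1$.

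For the inductive step, I would assume the proposition holds for every generator subset of size $m-1$ and prove it for $\{M_1,\dots,M_m\}$. I would first apply the induction hypothesis to $\{M_1,\dots,M_{m-1}\}$: in the subgroup $H=\langle M_1,\dots,M_{m-1}\rangle$, all operators are pairwise $2^{m-1}$-similar, and $M_i$ performs on each $2^j$-tuple of $M_j$ for $j<i\le m-1$. Since $G$ is an abelian group (Proposition~\ref{prop5.1}), the full subgroup $\langle M_1,\dots,M_m\rangle$ is the disjoint union $H \sqcup M_m H$, so every operator is either in $H$ or of the form $M_m K$ with $K\in H$. To establish claim (i) at level $2^m$, I would show that the $2^{m-1}$-tuples of $H$ (which by the inductive $2^{m-1}$-similarity are common to all of $H$) get paired by $M_m$ into $2^m$-tuples, and that multiplication by any $K\in H$ preserves this pairing structure because $K$ acts within each $2^{m-1}$-tuple. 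The key identity is that for $K\in H$ and the $2^{m-1}$-similarity of $H$, the operator $M_m K$ induces the same $2^m$-bijections (up to the two-class ambiguity in the definition of $n$-bijection) as $M_m$ itself, giving $2^m$-similarity across the whole subgroup.

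For claim (ii) with $i=m$, I would argue that $M_m$ performs on each $2^j$-tuple of $M_j$ for all $j<m$. The case $j\le m-1$ where the tuple lies inside $H$'s common structure follows by repeatedly applying Lemma~\ref{lemma5.1}(vi): since $M_m$ semi-cooperates with $M_j$, the division $D(M_j)$ splits into $4$-tuples with $M_m$ giving a bijection between the two $2$-tuples of each; iterating this pairing up the tower of sizes $2,4,\dots,2^j$ and using the inductive fact that $M_j$'s $2^{j}$-tuples are well-defined (from the $2^{j}$-similarity within $\langle M_1,\dots,M_j\rangle$) shows $M_m$ respects and pairs these tuples as an $n$-bijection. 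I expect the main obstacle to be the bookkeeping needed to verify that the recursive $n$-bijection structure is consistent at every level — specifically, showing that the $2^{m-1}$-tuples on which $M_m$ acts are exactly the common tuples of the whole sub-subgroup $H$, not merely of a single generator, and that the two-class ambiguity in the $n$-bijection definition does not break the similarity relation. This requires carefully tracking how semi-cooperation of $M_m$ with each $M_j$ simultaneously, combined with commutativity, forces $M_m$'s action to be compatible with the entire hierarchical tuple decomposition rather than with each generator in isolation; I would handle this by inducting on the level of the tuple within a secondary induction nested inside the induction on $m$.
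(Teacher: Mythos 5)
Your plan follows essentially the same architecture as the paper's proof: an outer induction on $m$, the coset decomposition $\left\langle M_1,\dots,M_m\right\rangle = H \cup M_m H$ for claim (i), and a nested inner induction on the tuple level for claim (ii), with semi-cooperation (Lemma~\ref{lemma5.1}(vi)) supplying the base case at the level of couples and commutativity propagating the pairing up the hierarchy. The commutativity step you gesture at is exactly the paper's engine for (ii): if $M_m$ and $M_j$ both permute the common $2^{j-1}$-tuples, then $M_m(T_2)=M_mM_j(T_1)=M_jM_m(T_1)$ forces $M_m$ to carry the $2^j$-tuple $T_1+T_2$ of $M_j$ onto a $2^j$-tuple. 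One stylistic difference: for claim (i) the paper avoids making (i) depend on (ii) by a direct couple-level computation showing that $A$, $B$, $BA$ are pairwise $4$-similar, then upgrading to $2^{m+1}$-similarity via Lemma~\ref{lemma5.4} and closing with transitivity; your route instead leans on ``$K$ acts within each $2^{m-1}$-tuple'' and ``$M_m$ pairs the $2^{m-1}$-tuples,'' which is claim (ii) at the top level, so the simultaneous induction must be ordered so that (ii) for $i=m$ is proved before (i) at level $m$. That ordering is workable, but you should state it explicitly.

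The one genuinely missing ingredient is the boundedness/saturation argument via Lemma~\ref{lemma5.3}(ii). The paper first shows that the $2^{j-1}$ operators of $\left\langle M_1,\dots,M_{j-1}\right\rangle$ already exhaust all allowed occurrences of each class of $2^{j-1}$-bijection, so any further independent generator $M_i$ is $2^{j-1}$-dissimilar to that subgroup; concretely, $M_i$ must send each common $2^{j-1}$-tuple to a \emph{different} $2^{j-1}$-tuple, not to itself and not to its $M_j$-partner. Your commutativity argument alone only yields that $M_m$ maps each $2^j$-tuple of $M_j$ onto \emph{some} $2^j$-tuple, possibly itself, which is a strictly weaker conclusion. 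The stronger, dissimilarity-backed version is what the paper actually needs downstream (the mapping table of Lemma~\ref{lemma5.5} and the block structure in Proposition~\ref{prop5.4} presuppose that each generator genuinely pairs distinct tuples into larger ones). So you should add, inside the inner induction, the counting step: by claim (i) applied to $\left\langle M_1,\dots,M_{j-1}\right\rangle$ together with Lemma~\ref{lemma5.3}(ii), the class $T\leftrightarrow T'$ is saturated for every $2^{j-1}$-tuple $T$ and every image $T'$ realized inside $\left\langle M_1,\dots,M_j\right\rangle$, whence $M_m(T)\notin\{T_1,T_2\}$ in the notation above. With that inserted, the rest of your outline matches the paper's proof.
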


The above proposition means that a sufficient condition of the property at the beginning of this section is that the set is a generator subset. To prove it, we first need the following lemmas.

\begin{lemma}\label{lemma5.3}
The following properties of $n$-bijection hold:
\begin{romanlist}
\item (\textit{Non-overlapping}) In an $N$-order semi-matching operator, there are not any two different overlapping $n$-bijection (i.e., their image or preimage sets are the same);
\item (\textit{Boundedness}) In a specific complete semi-cooperative set of $N$-order semi-matching operators, the number of any class of $n$-bijections is $n$ at most.
\end{romanlist}
\end{lemma}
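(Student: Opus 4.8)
The plan is to prove both parts by induction on $n$ over the powers of $2$, establishing the non-overlapping property (i) first and then using it to obtain the boundedness property (ii). Throughout I would exploit that a semi-matching operator acts deterministically and is an involution ($M^{-1}=M$, Lemma~\ref{lemma5.1}(i)), together with the group structure of a complete semi-cooperative set (Proposition~\ref{prop5.1}).

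For (i), the base case $n=1$ is immediate: a $1$-bijection is a single couple, and by the definition of a division the couples of a semi-matching operator are pairwise disjoint, so no two of them share a preimage or image index. For the inductive step, suppose $\sigma^n,\tau^n$ are two $n$-bijections of the same operator $M$ with a common preimage $n$-tuple $T$. Since $M^{-1}=M$, the image is determined by $M(T)=S$, so $\sigma^n$ and $\tau^n$ automatically share the image $S$ as well; this also reduces the ``common image'' case to the ``common preimage'' case by symmetry. By the recursive definition each of $\sigma^n,\tau^n$ splits $T$ into two $\tfrac{n}{2}$-tuples and is assembled from two $\tfrac{n}{2}$-bijections carrying these halves onto the two halves of $S$. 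I would first argue that the splitting of $T$ is forced: in the semi-cooperative setting the admissible $\tfrac{n}{2}$-tuples inside $T$ are exactly the two blocks on which the lower-order generators act (group-theoretically, the two cosets of the order-$\tfrac{n}{2}$ subgroup generated by the lower generators that are contained in $T$, using Proposition~\ref{prop5.1} and Lemma~\ref{lemma5.2}), so $\sigma^n$ and $\tau^n$ must use the same pair of half-tuples. Once the half-tuples coincide, the constituent $\tfrac{n}{2}$-bijections have equal preimages, so by the inductive hypothesis they are identical; hence $\sigma^n=\tau^n$.

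For (ii), I would fix a class of $n$-bijection, which pins down a preimage $n$-tuple $T$, an image $n$-tuple $S$ (both of size $n$), and the classes of the two constituent $\tfrac{n}{2}$-bijections. By part (i) each operator of the complete semi-cooperative set carries at most one $n$-bijection with preimage $T$, so distinct members of the class lie in distinct operators. Now choose a fixed index $x\in T$. In each member of the class $x$ is coupled to some partner lying in $S$; by Lemma~\ref{lemma5.1}(v) (no two similar couples occur in the divisions of two operators of a semi-cooperative set) these partners are pairwise distinct across the members, since a repeated partner $y$ would place the same couple $\langle x,y\rangle$ in two operators. As all partners lie in $S$ and $|S|=n$, there can be at most $n$ members, which is exactly the claimed bound.

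The hard part will be the ``forced splitting'' step in the inductive proof of (i): showing that the decomposition of the preimage $n$-tuple into two $\tfrac{n}{2}$-tuples is unique. For an isolated matching this can fail, since a single exchange pattern admits several recursive groupings of the same $2n$ indices, so the argument must genuinely use the semi-cooperative group structure — namely that the admissible $2^{k}$-tuples are precisely the blocks determined by the lower generators, each higher block being a union of exactly two lower ones (Proposition~\ref{prop5.1}, Lemma~\ref{lemma5.1}(vi), Lemma~\ref{lemma5.2}). Establishing this block rigidity, and confirming that the two constituent $\tfrac{n}{2}$-bijections are themselves bijections of the same operator so that the inductive hypothesis applies, is the delicate part; the image/preimage symmetry and the counting in (ii) are then routine.
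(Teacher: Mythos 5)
Your proposal is essentially correct but takes a genuinely different route from the paper on both halves. For part (i) the paper gives a one-sentence functional argument: two distinct overlapping $n$-bijections of a single operator $M$ would force some row index to be sent to two different targets, which is impossible since $M$ acts as a well-defined involution on indices; no induction and no group structure are invoked. Your inductive ``forced splitting'' argument is more careful about the point you flag --- that one exchange pattern admits several recursive groupings of the same indices --- but note that it imports hypotheses (membership in a complete semi-cooperative set, Proposition~\ref{prop5.1}, Lemma~\ref{lemma5.2}) that the statement of (i) does not supply, since (i) is asserted for a single semi-matching operator. The paper implicitly resolves the ambiguity by treating two $n$-bijections with the same underlying index mapping as not ``different'', under which reading its one-liner suffices; under your stricter reading the isolated-operator statement is indeed problematic, but your fix then proves a contextualized variant of (i) rather than (i) itself. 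For part (ii) the paper inducts on $n$: a class of $2n$-bijections splits into two ``specific'' classes, each of which co-occurs with a class of constituent $n$-bijections and is therefore bounded by $n$ via the inductive hypothesis, giving $2n$ in total. Your partner-counting argument --- fix $x$ in the preimage $n$-tuple, observe that across operators the partner of $x$ must be distinct by Lemma~\ref{lemma5.1}(v) and must lie in the image $n$-tuple $S$ with $\left|S\right|=n$ --- is non-inductive, shorter, and in fact bounds the total number of $n$-bijections with given image and preimage sets over all classes by $n$, which is slightly stronger than the per-class bound the paper establishes. Both arguments are sound; yours buys real simplicity in (ii) at the cost of extra (and strictly speaking out-of-scope) machinery in (i).
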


\begin{proof}
We induct on $n$.
\begin{romanlist}
\item If there are any two overlapping $n$-bijections, then at least one row index is mapped into two indexes, which can not be by a matching operator.

\item {\begin{alphlist}
        \item By Lemma~\ref{lemma5.1}, the number of any class of $1$-bijection is $1$ at most.
        \item Assume we have proved that the number of any class of $n$-bijection is $n$ at most. For any class of $2n$-bijections $\sigma^{2n}:\left (T^n_i + T^n_j\right ) \leftrightarrow \left (T^n_k + T^n_l\right )$, if it exists in the set, then it corresponds to $2$ classes of specific $2n$-bijections:  $\sigma^{2n}_1=\left \{T^n_i \leftrightarrow T^n_l ,T^n_j \leftrightarrow T^n_k\right \} ,\sigma^{2n}_2=\left \{T^n_i \leftrightarrow T^n_k ,T^n_j \leftrightarrow T^n_l\right \} $, which are both combined from $2$ classes of $n$-bijections. Each possibility of $T^n_i \leftrightarrow T^n_l$ always exists with one possibility of $T^n_j \leftrightarrow T^n_k$, thus if the number of $T^n_i \leftrightarrow T^n_l$ is $n$, so is $T^n_j \leftrightarrow T^n_k$. Therefore, the number of $\sigma^{2n}_1$ is at most $n$, and so is $\sigma^{2n}_2$, so the number of $\sigma^{2n}$ is $2n$ at most (We are not to traverse all theoretical possibilities of $\sigma^{2n}_1,\sigma^{2n}_2$, but consider in a specific complete semi-cooperative set, in which the number is far less than the former).
		\end{alphlist}}The induction is completed $\Box$.
\end{romanlist}
\end{proof}

\begin{lemma}\label{lemma5.4}
Have two $N$-order semi-matching operators $A,B$, the following properties of $n$-similarity hold:
\begin{romanlist}
\item If $A \overset{n}{\sim }B$, then $A \overset{2n}{\sim }B$, and thus $A \overset{n2^k}{\sim }B$ ($n2^k\le N$, where $k$ is a positive integer).
\item If $A \overset{n}{\nsim }B$, then $A \overset{\frac{n}{2}}{\nsim }B$, and thus $A \overset{n2^{-k}}{\nsim }B$ ($n2^{-k}\ge 1$, where $k$ is a positive integer).
\end{romanlist}
\end{lemma}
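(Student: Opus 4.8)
### Proof Proposal for Lemma~\ref{lemma5.4}

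The plan is to prove both parts by directly unfolding the recursive definition of an $n$-bijection and the definition of $n$-similarity, since the two claims are essentially contrapositives of each other at adjacent scales. The key observation I would exploit is that an $n$-bijection $\sigma^n$ is, by definition, built from a pair of $\frac{n}{2}$-bijections (see Remark, item (ii)), and conversely a $2n$-bijection is assembled from pairs of $n$-bijections. So $n$-similarity at one scale is defined in terms of the constituent bijections one scale down. I would make the whole argument rest on the following single bridging claim: two semi-matching operators $A,B$ are $n$-similar if and only if each $n$-tuple that appears as an image/preimage block in $A$ appears also in $B$, and the two $\frac{n}{2}$-bijections composing the corresponding $n$-bijections coincide. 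This is just Definition~\ref{def5.6} read carefully, but stating it explicitly is what makes the induction go through.

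For part (i), I would argue as follows. Suppose $A \overset{n}{\sim} B$, so every $n$-bijection of $A$ matches a similar one of $B$. To show $A \overset{2n}{\sim} B$, take any $2n$-bijection $\sigma^{2n}$ of $A$, say $\sigma^{2n}:(T^n_i + T^n_j) \leftrightarrow (T^n_k + T^n_l)$. By the recursive definition it decomposes into two $n$-bijections, each of which — because $A\overset{n}{\sim}B$ — is similar to the corresponding $n$-bijection of $B$. Since $2n$-similarity requires exactly that the constituent $n$-bijections agree (same image/preimage blocks, combined from the same $n$-bijections), the matching $2n$-bijection of $B$ is forced to be similar to $\sigma^{2n}$. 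As this holds for every $2n$-bijection, $A\overset{2n}{\sim}B$. The extension to $A\overset{n2^k}{\sim}B$ for all valid $k$ is then immediate by iterating, so I would phrase part (i) as a one-step induction and invoke it repeatedly.

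Part (ii) I would obtain as the contrapositive of part (i) applied at the scale $\frac{n}{2}$. Concretely: if $A\overset{\frac{n}{2}}{\sim}B$ held, then by part (i) (with $n$ replaced by $\frac{n}{2}$) we would get $A\overset{n}{\sim}B$, contradicting the hypothesis $A\overset{n}{\nsim}B$. Hence $A\overset{\frac{n}{2}}{\nsim}B$. Iterating downward gives $A\overset{n2^{-k}}{\nsim}B$ for all $k$ with $n2^{-k}\ge 1$. This keeps part (ii) essentially free once part (i) is in hand.

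The main obstacle, and the only place requiring genuine care, is the forced-agreement step in part (i): I must be sure that when the two constituent $n$-bijections of $\sigma^{2n}$ are each similar to their counterparts in $B$, the assembled $2n$-bijections are necessarily similar rather than merely sharing the same outer $2n$-tuple. This is where Lemma~\ref{lemma5.3}(i) (non-overlapping) does the real work: within a single semi-matching operator, an $n$-tuple cannot be the image block of two distinct $n$-bijections, so the decomposition of $\sigma^{2n}$ into $\frac{n}{2}$-level pieces is \emph{unique}. Uniqueness is what rules out a spurious situation where $B$ contains the right $n$-level pieces but reassembles them into a differently-grouped $2n$-bijection. I would therefore make the non-overlapping lemma the explicit hinge of the argument and otherwise treat the unwinding of the definitions as routine bookkeeping.
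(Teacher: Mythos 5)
Your proposal matches the paper's own argument: part (i) is proved by decomposing each $2n$-bijection of $A$ into its two constituent $n$-bijections, pairing each with a similar $n$-bijection of $B$, and invoking Lemma~\ref{lemma5.3} (non-overlapping) to reassemble them into a similar $2n$-bijection of $B$, then iterating; the paper dispatches part (ii) with ``similar to (i),'' and your contrapositive reading is just the clean way of saying the same thing. No gap; the emphasis you place on non-overlapping as the hinge of the reassembly step is exactly where the paper leans on Lemma~\ref{lemma5.3} as well.
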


\begin{proof}
\begin{romanlist}
\item Each $2n$-bijection $\sigma^{2n}_i$ of $A$ can be written as a product of two $n$-bijections $\sigma^n_{i1},\sigma^n_{i2}$, i.e., $\sigma^{2n}_i=\sigma^n_{i1}\sigma^n_{i2}$. Because $A \overset{n}{\sim }B$, there are corresponding $\sigma^n_{j1} \sim\sigma^n_{i1},\sigma^n_{j2}\sim \sigma^n_{i2}$ of $B$. By Lemma~\ref{lemma5.3}, $\sigma^n_{j1},\sigma^n_{j2}$ don’t overlap, so $\sigma^{2n}_j=\sigma^n_{j1}\sigma^n_{j2}\sim\sigma^n_{i1}\sigma^n_{i2}=\sigma^{2n}_i$, i.e., $\sigma^{2n}_j$ is the $n$-bijection of $B$ similar to $\sigma^{2n}_i$, thus $A \overset{2n}{\sim }B$. We can see $A \overset{n2^k}{\sim }B$ by continuous recursion.
\item Similar to (i) $\Box$.
\end{romanlist}
\end{proof}

Now we can prove Proposition~\ref{prop5.2}.

\begin{proof}[Proof of Proposition~\upshape{\ref{prop5.2}}]
We induct on $m$.
\begin{romanlist}
\item {Note that for any two semi-matching operators $A,B$, if assume $\left \langle i,j \right \rangle_A \left \langle k,l \right \rangle_A \in D(A)$, $\left \langle i,k \right \rangle_B \left \langle j,l \right \rangle_B \in D(B)$, then $\left \langle i,l \right \rangle_{BA} \left \langle j,k \right \rangle_{BA} \in D(BA)$. Thus $A,B,BA$ are pairwise $4$-similar.
\begin{alphlist}
\item Obliviously, in $\left \langle M_1 \right \rangle=\{I,M_1\}$ all operators are pairwise $2^1=2$-similar.
\item For $m>1$, Assume that we have proved that in $\left \langle M_1, M_2, \cdots, M_m \right \rangle$ all operators are pairwise $2^m$-similar, thus $2^{m+1}$-similar by Lemma~\ref{lemma5.4}. For $M_{m+1}$, $\forall \overline{M}_i \in \left \langle M_1, M_2, \cdots, M_m \right \rangle$, we have $M_{m+1}, \overline{M}_i, M_{m+1}\overline{M}_i$ are pairwise $4$-similar, thus $2^{m+1}$-similar. Note that $\left \langle M_1, M_2, \cdots, M_m,M_{m+1} \right \rangle=\left \langle M_1, M_2, \cdots, M_m \right \rangle +  \{ M_{m+1}\overline{M}_i | \overline{M}_i \in \left \langle M_1, M_2, \cdots, M_m \right \rangle \}$. Due to the transitivity of similarity, the $2^{m+1}$-similarity holds for $\left \langle M_1, M_2, \cdots, M_{m+1} \right \rangle$.
\end{alphlist}}

\item {By (i), $\forall 1\le j <i \le m$, in $\left \langle M_1, M_2, \cdots, M_{j-1} \right \rangle$ of which the order is $2^{j-1}$, all operators are pairwise $2^{j-1}$-similar, thus all $2^{j-1}$ possibilities of these $2^{j-1}$-bijections are traversed by Lemma~\ref{lemma5.3}. Therefore, any additional generator $M_i$ is $2^{j-1}$-non-similar, thus $2^{j-2}$-non-similar to them.
\begin{alphlist}
\item Obliviously for $\forall 1\le j< i \le 1$.
\item Assume that we have proved that $\forall 1\le j< i\le m$, $M_i$ performs on each $2^j$-tuple of $M_j$. Consider $M_{m+1}$. For $j=1$ it is obvious by Lemma~\ref{lemma5.1}. $\forall 1< j \le m$, $M_{m+1}$ cooperates with $M_{j-1}$, respectively, thus it performs on each $2^{j-1}$-tuple of $M_{j-1}$ as well as $M_j$. By $M_{m+1} \overset{2^{j-1}}{\nsim }M_j$ and similar to the proof of Proposition~\ref{prop5.2}, without loss of generality, suppose that there are $T^{2^{j-1}}_1 \leftrightarrow_j T^{2^{j-1}}_2,T^{2^{j-1}}_3 \leftrightarrow_j T^{2^{j-1}}_4$ in $2^{j-1}$-bijections of $M_j$, and $T^{2^{j-1}}_1 \leftrightarrow_{m+1} T^{2^{j-1}}_3,T^{2^{j-1}}_2 \leftrightarrow_{m+1} T^{2^{j-1}}_5$ of $M_{m+1}$. Because $M_{m+1},M_j$ semi-cooperate, we have mapping $T^{2^{j-1}}_1 \overset{M_{m+1}M_jM_{m+1}M_j}{\to} T^{2^{j-1}}_1$, i.e., $T^{2^{j-1}}_1 \leftrightarrow_j T^{2^{j-1}}_2 \leftrightarrow_{m+1} T^{2^{j-1}}_5 \leftrightarrow_j T^{2^{j-1}}_3 \leftrightarrow_{m+1} T^{2^{j-1}}_1$, leading to $ T^{2^{j-1}}_5 \leftrightarrow_j T^{2^{j-1}}_3$. However, there is $T^{2^{j-1}}_3 \leftrightarrow_j T^{2^{j-1}}_4$ of $M_j$, thus $T^{2^{j-1}}_5 =T^{2^{j-1}}_4$ must hold because of $2^{j-1}$-non-overlapping by Lemma~\ref{lemma5.3}. Therefore, there is $2^j$-bijection $\left (T^{2^{j-1}}_1 + T^{2^{j-1}}_2 \right ) \leftrightarrow \left (T^{2^{j-1}}_3 + T^{2^{j-1}}_4\right )$ of $M_{m+1}$, where $\left (T^{2^{j-1}}_1 + T^{2^{j-1}}_2 \right )$, $\left (T^{2^{j-1}}_3 + T^{2^{j-1}}_4 \right )$ are $2^j$-tuples of $M_j$, i.e., $M_{m+1}$ performs on each $2^j$-tuple of $M_j$. Now we know it holds for $\forall 0\le j< i\le m+1$, so the induction is completed $\Box$.
\end{alphlist}}
\end{romanlist}
\end{proof}

\subsection{Mapping table and its ordering}\label{subsec5.3}
\noindent
To describe how we order a semi-cooperative generator set, we first need the following definitions.

\begin{definition}[\textbf{Mapping table}]
For a generator set $\hat{G}=\left \{ M_1, M_2,\cdots, M_{n} \right \}$, a mapping table is a permutation of index sequence $[N]$ in which $\forall M_m \in\hat{G}$, for each $2^m$-tuple of $M_m$, the two $2^{m-1}$-tuples in it are adjacent. For each $2^m$-tuple $T^{2^m}$ in a mapping table, the leftmost element $i$ of it is denoted as $i=left\left (T^{2^m}\right )$.
\end{definition}

\begin{definition}[\textbf{$n$-order}]
For a mapping table $T$ of a generator set $\hat{G}=\left \{ M_1, M_2,\cdots, M_{n} \right \}$, we have the following recursive definition:
{\begin{alphlist}
\item $T$ is directly called \textbf{$1$-ordered};

\item If $T$ is $(m-1)$-ordered, and for each $2^{m-1}$-bijection $\sigma: T^{2^{m-1}}_1 \leftrightarrow T^{2^{m-1}}_2$of $M_m$, $\sigma\left( left \left( T^{2^{m-1}}_1\right)\right)=left\left( T^{2^{m-1}}_2 \right)$, then $T$ is \textbf{$m$-ordered}.
\end{alphlist}}
\end{definition}

\begin{algorithm}
\caption{ Generate an $n$-ordered mapping table $T$ of a generator set $\hat{G}=\left \{ M_1, M_2,\cdots, M_{n} \right \}$ of $N=2^n$-order operators.}\label{algo2}
\begin{algorithmic}[1]
\REQUIRE A generator set $\hat{G}=\left \{ M_1, M_2,\cdots, M_{n} \right \}$ of $N=2^n$-order semi-matching operators and an $N$-size empty vector $T$.
\STATE $T\Leftarrow [N]$.
\FOR{$m \Leftarrow 1$ \textbf{to} $n$}
 	\FOR {$2^m$-tuple $T_x^{2^m}=T_{x1}^{2^{m-1}}+ T_{x2}^{2^{m-1}}$ of $M_m$}
    	\STATE Arrange $T_{x1}^{2^{m-1}}$ and $T_{x2}^{2^{m-1}}$ adjacent while maintaining their internal arrangement.
     \ENDFOR
     \FOR {$2^m$-tuple $T_x^{2^m}=T_{x1}^{2^{m-1}}+ T_{x2}^{2^{m-1}}$ of $M_m$}
    	\STATE $i\Leftarrow left\left( T_{x1}^{2^{m-1}} \right )$.
        \STATE $j\Leftarrow M_m(i)$.
        \IF {$j\ne left\left( T_{x2}^{2^{m-1}} \right )$}
        	\STATE $k\Leftarrow 2^{m-1}$.
            \STATE $R\Leftarrow T_{x2}^{2^{m-1}} $.
            \WHILE{$k>1$}
            	\STATE $R=T^{\frac{k}{2}}_1+T^{\frac{k}{2}}_2$.
                \IF {$j\in T^{\frac{k}{2}}_2$}
                	\FOR {$2^k$-tuple $T^k_y=T^{\frac{k}{2}}_{y1}+T^{\frac{k}{2}}_{y2} \subseteq T_{x2}^{2^{m-1}}$}
                    	\STATE Exchange $T^{\frac{k}{2}}_{y1}$ and $T^{\frac{k}{2}}_{y2}$ while maintaining their internal arrangement.
                    \ENDFOR
                    \STATE $k\Leftarrow \frac{k}{2}$.
                    \STATE $R\Leftarrow$ Current $T^{\frac{k}{2}}_1$.
                \ENDIF
            \ENDWHILE
        \ENDIF
     \ENDFOR
\ENDFOR
\end{algorithmic}
\end{algorithm}

Algorithm~\ref{algo2} shows how to generate an $n$-ordered mapping table. We have the following proposition.

\begin{proposition}\label{prop5.3}
The outcome $T$ of Algorithm~\ref{algo2} must be an $n$-ordered mapping table of $\hat{G}=\left \{ M_1, M_2,\cdots, M_{n} \right \}$.
\end{proposition}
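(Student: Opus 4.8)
The plan is to prove the statement by induction on the outer-loop counter $m$, carrying the loop invariant that after iteration $m$ of the \textbf{for} loop the current table $T$ is a mapping table of $\{M_1,\dots,M_m\}$ and is $m$-ordered; specializing to $m=n$ then yields the claim for $\hat{G}=\{M_1,\dots,M_n\}$. The base case $m=1$ is immediate: the first inner loop makes the two singletons of each couple of $M_1$ adjacent, so $T$ becomes a mapping table of $\{M_1\}$; the second inner loop never enters its \textbf{if} branch, since for a $1$-tuple the value $left(T^{1}_{x2})$ is forced to equal $M_1(i)$, and $1$-orderedness holds by definition.

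For the inductive step I would first record the structural input supplied by Proposition~\ref{prop5.2}: the tuples of $M_1,\dots,M_m$ form a nested dyadic hierarchy, and in particular, because $M_m$ performs on each $2^{m-1}$-tuple of $M_{m-1}$, the two $2^{m-1}$-tuples composing every $2^m$-tuple of $M_m$ are exactly $2^{m-1}$-tuples of $M_{m-1}$, i.e. the contiguous blocks whose internal arrangement was fixed during iteration $m-1$. The first inner loop merely permutes these whole $2^{m-1}$-blocks, preserving each block's internal order; hence every mapping-table adjacency and every $(m-1)$-ordered condition for $M_1,\dots,M_{m-1}$ survives (each such condition is internal to a single $2^{m-1}$-block), while the two $2^{m-1}$-tuples of each $2^m$-tuple of $M_m$ are now made adjacent. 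Thus after the first inner loop $T$ is already a mapping table of $\{M_1,\dots,M_m\}$.

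The heart of the argument is the second inner loop, which must upgrade $(m-1)$-orderedness to $m$-orderedness. Fix a $2^m$-tuple $T^{2^m}_x=T^{2^{m-1}}_{x1}+T^{2^{m-1}}_{x2}$, set $i=left(T^{2^{m-1}}_{x1})$ and $j=M_m(i)$; the $m$-ordered condition for this tuple is precisely $j=left(T^{2^{m-1}}_{x2})$. I would analyze the \textbf{while} loop as a descent through the dyadic block tree of $T^{2^{m-1}}_{x2}$: at block size $k$ it locates the half $T^{k/2}_2$ containing $j$ and, when $j$ lies on the right, swaps the two $k/2$-halves inside \emph{every} size-$k$ sub-block of $T^{2^{m-1}}_{x2}$ in parallel before recursing into the half that now holds $j$. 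Two checks are needed. Correctness: descending to block size $1$ shrinks the tracked block $R$ to the single element $j$ sitting in the leftmost position, giving $j=left(T^{2^{m-1}}_{x2})$; moreover the whole rearrangement is confined to $T^{2^{m-1}}_{x2}$, so it does not disturb $i$ (hence $j=M_m(i)$ is stable) nor any previously treated $2^m$-tuple, which occupy disjoint index sets. Invariance: each move swaps two equal adjacent dyadic blocks while preserving their internal order, so all block contiguities (the mapping-table property for every $M_j$, $j\le m$) are retained; and because the swaps at a given level are performed in parallel across all sibling blocks, the leftmost-to-leftmost ordering conditions are retained as well, where I would invoke the self-inverse property of semi-matching operators (Lemma~\ref{lemma5.1}(i)) to see that each relevant bijection is an involution, so exchanging a tuple with its image keeps $\sigma(left(\cdot))=left(\cdot)$ intact.

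I expect the main obstacle to be exactly this invariance check for the parallel half-swaps: one must isolate a clean invariant ensuring that swapping halves at level $k$ simultaneously in every size-$k$ sub-block leaves the $(m-1)$-ordered conditions for $M_1,\dots,M_{m-1}$ untouched while relocating only $j$, and verify it does not silently undo the ordering already achieved at coarser levels within the same iteration $m$. Once this holds for the fixed $2^m$-tuple, the identical argument applies to every $2^m$-tuple of $M_m$ (their index sets being disjoint), so after the second inner loop $T$ is $m$-ordered; this closes the induction and shows the output $T$ is an $n$-ordered mapping table of $\hat{G}$.
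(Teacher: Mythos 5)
Your skeleton coincides with the paper's proof: induction on $m$, with steps 3--5 establishing adjacency (hence the mapping-table property) and steps 10--21 performing a dyadic descent whose parallel half-swaps must be shown not to destroy the ordering already achieved. But the obstacle you flag at the end is precisely the step the paper's proof supplies, and the tools you name do not close it. Tracking only the leftmost-to-leftmost conditions $\sigma(left(T^{2^{l-1}}_1))=left(T^{2^{l-1}}_2)$ is not a self-preserving invariant: when the two $\frac{k}{2}$-halves of every $k$-block inside $T^{2^{m-1}}_{x2}$ are swapped, the leftmost element of any $M_l$-tuple of size at least $k$ is replaced by the element that formerly sat at ordinal position $\frac{k}{2}$, so to conclude that the new leftmost elements of two paired tuples still correspond you must already know that the bijection matches the elements at ordinal position $\frac{k}{2}$ --- which the leftmost condition says nothing about. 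The self-reversibility of Lemma~\ref{lemma5.1} only disposes of the single level $l=\log_2 k$, where the swapped halves are themselves the paired tuples.

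The paper closes this by strengthening the induction hypothesis to a full positional correspondence: for every two paired $2^{l-1}$-tuples, the internal mapping sends the element at each ordinal position of one tuple to the element at the \emph{same} ordinal position of the other. That stronger invariant survives the parallel swaps in all three regimes ($2^{l-1}<\frac{k}{2}$: tuples move as wholes; $2^{l-1}=\frac{k}{2}$: symmetry of the bijection; $2^{l-1}\ge k$: the identical internal rearrangement is applied on both sides, so positions still match), and it trivially implies the leftmost condition you actually need. Without introducing this invariant (or an equivalent substitute, e.g.\ arguing via commutativity of the semi-cooperative group that the new leftmost elements are the $M_{\log_2 k}$-images of the old ones and hence still paired), your invariance check does not go through, so the proposal has a genuine gap at its central step.
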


To prove it, we first need the following lemma.

\begin{lemma}\label{lemma5.5}
Have a mapping table $T$ of a generator set $\hat{G}=\left \{ M_1, M_2,\cdots, M_{n} \right \}$, then $\forall 1\le m \le n$, if $T$ is $m$-ordered, then the couples of $\left \{ M_1, M_2, \cdots, M_m\right \}$ are determined.
\end{lemma}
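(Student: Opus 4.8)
The plan is to induct on $m$, showing that $m$-orderedness of $T$ forces the divisions $D(M_1),\dots,D(M_m)$ to be read off uniquely from the arrangement $T$. For the base case $m=1$, the mapping table places the two $1$-tuples of every $2$-tuple of $M_1$ adjacently, so the couples of $M_1$ are exactly the consecutive pairs $\langle T[2k],T[2k+1]\rangle$; these are fixed by $T$ alone, consistent with $T$ being automatically $1$-ordered.

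For the inductive step I would assume the claim for $m-1$. Since $T$ being $m$-ordered entails that it is $(m-1)$-ordered, the induction hypothesis gives that $M_1,\dots,M_{m-1}$, hence the whole subgroup $H=\langle M_1,\dots,M_{m-1}\rangle$ (an Abelian group of order $2^{m-1}$ by Proposition~\ref{prop5.1}), is determined by $T$. It remains to pin down $M_m$. By Proposition~\ref{prop5.2}, $M_m$ is a collection of $2^{m-1}$-bijections, one on each $2^m$-tuple $T^{2^{m-1}}_1 + T^{2^{m-1}}_2$; moreover every operator of $H$ fixes each $2^{m-1}$-tuple setwise (its bijections live at levels $\le m-2$) and commutes with $M_m$. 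The $m$-ordered condition supplies one \emph{anchor} per bijection, namely $M_m\left(left\left(T^{2^{m-1}}_1\right)\right)=left\left(T^{2^{m-1}}_2\right)$.

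The heart of the argument is to propagate this single anchor to all $2^{m-1}$ elements of the block via commutativity. First I would establish, by an auxiliary induction on $p$ using Proposition~\ref{prop5.2}(ii), that $\langle M_1,\dots,M_p\rangle$ acts transitively on the $2^p$-tuple containing any fixed index: $M_1$ connects the two halves of a $2$-tuple, and each new generator $M_{p+1}$, performing on the $2^p$-tuple of $M_p$, glues the two $2^p$-halves into one orbit. Taking $p=m-1$ and matching orders gives that $H$ acts \emph{regularly} on each $2^{m-1}$-tuple. Then for any $i'\in T^{2^{m-1}}_1$ there is a unique $g\in H$ with $g(i_0)=i'$, where $i_0=left\left(T^{2^{m-1}}_1\right)$, and since $M_m$ commutes with $g$ while $g$ preserves both halves, $M_m(i')=M_m(g(i_0))=g\left(M_m(i_0)\right)=g\left(left\left(T^{2^{m-1}}_2\right)\right)$, a value computable from the already-determined $H$. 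This fixes $\sigma$ on $T^{2^{m-1}}_1$, and by self-reversibility (Lemma~\ref{lemma5.1}) on $T^{2^{m-1}}_2$ as well, so every couple of $M_m$ is determined; repeating on each $2^m$-tuple yields all of $D(M_m)$.

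I expect the main obstacle to be the transitivity/regularity sub-claim together with the bookkeeping of how the commuting operators of $H$ restrict to individual blocks: one must verify that each $g\in H$ genuinely stabilizes every $2^{m-1}$-tuple (so the conjugation identity lands in the correct half) and that regularity makes $g\left(left\left(T^{2^{m-1}}_2\right)\right)$ independent of the chosen $g$. Once the commutation identity $M_m g = g M_m$ and the regular action are in place, the remaining reasoning is routine.
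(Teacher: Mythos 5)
Your proof is correct and takes essentially the same route as the paper's: induction on $m$, extracting from the $m$-ordered condition the single anchor couple $\left\langle left\left(T^{2^{m-1}}_1\right), left\left(T^{2^{m-1}}_2\right)\right\rangle$ on each block, and propagating it to all $2^{m-1}$ couples using the already-determined lower generators together with the cooperation (hence commutation) of $M_m$ with them. The paper writes this propagation as an explicit level-by-level doubling recursion (applying $M_1,M_2,\dots$ in turn to the known couples), which is precisely your regular action of $H=\left\langle M_1,\dots,M_{m-1}\right\rangle$ on each $2^{m-1}$-tuple combined with the identity $M_m g = g M_m$, so the two arguments coincide in substance.
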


\begin{proof}
We induct on $m$.
\begin{romanlist}
\item Obviously, for $m=1$, the couples of $\left \{ M_1\right\}$ are determined.
\item Assume that we have proved that for $1 \le m \le n-1$, the couples of $\left \{ M_1, M_2, \cdots, M_m\right \}$ are determined. For $m+1$, we need to prove that $M_{m+1}$ is determined. As a mapping table, $T$ tells us how $M_{m+1}$ perform on $2^m$-tuples of $M_m$. 

For each $2^{m+1}$-tuple $T^{2^{m+1}}=T^{2^m}_1+T^{2^m}_2$ of $M_{m+1}$, we know $M_{m+1}\left( left\left(T^{2^m}_1 \right) \right)=left\left(T^{2^m}_2 \right)$, which determine a couple $\left \langle left\left(T^{2^m}_1 \right), left\left(T^{2^m}_2 \right) \right \rangle$. Let $i=1,T^1_1=\left \{left\left(T^{2^m}_1 \right)\right \}, T^1_2=\left \{left\left(T^{2^m}_2 \right)\right \},R^2=T^1_1+T^1_2$. We perform the following recursive steps:

For $1\le i \le m$, we have $R^{2^i}=T^{2^{i-1}}_1+T^{2^{i-1}}_2, \left \langle T^{2^{i-1}}_1,T^{2^{i-1}}_2 \right \rangle \in D(M_{m+1})$, where $T^{2^{i-1}}_1\subseteq T^{2^m}_1,T^{2^{i-1}}_2\subseteq T^{2^m}_2$. Consider $M_i$, then we have $\left \langle T^{2^{i-1}}_1, M_i\left (T^{2^{i-1}}_1\right)\right \rangle , \left \langle T^{2^{i-1}}_2, M_i\left(T^{2^{i-1}}_2\right)\right \rangle \in D(M_i)$. By Proposition~\ref{prop5.2}, $\left (T^{2^{i-1}}_1+M_i\left( T^{2^{i-1}}_1\right) \right )$ and $\left(T^{2^{i-1}}_1+M_i\left( T^{2^{i-1}}_1\right)\right)$ are both $2^i$-tuples of $M_i$, thus $T^{2^i}_1=\left (T^{2^{i-1}}_1+M_i\left( T^{2^{i-1}}_1\right) \right ) \subseteq T^{2^m}_1$, $T^{2^i}_2=\left(T^{2^{i-1}}_1+M_i\left( T^{2^{i-1}}_1\right)\right)\subseteq T^{2^m}_2$. Therefore, $T^{2^{i-1}}_1,M_i\left( T^{2^{i-1}}_1\right)$, $T^{2^{i-1}}_1,M_i\left( T^{2^{i-1}}_1\right) $ do not overlap. Because $M_{m+1}$ performs on $2$-tuples of $M_i$, we have $\left \langle T^{2^{i-1}}_1, T^{2^{i-1}}_1 \right \rangle, \left \langle M_i\left( T^{2^{i-1}}_1\right),M_i\left( T^{2^{i-1}}_1\right)\right \rangle \in D(M_{m+1}) $, which include $2^i$ couples. Let $R^{2^{i+1}}=T^{2^i}_1+T^{2^i}_2$, then $\left \langle T^{2^i}_1,T^{2^i}_2\right \rangle \in D(M_{m+1})$. Let $i=i+1$ and go to the next step.

After the above steps, we have $2^m$ couples in $D(M_{m+1})$, which is all the couples between $T^{2^m}_1$ and $T^{2^m}_2$.  We do the above steps for each $2^{m+1}$-tuple of $M_{m+1}$, then $M_{m+1}$ is determined, and so is $\left \{ M_1, M_2, \cdots, M_m, M_{m+1}\right \}$.
\end{romanlist}
The induction is completed $\Box$.
\end{proof}

\begin{remark}\label{rem5.3}
We point out that the above lemma only proves the uniqueness, and next we will give a clear example as follows: denote the $i$-th element of $T$ ($0 \le i \le N-1$) as $t_i$, then $\forall 1\le m \le n$, each $2^{m-1}$-bijection of $M_m$  is as $\left \{ t_a, t_{a+1}, \cdots , t_{a+2^{m-1}-1}\right \} \leftrightarrow  \{ t_{a+2^{m-1}}, t_{a+2^{m-1}+1}, \cdots, t_{a+2^m-1} \}$, where $2^m| a$, and $\forall a\le i \le a+2^{m-1}-1$, $M_m(t_i)=t_{i+2^{m-1}}$.
\end{remark}

Then, we can finally prove Proposition~\ref{prop5.3}.

\begin{proof}[Proof of Proposition~\upshape{\ref{prop5.3}}]
In the algorithm, steps 3-5 make $T$ the mapping table for $M_{m}$ which is correct obviously. Steps 6-23 make $T$ $n$-ordered, while steps 10-21 make $j=left\left ( T^{2^{m-1}}_{x2}\right )$. The key is to prove a correspondence that for every two exchanged $2^{m-1}$-tuples $T^{2^{m-1}}_1,T^{2^{m-1}}_2$, the internal mappings are between the corresponding positions, i.e., the ordinal numbers of every two elements exchanged in the two $2^{m-1}$-tuples are the same.
\begin{romanlist}
\item For $m=1$, $T$ is $1$-ordered, then between every two paired $1$-tuples, the correspondence is obvious.
\item After processing $m$, assume that $\forall 1\le l \le m$, the correspondence between every two paired $2^{l-1}$-tuples has been proved. Now process $m+1$. After steps 3-5, for each $2^{m+1}$-tuple $T_x^{2^{m+1}}=T_{x1}^{2^m}+ T_{x2}^{2^m}$ of $M_{m+1}$, $T_{x1}^{2^m}$ and $T_{x2}^{2^m}$ are adjacent without change of the internal arrangement of them thus without change of the correspondence. Now consider steps 10-21 which make $j=left\left ( T^{2^m}_{x2}\right )$, especially steps 15-17. For each time performing steps 15-17, we exchange $T^{\frac{k}{2}}_{y1}$ and $T^{\frac{k}{2}}_{y2}$ as wholes for $1< k\le 2^m$ and each $2^k$-tuple $T^k_y=T^{\frac{k}{2}}_{y1}+T^{\frac{k}{2}}_{y2} \subseteq T_{x2}^{2^m}$, and we will prove that $\forall 1\le l \le m$, it does not change the previous correspondence. For $1\le l < \log_2{k}$, it holds because exchanges between $\frac{k}{2}$-tuples do not change the internal arrangement of $2^{l-1}$-tuples where $2^{l-1}<\frac{k}{2}$; for $l=\log_2{k}$, it holds because after any exchange between $\frac{k}{2}$-tuples without change of their internal arrangement, the correspondence still exists; for $\log_2{k}<l\le m$, it still holds because the internal changes in all $2^{l-1}$-tuple where $2^{l-1} \ge k$ is the same, thus the correspondence remains. Therefore, after step 10-21, we have $j=left\left ( T^{2^m}_{x2}\right )$ and $M_{m+1}(i)=j$, i.e., $T$ is $(m+1)$-ordered. As a necessary step of induction, it must be pointed out that the correspondence for $l=m+1$ holds because when $T$ is $(m+1)$-ordered, $M_{m+1}$ is determined by Lemma~\ref{lemma5.5} and as described by Remark~\ref{rem5.3}.
\end{romanlist}
The induction is completed. As a result, $T$ is $n$-ordered after the whole algorithm $\Box$.
\end{proof}

\subsection{Simplification of semi-orthogonal matrices}\label{subsec5.4}
\noindent
The following definition is the goal of our simplification.

\begin{definition} [\textbf{Ordered type}]
Have an $N=2^n$-order semi-orthogonal matrix $\left | S_N\right |$ and it is a complete semi-cooperative set $G_{\left | S_N\right |}=\left \{I, M_1, M_2,\cdots, M_{N-1} \right \}$, where $M_i$ corresponds to column $i$ of $\left | S_N\right |$. $\left | S_N\right |$ is called an \textbf{ordered type}, if $\forall 1\le i \le N-1, M_i=M^{(i)}$, and of the generator set $\hat{G}_{\left | S_N\right |}=\left \{ M_1, M_2,\cdots, M_{2^{m-1}},\cdots, M_{2^{n-1}} \right \}$, the $n$-ordered mapping table $T=[N]$.
\end{definition}

We will prove the following proposition, which is a solution to Problem~\ref{pro3}.

\begin{proposition}\label{prop5.4}
The following properties of ordered types hold:
\begin{romanlist}
\item  $\forall N = 2^n,n\ge 1$, have an $N$-dimensional column vector $\textbf{\textit{s}}^{(0)}$, then there is only one $N$-order ordered type denoted as $\left | \overline{S}_N\right |$. 
\item Any $N$-order semi-orthogonal matrix can be simplified into $\left | \overline{S}_N\right |$.
\item In $\left | \overline{S}_N\right |$, $\forall 1\le m \le n$, each $2^m\times 2^m$ block in the upper left corner is semi-orthogonal.
\end{romanlist}
\end{proposition}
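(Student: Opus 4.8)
The plan is to dispatch the three parts in the order (i), (iii), (ii): part (iii) falls out of the explicit form produced while proving (i), and part (ii) then reduces to running Algorithm~\ref{algo2} and invoking the uniqueness from (i). For the existence half of (i) I would write the candidate down explicitly. Fixing the mapping table $T=[N]$ and imposing $n$-orderedness, Remark~\ref{rem5.3} forces each generator $M_{2^{m-1}}$ to act by $i\mapsto i+2^{m-1}$ inside every block of length $2^m$; equivalently $M_{2^{m-1}}(i)=i\oplus 2^{m-1}$. These are genuine semi-matching operators (products of $\frac{N}{2}$ disjoint transpositions), they are independent commuting involutions, and one checks through Lemma~\ref{lemma5.1}(vi) that any two of them semi-cooperate, their divisions assembling into $4$-tuples with the required bijection. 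Hence $\{I,M_1,\dots,M_{N-1}\}$ with $M_i(j)=j\oplus i$ is a complete semi-cooperative set, an Abel group by Proposition~\ref{prop5.1}, and the matrix whose column $i$ is $M_i\textbf{\textit{s}}^{(0)}$ — i.e. with $(j,i)$-entry $a_{j\oplus i}$, exactly the pattern of $|\overline{S}_4|,|\overline{S}_8|$ in Eq.~(\ref{eq12}) — is an ordered type; since $M_i(0)=i$ we have $M_i=M^{(i)}$, matching the definition. For uniqueness I would argue that any ordered type is forced to equal this one: its mapping table is the fixed $T=[N]$ and it is $n$-ordered, so Lemma~\ref{lemma5.5} (made explicit in Remark~\ref{rem5.3}) determines all generators $M_{2^{m-1}}$, and Proposition~\ref{prop5.1} then determines the whole group, hence every $M_i$ and every column $M_i\textbf{\textit{s}}^{(0)}$.

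Part (iii) drops out of this explicit description. In $|\overline{S}_N|$ the entry in row $j$, column $i$ is $a_{j\oplus i}$; restricting to $0\le i,j\le 2^m-1$ keeps $j\oplus i$ inside $\{0,\dots,2^m-1\}$, so the upper-left $2^m\times 2^m$ block is precisely the order-$2^m$ ordered type built on $\{a_0,\dots,a_{2^m-1}\}$, which is semi-orthogonal by construction. Equivalently, the generators $M_1,\dots,M_{2^{m-1}}$ only flip bits below $2^m$, so they close into a subgroup acting on the first $2^m$ indices, and Proposition~\ref{prop5.2} hands the block its semi-orthogonal structure.

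For part (ii) I would start from an arbitrary $N$-order semi-orthogonal matrix $|S_N|$. By the semi-matching analogue of Remark~\ref{rem4.2} its columns arise from column $0$ via semi-matching operators, so its operator set is a complete semi-cooperative set, hence (Proposition~\ref{prop5.1}) an Abel group with an $n$-element generator set $\hat{G}$ whose generators obey the nested-tuple structure of Proposition~\ref{prop5.2}. Feeding $\hat{G}$ to Algorithm~\ref{algo2} returns, by Proposition~\ref{prop5.3}, an $n$-ordered mapping table $T$. The ``simplification'' is the relabelling that moves the row sitting at position $t_i$ of $T$ to position $i$ and simultaneously reindexes the columns so that column $i$ carries $M^{(i)}$; both are row/column permutations, which preserve semi-orthogonality. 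After it the mapping table has become $[N]$ and is still $n$-ordered, so the transformed matrix is an ordered type, and by the uniqueness in (i) it equals $|\overline{S}_N|$.

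The step I expect to be the main obstacle is part (ii), specifically making ``simplified'' precise and checking the bookkeeping of the simultaneous row/column relabelling. One must verify that reordering rows by the permutation $T$ from Algorithm~\ref{algo2} really does put the generators into their canonical form $M_{2^{m-1}}(i)=i\oplus 2^{m-1}$, and that the induced column reindexing forcing $M_i=M^{(i)}$ is compatible with that row reordering rather than in conflict with it. The commutativity and self-reversibility of Proposition~\ref{prop5.1}, together with the nested structure of Proposition~\ref{prop5.2}, are what make the two relabellings compatible, but spelling out that compatibility — and confirming that no residual freedom survives once $T$ is $n$-ordered — is the delicate part of the argument.
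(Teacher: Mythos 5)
Your proposal is correct and follows essentially the same route as the paper: uniqueness of the ordered type via Remark~\ref{rem5.3}/Lemma~\ref{lemma5.5} forcing the generators, simplification via the row permutation from Algorithm~\ref{algo2}'s mapping table combined with a column reindexing (the compatibility you worry about is exactly what the paper's Lemma~\ref{lemma5.6} supplies), and the upper-left block structure from Proposition~\ref{prop5.2}. Your explicit identification of the group as $M_i(j)=j\oplus i$, giving entry $a_{j\oplus i}$, is a clean restatement of the paper's Eq.~(\ref{eq5}) and also patches the one thing the paper glosses over, namely verifying that the forced generators actually do form a semi-cooperative set (existence, not just uniqueness).
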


To prove it, we should first prove the following lemma.

\begin{lemma}\label{lemma5.6}
The following properties of semi-orthogonal matrices $\left | S_N \right |$ hold:
\begin{romanlist}
\item For the complete semi-cooperative set $G_{\left | S_N\right |}=\left \{I, M_1,\cdots, M_{N-1}\right \}$ of an $N$-order semi-orthogonal matrix $\left | S_N\right |$, have an $N$-order row permutation matrix $E$ corresponding to a permutation $\pi:[N]\to[N]$, then the complete semi-cooperative set of $E\left | S_N \right |$ is $G_{E\left | S_N\right |}=\left \{I, EM_1E^T,\cdots, EM_{N-1}E^T\right \}$.
\item For any $N$-order semi-matching operator $M$, if $M(i)=j$, then $\left (EME^T \right )(\pi (i))=\pi (j)$, i.e., if $\left \langle i,j \right \rangle \in D(M)$, then $\left \langle \pi (i), \pi (j)\right \rangle \in D\left( EME^T\right)$, where $E$ is as described above.
\end{romanlist}
\end{lemma}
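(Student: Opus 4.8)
The plan is to prove (ii) first, since it is the computational heart of the statement, and then deduce (i) from it together with Proposition~\ref{prop5.1}. Throughout I fix the convention that the row permutation matrix $E$ associated with $\pi$ acts on the standard basis by $Ee_k=e_{\pi(k)}$, so that $E^T=E^{-1}$ satisfies $E^Te_{\pi(k)}=e_k$; the whole argument is just the observation that conjugation by $E$ relabels indices through $\pi$.

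For (ii), I would read the hypothesis $M(i)=j$ as $Me_i=e_j$ (a semi-matching operator is a scattered matrix, and $\left\langle i,j\right\rangle\in D(M)$ means precisely $Me_i=e_j$ and $Me_j=e_i$). Then a one-line computation gives
\begin{equation}
\left(EME^T\right)e_{\pi(i)}=EM\left(E^Te_{\pi(i)}\right)=EMe_i=Ee_j=e_{\pi(j)},
\end{equation}
so $\left(EME^T\right)(\pi(i))=\pi(j)$; applying the same to $M(j)=i$ yields $\left(EME^T\right)(\pi(j))=\pi(i)$, i.e. $\left\langle\pi(i),\pi(j)\right\rangle\in D\left(EME^T\right)$. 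To close (ii) I still need $EME^T$ to be a genuine semi-matching operator: it is a product of scattered matrices, hence scattered by Lemma~\ref{lemma4.2}; it is symmetric since $\left(EME^T\right)^T=EM^TE^T=EME^T$ using $M^T=M$ from Lemma~\ref{lemma5.1}; its diagonal entries vanish because the $(\pi(k),\pi(k))$ entry of $EME^T$ equals the $(k,k)$ entry of $M$, which is $0$; and it is an involution since $\left(EME^T\right)^2=EM^2E^T=EE^T=I$. By Lemma~\ref{lemma5.1}(iii) it is therefore a semi-matching operator.

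For (i), I would use the column description of a semi-orthogonal matrix: writing $\textbf{\textit{s}}^{(i)}$ for the columns of $\left|S_N\right|$, one has $\textbf{\textit{s}}^{(i)}=M_i\textbf{\textit{s}}^{(0)}$ with $M_0=I$. The columns of $E\left|S_N\right|$ are then $E\textbf{\textit{s}}^{(i)}=EM_i\textbf{\textit{s}}^{(0)}=\left(EM_iE^T\right)\left(E\textbf{\textit{s}}^{(0)}\right)$, so relative to the new first column $E\textbf{\textit{s}}^{(0)}$ the operator carrying it to column $i$ is exactly $EM_iE^T$, which by (ii) is a semi-matching operator. Since a row permutation preserves semi-orthogonality (each column is still a reindexing of $[N]$, and the $2$-tuple exchange structure between any two columns is merely relocated to the permuted rows), $E\left|S_N\right|$ is again semi-orthogonal, and hence its set of column operators is its complete semi-cooperative set; this set is $\left\{I,EM_1E^T,\ldots,EM_{N-1}E^T\right\}$, which proves (i).

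I expect the only real subtlety to be bookkeeping: fixing the permutation-matrix convention once and checking it is used consistently in the conjugation identity, so that the indices land as $\left(EME^T\right)(\pi(i))=\pi(j)$ rather than the reverse. Confirming that conjugation by $E$ respects the semi-cooperation structure is in fact automatic, since $E(M_bM_a)E^T=\left(EM_bE^T\right)\left(EM_aE^T\right)$; thus closure and commutativity of the conjugated set follow from those of $G_{\left|S_N\right|}$ via Proposition~\ref{prop5.1}, giving a direct verification that $G_{E\left|S_N\right|}$ is a complete semi-cooperative set independent of the column-operator argument.
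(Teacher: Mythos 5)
Your proposal is correct and follows essentially the same route as the paper: part (i) is the insertion of $E^TE=I$ to rewrite each column as $\left(EM_iE^T\right)\left(E\textbf{\textit{s}}^{(0)}\right)$, and part (ii) is the conjugation identity $\left(EME^T\right)(\pi(i))=\pi(M(i))=\pi(j)$. You merely add some extra (harmless and indeed welcome) verification that $EME^T$ is a genuine semi-matching operator and that conjugation preserves the semi-cooperative structure, which the paper leaves implicit.
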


\begin{proof}
\begin{romanlist}
\item Assume the vector of column $0$ is $\textbf{\textit{s}}^{(0)}$, then $\left | S_N \right | =\left \{\textbf{\textit{s}}^{(0)}, M_1\textbf{\textit{s}}^{(0)},\cdots, M_{N-1}\textbf{\textit{s}}^{(0)}\right \}$, thus 
\noindent
\begin{equation}
\begin{aligned}
\left | S_N \right | &=\left \{E\textbf{\textit{s}}^{(0)}, EM_1\textbf{\textit{s}}^{(0)},\cdots, EM_{N-1}\textbf{\textit{s}}^{(0)}\right \} \\
&=\left \{E\textbf{\textit{s}}^{(0)}, EM_1E^TE\textbf{\textit{s}}^{(0)},\cdots, EM_{N-1}E^TE\textbf{\textit{s}}^{(0)}\right \} \\
&=\left \{{\textbf{\textit{s}}'}^{(0)}, EM_1E^T{\textbf{\textit{s}}'}^{(0)},\cdots, EM_{N-1}E^T{\textbf{\textit{s}}'}^{(0)}\right \}
\end{aligned},
\end{equation}
where ${\textbf{\textit{s}}'}^{(0)}=E\textbf{\textit{s}}^{(0)}$.
\item A row-permutation matrix $E$ is the product of a series of row-exchanging matrices, and $E^T$ corresponds to $\pi^{-1}$. Thus, $\left (EME^T \right )(\pi (i))=\left (\pi M\pi^{-1}\right )(\pi (i))=\pi (M(i))=\pi (j)$ $\Box$.
\end{romanlist}
\end{proof}

Then we can prove Proposition~\ref{prop5.4}.

\begin{proof}[Proof of Proposition~\upshape{\ref{prop5.4}}]
\begin{romanlist}
\item As described in Remark~\ref{rem5.3}, if the mapping table $T=[N]$, i.e., $\forall 0\le i \le N-1, t_i=i$, then $\forall 1\le m \le n$, each $2^{m-1}$-bijection of $M_{2^{m-1}}$  is as $\left \{ a, a+1, \cdots , a+2^{m-1}-1\right \} \leftrightarrow \left \{ a+2^{m-1}, a+2^{m-1}+1, \cdots, a+2^m-1\right \}$, where $2^m| a$, and $\forall a\le i \le a+2^{m-1}-1$, $M_{2^{m-1}}(i)=i+2^{m-1}$. Thus, the generator set $\hat{G}_{\left | \overline{S}_N\right |}=\left \{ M_1, M_2,\cdots, M_{2^{m-1}},\cdots, M_{2^{n-1}} \right \}$ is uniquely determined, and so is the group generated by it. Note that $\forall 0\le i\le N-1,\left \langle 0,i\right \rangle_i\in D(M^{(i)})$, thus all $M_i=M^{(i)}$ are also unique because so is the generator set, then so is $\left | \overline{S}_N\right |$. As a clear example, we show $\left | \overline{S}_N\right |$ as follows:

\noindent
\begin{equation}\label{eq5} 
 \left | \overline{S}_N\right |=\begin{pmatrix}
 a_0 & a_1 & \cdots & a_{2^{n-1}-1} & a_{2^{n-1}} & a_{2^{n-1}+1}  & \cdots & a_{2^n-2} & a_{2^n-1}\\
 a_1 & a_0 & \cdots & a_{2^{n-1}-2} & a_{2^{n-1}+1} & a_{2^{n-1}}  & \cdots & a_{2^n-1} &a_{2^n-2}\\
\vdots & \vdots & \ddots & \cdots & \cdots & \cdots & \cdots & \cdots & \cdots\\
 a_{2^{n-1}-1} & a_{2^{n-1}-2} & \vdots & a_0 & a_{2^n-1} &  a_{2^n-2} &\cdots& a_{2^{n-1}+1}& a_{2^{n-1}} \\
 a_{2^{n-1}} & a_{2^{n-1}+1} & \vdots & a_{2^n-1} & a_0 & a_1& \cdots & a_{2^{n-1}-2} & a_{2^{n-1}-1}\\
 a_{2^{n-1}+1} & a_{2^{n-1}} & \vdots & a_{2^n-2} & a_1 & a_0 & \cdots& a_{2^{n-1}-1} & a_{2^{n-1}-2}\\
 			\vdots & \vdots& \vdots & \vdots& \vdots & \vdots &\ddots & \cdots & \cdots\\
 a_{2^n-2} & a_{2^n-1} & \vdots & a_{2^{n-1}+1}& a_{2^{n-1}-2} & a_{2^{n-1}-1} & \vdots& a_0 & a_1\\
 a_{2^n-1} & a_{2^n-2} & \vdots & a_{2^{n-1}} & a_{2^{n-1}-1} & a_{2^{n-1}-2} & \vdots& a_1 & a_0
\end{pmatrix},
\end{equation} 
where $\textbf{\textit{s}}^{(0)}=\begin{pmatrix}a_0,a_1,\cdots,a_{N-1}\end{pmatrix}^T$.

\item For any $N$-order semi-orthogonal matrix $\left | S_N\right |$, we denote $M_i=M^{(i)}$ and then we get a complete semi-cooperative set $G_{\left | S_N\right |}=\left \{I, M_1,\cdots, M_{N-1}\right \}$. We first perform a column-permutation matrix $F$ on $\left | S_N\right |$ so that the column order of $\left | S_N\right | F$ is the same as $G_{\left | S_N\right |}$. Obviously $G_{\left | S_N\right | F}=G_{\left | S_N\right |}$. Then we consider the generator set $\hat{G}_{\left | S_N\right | F}=\left \{ M_1, M_2,\cdots, M_{2^{m-1}},\cdots, M_{2^{n-1}} \right \}$ of $G_{\left | S_N\right | F}$, and use Algorithm~\ref{algo2} to generate an $n$-ordered mapping table $T=\left\{t_i| i=0,1,\cdots,N-1\right\}$, where $M_m(t_i)=t_j$. Let $\pi:[N]\to [N]$ satisfies $\pi (t_i)=i$, then we get a corresponding row-permutation matrix $E$, and then $\left (EM_mE^T\right)(\pi(t_i))=\pi(t_j)$ by Lemma~\ref{lemma5.6}, i.e., $\left (EM_mE^T\right)(i)=j$. Thus for $E\left | S_N\right | F$, the new mapping table $T'=[N]$, i.e., $E\left | S_N\right | F=\left | \overline{S}_N\right |$.
\item It is easy to see from Eq.~(\ref{eq5}). To prove it, first by Proposition~\ref{prop5.2}, $\forall 1\le m\le n$, in the generated subgroup $\left \langle M_1,M_2,\cdots, M_{2^{m-1}} \right \rangle$, all operators are pairwise $2^m$-similar. In the ordered type, each $2^m$-tuple is as $\left \{ a,a+1,\cdots,a+2^m-1\right \},2^m| a$. Note that $\left \langle 0,i\right\rangle_i\in D(M_i)$ identifies which pairs of $2^m$-tuples are bijectively mapped, i.e., $\forall 1\le i\le 2^m$, $M_i$ corresponds to self-mapping of $2^m$-tuples, thus $\left \{ I,M_1,M_2,\cdots,M_{2^m-2},M_{2^m-1}\right \}=\left \langle M_1,M_2,\cdots, M_{2^{m-1}} \right \rangle$. Therefore, the $2^m\times 2^m$ block $\left| \overline{S}_{2^m} \right|$ in the upper left corner is a semi-orthogonal matrix with a subset $V=\left \{ a_0,a_1,\cdots, a_{2^m-1}\right \} \subseteq U$ $\Box$.
\end{romanlist}
\end{proof}

\subsection{Infeasibility proof of construct a special orthogonal matrix}\label{subsec5.5}
\noindent
The feasibility of the simplification into the ordered type gives the following proposition, which helps to solve Problem~\ref{pro1} thoroughly.

\begin{proposition}\label{prop5.5}
The following two sentences are both necessary conditions for the existence of solutions to Problem~\ref{pro1}:
\begin{romanlist}
\item The $N$-order ordered type $\left |\overline{S}_N\right |$ has a solution.
\item The $\frac{N}{2}$-order ordered type $\left |\overline{S}_{\frac{N}{2}}\right |$ has a solution.
\end{romanlist}
\end{proposition}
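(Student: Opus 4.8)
The plan is to establish the two necessary conditions separately, in both cases leveraging the simplification to the ordered type from Proposition~\ref{prop5.4} together with the equivalence between solvability and consistency of the matching equation system from Proposition~\ref{prop4.3}. The starting observation is that a solution of Problem~\ref{pro1} is exactly a special orthogonal matrix $S_N$, and by Remark~\ref{rem3.1} its absolute-value matrix $\left|S_N\right|$ is semi-orthogonal; everything then flows from tracking what the simplification does to the signs.

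For statement (i), I would show that the existence of any special orthogonal matrix forces $\left|\overline{S}_N\right|$ to be solvable. By Proposition~\ref{prop5.4}(ii) there exist a row-permutation matrix $E$ and a column-permutation matrix $F$ with $E\left|S_N\right|F=\left|\overline{S}_N\right|$. Since $E$ and $F$ are permutation matrices, $E S_N F$ is again orthogonal, and because permutations only move entries around, $\left|E S_N F\right|=E\left|S_N\right|F=\left|\overline{S}_N\right|$. Hence $E S_N F$ is a special orthogonal matrix whose absolute-value matrix is the ordered type, i.e. it is a solution of $\left|\overline{S}_N\right|$. This proves (i): if Problem~\ref{pro1} is solvable, then $\left|\overline{S}_N\right|$ has a solution. (The converse is immediate, since a solution of $\left|\overline{S}_N\right|$ is itself a special orthogonal matrix.)

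For statement (ii), by (i) it suffices to prove that solvability of $\left|\overline{S}_N\right|$ implies solvability of $\left|\overline{S}_{\frac{N}{2}}\right|$, and here I would pass to the matching equation systems via Proposition~\ref{prop4.3}. By Proposition~\ref{prop5.4}(iii) the $\frac{N}{2}\times\frac{N}{2}$ block in the upper-left corner of $\left|\overline{S}_N\right|$ is precisely $\left|\overline{S}_{\frac{N}{2}}\right|$, built on the subset $\{a_0,\dots,a_{\frac{N}{2}-1}\}$, as is visible in Eq.~(\ref{eq5}). The key structural fact, established in the proof of Proposition~\ref{prop5.4}(iii) (and ultimately in Proposition~\ref{prop5.2}), is that for every column index $i$ with $1\le i\le \frac{N}{2}-1$ the operator $M_i$ acts as a self-mapping on each of the two halves $\{0,\dots,\frac{N}{2}-1\}$ and $\{\frac{N}{2},\dots,N-1\}$. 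Consequently, for any pair of such indices $i,j$, every $4$-tuple of $M_i$ lies entirely inside one half, and the matching equation attached to a $4$-tuple contained in the first half involves only couples supported on the first half, that is, only couples of the restricted operators, which are exactly the matching operators of $\left|\overline{S}_{\frac{N}{2}}\right|$. Therefore the matching equation system of $\left|\overline{S}_{\frac{N}{2}}\right|$ occurs verbatim as a subsystem of the matching equation system of $\left|\overline{S}_N\right|$, using only a subset of the unknowns. Restricting any solution of the larger system to these unknowns yields a solution of the smaller system, so $\left|\overline{S}_{\frac{N}{2}}\right|$ is solvable by Proposition~\ref{prop4.3}, which proves (ii).

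The step I expect to be the main obstacle is the one just sketched for (ii), and it is worth emphasizing why the naive route fails: one cannot simply take the upper-left $\frac{N}{2}\times\frac{N}{2}$ block of a solution $S_N$ and hope it is orthogonal, because orthogonality of the first $\frac{N}{2}$ full columns of $S_N$ only constrains the \emph{sum} of the inner products over the top and bottom halves, not the top half alone. The honest argument must therefore be carried out at the level of the Boolean matching equation system rather than by truncating the matrix, and the crux is verifying carefully that the first-half matching equations of $\left|\overline{S}_N\right|$ are syntactically identical to, and involve no variables outside of, those of $\left|\overline{S}_{\frac{N}{2}}\right|$; this is exactly where the self-mapping property of $M_1,\dots,M_{\frac{N}{2}-1}$ on each half does the essential work.
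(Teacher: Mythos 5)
Your proof is correct and follows essentially the same route as the paper: part (i) via the permutation simplification of Proposition~\ref{prop5.4}, and part (ii) via the semi-orthogonal upper-left block from Proposition~\ref{prop5.4}(iii). In fact your treatment of (ii) is more careful than the paper's one-line assertion, since you justify --- via the half-preserving action of $M_1,\dots,M_{\frac{N}{2}-1}$ and the restriction of the matching equation system to the couples supported on the first half --- why a solution of $\left|\overline{S}_N\right|$ actually induces one for $\left|\overline{S}_{\frac{N}{2}}\right|$, a step the paper leaves implicit.
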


\begin{proof}
\begin{romanlist}
\item By Proposition~\ref{prop5.4}, Any $N$-order semi-orthogonal matrix can be simplified into $\left | \overline{S}_N\right |$. For any $N$-order special orthogonal matrice $S_N$, we have $\left | \overline{S}_N\right |=E\left | S_N\right |$ where $E$ is a row permutation matrix, and obviously $\left | \overline{S}_N\right |=\left | ES_N\right |$. Thus if $\left | S_N\right |$ has a solution $S_N$, then $\left | \overline{S}_N\right |$ has a corresponding solution $ES_N$.
\item By Proposition~\ref{prop5.4}, each $2^m\times 2^m$ block in the upper left corner of $\left | \overline{S}_N\right |$ is semi-orthogonal, thus if $\left | S_N\right |$ has a solution, then so do $\left | \overline{S}_N\right |$ and $\left |\overline{S}_{\frac{N}{2}}\right |$ $\Box$.
\end{romanlist}
\end{proof}

In the case of $4$-qubit states, we need a $16$-order special orthogonal matrix, so we have to consider $\left | S_{16} \right |$. We only consider the case as $\textbf{\textit{s}}^{(0)}=\begin{pmatrix} 0, 1,2, \cdots ,15 \end{pmatrix}^T$ because it’s more convenient and cooperative set is more important than the values of elements. We determine the solution existence of $\left | S_{16}\right |$ by Algorithm~\ref{algo1}, and get a result of $FALSE$. On the contrary, the results are both $TRUE$ for $\left | S_4\right |$ and $\left | S_8\right|$. So we have the following theorem finally.

\begin{theorem}\label{theorem1}  $N$-order special orthogonal matrices exist when $N=2,4,8$ and don’t when $N\ge16$.
\end{theorem}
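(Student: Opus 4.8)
The plan is to split the statement into an existence claim for $N\in\{2,4,8\}$ and a non-existence claim for $N\ge 16$, both anchored on the outputs of Algorithm~\ref{algo1}. The decisive structural input is Proposition~\ref{prop5.4}: every $N$-order semi-orthogonal matrix simplifies (by row and column permutations) to the \emph{unique} ordered type $\left|\overline{S}_N\right|$, and for every $1\le m\le n$ the upper-left $2^m\times 2^m$ block of $\left|\overline{S}_N\right|$ is itself the ordered type $\left|\overline{S}_{2^m}\right|$ built only from the subset $\{a_0,\dots,a_{2^m-1}\}$. Consequently a special orthogonal matrix of order $N$ exists if and only if $\left|\overline{S}_N\right|$ admits a solution: the forward direction is Proposition~\ref{prop5.5}(i), and the converse is immediate because a solution of an ordered type is by definition a special orthogonal matrix of that order. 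So the whole theorem reduces to deciding solvability of the single matrix $\left|\overline{S}_N\right|$ for each $N$.

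For the existence half I would dispatch $N=2$ by exhibiting $\left(\begin{smallmatrix} a_0 & a_1\\ a_1 & -a_0\end{smallmatrix}\right)$ directly, whose columns are signed permutations of $\{a_0,a_1\}$, are orthogonal, and are unit vectors since $a_0^2+a_1^2=1$. For $N=4,8$ I would invoke the runs of Algorithm~\ref{algo1} reported just above the theorem: since the algorithm returns $TRUE$ for $\left|\overline{S}_4\right|$ and $\left|\overline{S}_8\right|$, each of these ordered types possesses a solution, and by the equivalence above this furnishes special orthogonal matrices of orders $4$ and $8$.

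For the non-existence half I would argue by contradiction. Suppose a special orthogonal matrix of order $N=2^n$ with $n\ge 4$ existed. Then by Proposition~\ref{prop5.5}(i) the ordered type $\left|\overline{S}_N\right|$ has a solution $S$, i.e.\ a signing of its entries making all columns pairwise orthogonal. By Proposition~\ref{prop5.4}(iii) the upper-left $16\times 16$ block of $\left|\overline{S}_N\right|$ is exactly $\left|\overline{S}_{16}\right|$, built only from $\{a_0,\dots,a_{15}\}$. I would then show that the restriction of $S$ to this block solves $\left|\overline{S}_{16}\right|$: for any two columns $i,j<16$ the full inner product splits into a part supported on rows $0,\dots,15$, whose monomials lie in $\{a_0,\dots,a_{15}\}$, and a part supported on the remaining rows, whose monomials carry indices $\ge 16$; these monomial sets are disjoint, and since orthogonality must hold for arbitrary values of the $a_k$ it is a polynomial identity, forcing each part to vanish separately. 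Hence the $16\times 16$ corner is orthogonal, i.e.\ $\left|\overline{S}_{16}\right|$ has a solution, contradicting the $FALSE$ output of Algorithm~\ref{algo1} on $\left|\overline{S}_{16}\right|$. The case $N=16$ is the same argument without the block passage, using Proposition~\ref{prop5.5}(i) alone. This rules out every $N\ge 16$ and, together with the existence half, completes the proof.

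The step I expect to be the main obstacle is precisely this restriction argument: establishing rigorously that a solution of the large ordered type induces a solution of its nested $16\times 16$ corner, uniformly for all $N\ge 16$. This is where the nested-block identity of Proposition~\ref{prop5.4}(iii) and the reading of orthogonality as an identity in the free parameters $a_0,\dots,a_{N-1}$ (so that monomials on disjoint index sets cannot cancel) do the real work; equivalently one may iterate Proposition~\ref{prop5.5}(ii) from level $N$ down to level $16$. Everything else is either a direct construction (for $N=2$) or a citation of the already-verified algorithmic outputs (for $N=4,8,16$).
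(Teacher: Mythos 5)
Your proposal is correct and follows essentially the same route as the paper: reduce existence for each $N=2^n$ to solvability of the unique ordered type $\left|\overline{S}_N\right|$ via Propositions~\ref{prop5.4} and~\ref{prop5.5}, cite the Algorithm~\ref{algo1} outputs ($TRUE$ for orders $4,8$, $FALSE$ for order $16$), and propagate non-existence to all $N\ge 16$ through the nested upper-left $16\times16$ block. Your explicit monomial-disjointness argument for why a solution of $\left|\overline{S}_N\right|$ restricts to a solution of $\left|\overline{S}_{16}\right|$ is a welcome elaboration of a step the paper's proof of Proposition~\ref{prop5.5}(ii) leaves terse, but it is the same underlying argument, not a different one.
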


Therefore, it is proved that for $n>3$, it is infeasible to construct a special orthogonal matrix for the DRSP of an arbitrary $n$-qubit state.

There are several similar RSP schemes that use some variant of the above orthogonal matrices. For example, for the RSP of a general state, Xue et al.\cite{2019XueRem} proposed to use matrix of which elements can have complex phases, e.g., 
\noindent
\begin{equation}\textbf{\textit{s}}^{(i)}=
\begin{pmatrix}
\left( -1 \right) ^{\sigma_{0i}}a_{\alpha_{0i}} e^{\imath \varphi_i}\\
\left( -1 \right) ^{\sigma_{1i}}a_{\alpha_{1i}} e^{\imath \varphi_i}\\
\vdots\\
\left( -1 \right) ^{\sigma_{(N-1)i}}a_{\alpha_{(N-1)i}} e^{\imath \varphi_i}
\end{pmatrix},
\end{equation} where $\varphi_i$ are phases. However, at a glance, we will find that the essence of this kind of matrices is similar to that of special orthogonal matrices, and we can prove that the condition for them to exist is equivalent to special orthogonal matrices (see Appendix~\ref{secB} for details), so it’s also infeasible for $n>3$.

\section{Conclusion}\label{sec6}
\noindent
In this paper, we present a polynomial-complexity algorithm to construct a special orthogonal matrix for the DRSP of an arbitrary $n$-qubit state, and prove that for $n > 3$ such matrices do not exist. First, we split the construction problem into two sub-problems, i.e., finding a solution of a semi-orthogonal matrix and generating all semi-orthogonal matrices. Through
giving the definitions and properties of the matching operators, we prove that the orthogonality of a special matrix is equivalent to the cooperation of multiple matching operators, and then the construction problem is reduced to the problem of solving an XOR linear equation system, which reduces the construction complexity from exponential to polynomial level. We prove that each semi-orthogonal matrix can be simplified into a unique ordered type, and then use the proposed algorithm to confirm that the ordered type does not have any solution when $n > 3$. As a corollary, it is infeasible to construct such a special matrix for the DRSP of an arbitrary $n$-qubit state. Considering that DRSP is more valuable than probabilistic RSP, for an arbitrary number $n$ of qubits, it may be more feasible to prepare a certain kind of state, such as an equatorial state\cite{2021ShiCon}, GHZ state\cite{2006DaiCla}, cluster state\cite{2020DuDet}, Brown-type state\cite{2016ChenEco}, etc.

\section*{Declarations}
\noindent
\begin{itemize}
\item \textbf{Conflict of interest} The authors declare that they have no conflict of interest.
\item \textbf{Ethical statement} Articles do not rely on clinical trials. 
\item \textbf{Human and animal participants} All submitted manuscripts containing research which does not involve human participants and/or animal experimentation.
\end{itemize}

\section*{Data availability statement}
Data sharing does not apply to this article as no datasets were generated or analyzed during the current study.

\nonumsection{Acknowledgements}
\noindent
This work is supported by the National Natural Science Foundation of China (62071240), the Priority Academic Program Development of Jiangsu Higher Education Institutions (PAPD), the Innovation Program for Quantum Science and Technology (2021ZD0302902), the High Level Innovation Teams and Excellent Scholars Program in Guangxi Institutions of Higher Education ([2019]52), and the Special Funds for Local Science and Technology Development Guided by the Central Government (ZY20198003).

\nonumsection{References}
\noindent

\appendix
\noindent\label{secA}

\noindent
In this appendix, the details of Algorithm~\ref{algo1} are shown. This appendix is arranged as follows: we first introduce XOR Gaussian elimination, then present the details of our algorithm, and last we analyze the complexity of the algorithm.

\noindent\textbf{(1) XOR Gaussian elimination}

\noindent
First, we introduce XOR linear equation system as follows, which is solved by using XOR Gaussian elimination:

\begin{definition}[\bf{XOR linear equation system}]
Have an unknown $n$-dimensional vector $x=\begin{pmatrix} x_0, x_1 , \cdots , x_{n-1} \end{pmatrix}^T$, a known $m$-dimensional vector $c=\begin{pmatrix} c_0, c_1 , \cdots , c_{m-1} \end{pmatrix}^T$ and a known $m\times n$ matrix $A=\begin{pmatrix}a_{00} & \cdots & a_{0(n-1)}\\ \vdots & \ddots & \vdots \\a_{(m-1)0} &\cdots& a_{(m-1)(n-1)} \end{pmatrix}$, which called Boolean where $\forall x_i,c_i,a_{ij}\in \{0,1\}$, then we have a \textbf{XOR linear equation system} $Ax\equiv c \pmod{2}$, i.e., 
\noindent
\begin{equation}\label{eqA1}
\begin{pmatrix}a_{00} & \cdots & a_{0(n-1)}\\ \vdots & \ddots & \vdots \\a_{(m-1)0} &\cdots& a_{(m-1)(n-1)} \end{pmatrix} \begin{pmatrix}x_0\\ x_1 \\ \vdots \\ x_{n-1} \end{pmatrix} \equiv \begin{pmatrix}c_0\\ c_1 \\ \vdots \\ c_{m-1} \end{pmatrix} \pmod{2},
\end{equation}
or
\noindent
\begin{equation}\label{eqA2}
\left\{\begin{matrix}
a_{00}x_0 \oplus a_{01}x_1 \oplus \cdots \oplus a_{0(n-1)}x_{n-1}=c_0\\
 \vdots\\
a_{(m-1)0}x_0 \oplus a_{(m-1)1}x_1 \oplus \cdots \oplus a_{(m-1)(n-1)}x_{n-1}=c_{(m-1)}
\end{matrix}\right .,
\end{equation}
where ``$\oplus$" means XOR.
\end{definition}

\begin{definition}[\bf{Row-swap \& Row-XOR})]

To determine the existence of its solutions, we define the following operations for any Boolean matrix:

\begin{romanlist}
\item A \textbf{Row-swap} is that swaps rows $r_i$ and $r_j$, denoted as $r_i \Leftrightarrow r_j$.
\item A \textbf{Row-XOR} is that XOR all elements in row $r_j$ to the corresponding elements in row $r_i$, denoted as $r_i \oplus r_j$.
\end{romanlist}
\end{definition}

\begin{lemma}\label{propA}
If an XOR equation system $A'x\equiv c' \pmod{2}$ is from $Ax\equiv c\pmod{2}$ by a row-swap or row-XOR, then these two systems have the same solutions.
\end{lemma}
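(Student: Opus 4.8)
The plan is to verify the claim separately for each of the two elementary operations, and to exploit the fact that over $\mathbb{F}_2$ both operations are involutions, so that once the forward inclusion of solution sets is established it suffices to reapply the same argument to the inverse operation for the reverse inclusion. Throughout I work modulo $2$, so that addition coincides with XOR and the distributive law $\sum_k (a_{ik}\oplus a_{jk})x_k = \bigoplus_k a_{ik}x_k \oplus \bigoplus_k a_{jk}x_k$ holds termwise; a solution of the system is, by definition, a vector $x$ lying simultaneously in the solution set of every row equation, i.e. in their intersection.

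First I would dispose of the row-swap $r_i \Leftrightarrow r_j$. Swapping two rows merely reorders the list of equations $\{\,\bigoplus_k a_{\ell k}x_k = c_\ell \mid 0\le \ell < m\,\}$ without changing it as a set; since the solution set is the intersection of the per-equation solution sets and intersection does not depend on the order of the factors, the two systems have identical solutions with nothing further to check.

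The substantive case is the row-XOR $r_i \leftarrow r_i \oplus r_j$, which replaces equation $i$ by the XOR of the original equations $i$ and $j$ and modifies the constant to $c'_i = c_i \oplus c_j$, leaving every other row fixed. For the forward direction I would take any $x$ solving the original system; then $x$ satisfies $\bigoplus_k a_{ik}x_k = c_i$ and $\bigoplus_k a_{jk}x_k = c_j$, and XORing these two identities gives $\bigoplus_k (a_{ik}\oplus a_{jk})x_k = c_i\oplus c_j$, which is precisely the new equation $i$; as the remaining equations are untouched, $x$ solves the new system. The converse requires no separate computation: the operation is its own inverse, since applying $r_i \leftarrow r_i \oplus r_j$ a second time restores the original row because $(r_i\oplus r_j)\oplus r_j = r_i$ over $\mathbb{F}_2$, so the original system is itself obtained from the new one by a row-XOR, and the forward implication read in that direction yields the reverse inclusion.

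Finally I would package both cases uniformly by recording that each operation equals left multiplication of the augmented system $[A\mid c]$ by an invertible Boolean matrix $P$ with $P^2 = I$ over $\mathbb{F}_2$ (a transposition matrix for the swap, and $P = I \oplus E_{ij}$ for the XOR, where $E_{ij}$ has a single $1$ in position $(i,j)$); then $A'x\equiv c'$ is the statement $P(Ax\oplus c)\equiv 0$, which is equivalent to $Ax\oplus c\equiv 0$ upon multiplying by $P^{-1}=P$. This lemma carries no genuine obstacle; the only point demanding care is the bookkeeping of the constant vector, namely confirming that the operation acts on $[A\mid c]$ as a whole rather than on $A$ alone, so that indeed $c'_i = c_i\oplus c_j$. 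Once that is fixed, distributivity of XOR over the linear form does all the remaining work.
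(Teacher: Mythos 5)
Your proof is correct and takes essentially the same route as the paper's: the forward inclusion by XOR-ing equations $i$ and $j$ to obtain the new row, and the converse via the involution $(r_i\oplus r_j)\oplus r_j=r_i$, which is precisely the paper's closing observation that $a_{jq}\oplus(a_{iq}\oplus a_{jq})=a_{iq}$. The only differences are cosmetic: the paper justifies the mod-$2$ distributivity step by explicit integer bookkeeping with $2k_i$, $2l$ terms where you invoke it directly, and your invertible-matrix packaging with $P^2=I$ is a tidy but inessential addition.
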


\begin{proof}
Obviously for row-swap, so we only need to prove it for row-XOR. Have an XOR equation system below
\noindent
\begin{equation}\label{eqA3}
\left\{\begin{matrix}
a_{00}x_0 \oplus a_{01}x_1 \oplus \cdots \oplus a_{0(n-1)}x_{n-1}=c_0\\
 \vdots\\
 a_{i0}x_0 \oplus a_{i1}x_1 \oplus \cdots \oplus a_{i(n-1)}x_{n-1}=c_i\\
 \vdots\\
 a_{j0}x_0 \oplus a_{j1}x_1 \oplus \cdots \oplus a_{j(n-1)}x_{n-1}=c_j\\
 \vdots\\
a_{(m-1)0}x_0 \oplus a_{(m-1)1}x_1 \oplus \cdots \oplus a_{(m-1)(n-1)}x_{n-1}=c_{(m-1)}
\end{matrix}\right .,
\end{equation}
without loss of generality, assume a row-XOR $r_i \oplus r_j$, then we have
\noindent
\begin{equation}\label{eqA4}
\left\{\begin{matrix}
a_{00}x_0 \oplus a_{01}x_1 \oplus \cdots \oplus a_{0(n-1)}x_{n-1}=c_0\\
 \vdots\\
 (a_{i0}\oplus a_{j0})x_0 \oplus (a_{i1}\oplus a_{j1})x_1 \oplus \cdots \oplus (a_{i(n-1)}\oplus a_{j(n-1)})x_{n-1}=c_i\\
 \vdots\\
 a_{j0}x_0 \oplus a_{j1}x_1 \oplus \cdots \oplus a_{j(n-1)}x_{n-1}=c_j\\
 \vdots\\
a_{(m-1)0}x_0 \oplus a_{(m-1)1}x_1 \oplus \cdots \oplus a_{(m-1)(n-1)}x_{n-1}=c_{(m-1)}
\end{matrix}\right .
\end{equation}
from Eq.~(\ref{eqA3}). Obviously, Eq.~(\ref{eqA3}) and Eq.~(\ref{eqA4}) have the same solutions if 
\noindent
\begin{equation}\label{eqA5}
\left\{\begin{matrix}
 a_{i0}x_0 \oplus a_{i1}x_1 \oplus \cdots \oplus a_{i(n-1)}x_{n-1}=c_i\\
 a_{j0}x_0 \oplus a_{j1}x_1 \oplus \cdots \oplus a_{j(n-1)}x_{n-1}=c_j
\end{matrix}\right .
\end{equation}
and
\noindent
\begin{equation}\label{eqA6}
\left\{\begin{matrix}
 (a_{i0}\oplus a_{j0})x_0 \oplus (a_{i1}\oplus a_{j1})x_1 \oplus \cdots \oplus (a_{i(n-1)}\oplus a_{j(n-1)})x_{n-1}=c_i\\
 a_{j0}x_0 \oplus a_{j1}x_1 \oplus \cdots \oplus a_{j(n-1)}x_{n-1}=c_j
\end{matrix}\right .
\end{equation}
have the same solutions. Let $x=\begin{pmatrix}x_0, x_1 , \cdots , x_{n-1} \end{pmatrix}^T$ be a solution of Eq.~(\ref{eqA3}), i.e., 
\noindent
\begin{equation}\label{eqA7}
\left\{\begin{matrix}
 a_{i0}x_0 + a_{i1}x_1 + \cdots + a_{i(n-1)}x_{n-1}=c_i+2k_i\\
 a_{j0}x_0 + a_{j1}x_1 + \cdots + a_{j(n-1)}x_{n-1}=c_j+2k_j
\end{matrix}\right.
\end{equation} where $k_i,k_j$ are integer. Consequently, we have 
\noindent
\begin{equation}\label{eqA8}
\begin{aligned}
&(a_{i0}\oplus a_{j0})x_0 \oplus (a_{i1}\oplus a_{j1})x_1 \oplus \cdots \oplus (a_{i(n-1)}\oplus a_{j(n-1)})x_{n-1} \\
=&(a_{i0} + a_{j0} +2l_0)x_0 + (a_{i1}+ a_{j1}+2l_1)x_1 + \cdots + (a_{i(n-1)}+a_{j(n-1)+2l_{n-1}})x_{n-1}\\
&+2p\\
= &(a_{i0}x_0 + a_{i1}x_1 + \cdots + a_{i(n-1)}x_{n-1})+(a_{j0}x_0 + a_{j1}x_1 + \cdots + a_{j(n-1)}x_{n-1})\\
&+2(l_0+l_1+\cdots+l_{n-1}+p)\\
=&c_i+2k_i+c_j+2k_j+2(l_0+l_1+\cdots+l_{n-1}+p)\\
=&c_i+c_j+2(Integer)\\
=&c_i \oplus c_j
\end{aligned},
\end{equation}
where $l,p,k$ are all integer, i.e., any solution of Eq.~(\ref{eqA5}) is also a solution of Eq.~(\ref{eqA6}), and vice versa because $a_{jq}\oplus (a_{iq}\oplus a_{jq})=a_{iq}$. Therefore, Eq.~(\ref{eqA5}) and Eq.~(\ref{eqA6}) have the same solutions, so do Eq.~(\ref{eqA3}) and Eq.~(\ref{eqA4}) $\Box$.
\end{proof}

Therefore, the above two operations can be used to solve the equation system. We transform the augmented matrix $\overline{A}=\left ( A | c \right )$ of the equation system into a row simplest matrix, and if $rank(\overline{A}) =rank(A)$, then the solution exists, otherwise not. If the solution exists, then to get a special solution, for each row in eliminated $\overline{A}$, we let all unconstrained variables (i.e., all variables corresponding to non-zero elements except the first non-zero element) be $0$, and let the constrained variable (i.e., the variable corresponding to the first non-zero element) be the constant term, just like the ordinary Gaussian elimination. Similar to the ordinary Gaussian elimination, the time and space complexity of XOR Gaussian elimination are $O(\max^3 (m,n))$ and $O(1)$ respectively.

\noindent\textbf{(2) Algorithm details}

\noindent
In computer, the index of matrix element generally starts from $0$, and $i,j$ generally refer to row and column indexes respectively. We specify that $U=[N]$ for convenience, and if we store any element as a $2$-tuple $(a_k, 0)$ if it is $a_k\in U$ and $(a_k, 1)$ if it is $-a_k\in U$ so that we can handle the case of $0$. For a semi-orthogonal matrix, we handle it as a unfinished special orthogonal matrix of which the elements are all stored as $(a_k, 1)$.

Each couple $\left \langle i,j \right \rangle, i<j$ is stored as a $3$-tuple $\left ( i,  j,\left \langle i,j \right \rangle \right )$ (if $i>j$, we store it as $\left ( j,  i,\left \langle j,i \right \rangle \right )$, where $\left \langle j,i \right \rangle=\left \langle i,j \right \rangle \oplus 1$). We stipulate that if in a couple the logical order and physical order of the two rows are opposite, e.g., $\left \langle i,j \right \rangle$ is stored as $\left (j,i \left \langle j,i \right \rangle \right )$, then set $c\oplus =1$ based on $\left \langle i,j \right \rangle =\left \langle j,i \right \rangle \oplus 1$, where $c$ is the constant term on the right side of a matching equation. The above stipulation is because for any XOR equation $a_0x_0 \oplus a_1x_1 \oplus \cdots \oplus a_{n-1}x_{n-1}=c$ for Boolean variables $x_i $ where $a_i, c$ are all Boolean constants, replace any $x_i$ with $(x_i\oplus 1)$ and then replace $c$ with $c\oplus 1$, then the equation still holds. In addition, we treat every couple in order, so the column index corresponding to each couple $\left \langle x,y \right \rangle_m \in D(M_{m+1})$ is $m\times \frac{N}{2} + num$, where $num$ is the serial number of $\left \langle x,y \right \rangle_m $ in $D(M_{m+1})$.

We have Algorithm~\ref{algoA1} for computing and storing all divisions of matching operators of an $N$-order semi-orthogonal matrix $S$ in table $T$, Algorithm~\ref{algoA2} for inputting matching equations into an $R\times C$ augmented matrix $\overline{A}$ from table $T$, Algorithm~\ref{algoA3} for solving the matching equation system by XOR Gaussian elimination, and Algorithm~\ref{algoA4} for generating a special orthogonal matrix corresponding to $S$ by the solution output from Algorithm~\ref{algoA3}. Based on the above sub-algorithms, we can solve Problem~\ref{pro2} by using Algorithm~\ref{algo1}.

\noindent\textbf{(3) Complexity analysis}

\noindent
In Algorithm~\ref{algoA1}, the hash method is used to speed up, thus its time and space complexity are $O(N^2+N)=O(N^2)$ and $O(N)$ respectively. In Algorithm~\ref{algoA2}, $R=\frac{(N-1)(N-2)N}{8}$ equations are traversed and for each equation, four queries are performed, where each query in the division of a matching operator is of $O(N)$ time complexity, thus the time and space complexity of Algorithm~\ref{algoA2} are $O(N^4)$ and $O(1)$ respectively. For Algorithm~\ref{algoA3}, its time and space complexity are $O(\max^3 (R,C))=O(R^3)=O(\left [\frac{(N-1)(N-2)N}{8}\right ]^3)=O(N^9)$ and $O(1)$ respectively as mentioned before. For Algorithm~\ref{algoA4}, obviously the time and space complexity are $O(N^2)$ and $O(1)$ respectively. Consequently, considering the table $T$, the augmented matrix $\overline{A}$ and the solution vector $X$, the time and space complexity of Algorithm~\ref{algo1} are $O(N^2+N^4+N^9+N^2)=O(N^9)$ and $O(N+1+1+R\times C+(N-1)\times \frac{N}{2}+C-1)=O(N^5)$ respectively.

\begin{algorithm}
\caption{Compute and store all divisions of matching operators of an $N$-order semi-orthogonal matrix $S$. }\label{algoA1}
\begin{algorithmic}[1]
\REQUIRE An $N$-order matrix $S$ where $N=2^n$ and $n >0$, an empty $(N-1)\times \frac{N}{2}$ matrix $T$ of couples $\left (i,j,\left \langle i,j \right \rangle \right ), i<j$, and an empty array $H$ of size $N$.
\ENSURE $S$ is semi-orthogonal and constructed by $U=[N]$.
\FOR{$i \Leftarrow 0$ \textbf{to} $N-1$}
 	\STATE $H[S[i][1]]\Leftarrow i$.
\ENDFOR
\FOR{$j\Leftarrow 1$ \textbf{to} $N-1$}
	\FOR{$i\Leftarrow 0$ \textbf{to} $N-1$}
    	\IF{$\nexists \left \langle i,m \right \rangle\in T[j-1]$}
        	\STATE $t\Leftarrow H[S[i][j]]$.
        	\IF{$S[t][j]=-S[i][0]$}
            	\STATE Store $\left \langle i,t \right \rangle =1$ in $T[j-1]$.
        	\ELSE
         		\STATE Store $\left \langle i,t \right \rangle =0$ in $T[j-1]$.
        	\ENDIF
        \ENDIF
     \ENDFOR
\ENDFOR
\end{algorithmic}
\end{algorithm}

\begin{algorithm}
\caption{Input matching equations into $\overline{A}$ for every two matching operators stored in $T$ table $T$.}\label{algoA2}
\begin{algorithmic}[1]
\REQUIRE An $(N-1)\times \frac{N}{2}$ matrix  $T$ of couples and an $R\times C$ zero Boolean matrix $\overline{A}$ where $R=\frac{(N-1)(N-2)N}{8},C=\frac{N(N-1)}{2}+1$.
\ENSURE  $T$ has been filled by Algorithm~\ref{algoA1}.
\STATE $count\Leftarrow 0$.
\FOR{$0\le a < b <N-1$}
	\FOR{$i\Leftarrow 0$ \textbf{to} $N-1$}
    	\IF{the $4$-tuple involved $i$ has not been traversed}
    		\STATE $c\Leftarrow 1$.
        	\STATE Query $\left \langle i,j \right \rangle$ in $T[a]$.
        	\STATE Query $\left \langle j,k \right \rangle$ in $T[b]$.
        	\STATE Query $\left \langle k,l \right \rangle$ in $T[a]$.
        	\STATE Query $\left \langle l,i \right \rangle$ in $T[b]$.
        	\FOR{$\left \langle x,y \right \rangle_m \in \left \{\left \langle i,j \right \rangle_a,\left \langle j,k \right \rangle_b,\left \langle k,l \right \rangle_a,\left \langle l,i \right \rangle_b\right\}$}
        		\IF{$\left \langle x,y \right \rangle_m$ is stored as $\left (y,x \left \langle y,x \right \rangle \right )$ in $T[m]$} 
            		\STATE $c\Leftarrow c\oplus 1$.
            	\ENDIF
        	\ENDFOR
        	\FOR{$\left \langle x,y \right \rangle_m \in \left \{\left \langle i,j \right \rangle_a,\left \langle j,k \right \rangle_b,\left \langle k,l \right \rangle_a,\left \langle l,i \right \rangle_b\right\}$}
        		\STATE $num \Leftarrow$  the serial number of $\left \langle x,y \right \rangle_m $ in $T[m]$.
            	\STATE $\overline{A}[count][m\times \frac{N}{2} + num]\Leftarrow 1$.
        	\ENDFOR
        	\STATE $\overline{A}[count][C-1]\Leftarrow c$.
        	\STATE $count \Leftarrow count+1$.
		\ENDIF
    \ENDFOR
\ENDFOR
\end{algorithmic}
\end{algorithm}

\begin{algorithm}
\caption{Solve an $R\times C$ augmented matrix $\overline{A}$ by XOR Gaussian elimination.}\label{algoA3}
\begin{algorithmic}[1]
\REQUIRE An $R\times C$ Boolean matrix $\overline{A}$ and an empty vector $X$ of size $C-1$.
\ENSURE  $\overline{A}$ has been filled by Algorithm~\ref{algoA2}.
\STATE $k \Leftarrow 0$.
\FOR{$j\Leftarrow 0$ \textbf{to} $C-1$}
	\FOR{$i \Leftarrow k$ \textbf{to} $R-1$} 
    	\IF{$\overline{A}[i][j] \ne 0$}
        	\STATE $r_i \Leftrightarrow r_k$.
            \FOR{$i\Leftarrow k+1$ \textbf{to} $R-1$ }
           		\IF{$\overline{A}[i][j] \ne 0$}
                    \STATE $r_i \oplus r_k$.
                \ENDIF
            \ENDFOR
            \STATE $k \Leftarrow k+1$.
            \STATE Break.
        \ENDIF
    	\IF{$k=R$}
        	\STATE Break.
        \ENDIF
	\ENDFOR
\ENDFOR
\FOR{ $i \Leftarrow k-1$ \textbf{to} $0$}
    \FOR{ $j \Leftarrow 0$ \textbf{to} $C-1$}
        \IF{$\overline{A}[i][j]\ne 0$}
            \FOR{ $m \Leftarrow i-1$ \textbf{to} $0$}
                \IF{$\overline{A}[m][j] \ne 0$}
                    \STATE $r_m \oplus r_i$.
                \ENDIF
            \ENDFOR
        \ENDIF
    \ENDFOR
\ENDFOR
\IF{ $\exists 0\le j<C-1, \overline{A}[k-1][j]\ne 0$}
    \STATE Store a special solution in $X$.
    \STATE Return $TRUE$.
\ELSE
    \STATE Return $FALSE$.
\ENDIF
\end{algorithmic}
\end{algorithm}

\begin{algorithm}
\caption{Generate a special orthogonal matrix corresponding to a $N$-order semi-orthogonal matrix $S$. }\label{algoA4}
\begin{algorithmic}[1]
\REQUIRE An $N$-order matrix $S$ where $N=2^n$ and $n >0$, an $(N-1)\times \frac{N}{2}$ matrix $T$ of couples, and a vector $X$ of size $C-1$ where $C=\frac{N(N-1)}{2}+1$.
\ENSURE $S$ is semi-orthogonal and constructed by $U=[N]$, $T$ has been filled by Algorithm~\ref{algoA1}, and $X$ is a solution of the matching equation system of $S$ got by Algorithm~\ref{algoA3}.
\STATE Value all couples in $T$ by $X$.
\FOR{$j\Leftarrow 1$ \textbf{to} $N-1$}
    \FOR{$\left \langle x,y \right \rangle \in T[j-1]$}
        \STATE $S[x][j]\Leftarrow  (-1)^{\left \langle x,y \right \rangle \oplus 1}S[y][0]$.
        \STATE $S[y][j]\Leftarrow (-1)^{\left \langle x,y \right \rangle}S[x][0]$.
     \ENDFOR
\ENDFOR
\end{algorithmic}
\end{algorithm}

\appendix
\noindent\label{secB}

\noindent
As is mentioned, a generalization of Problem~\ref{pro1} is to construct an orthogonal matrix of which the elements have complex phases, e.g.,  

\begin{equation}\textbf{\textit{s}}^{(i)}=
\begin{pmatrix}
\left( -1 \right) ^{\sigma_{0i}}a_{\alpha_{0i}} e^{\imath \varphi_i}\\
\left( -1 \right) ^{\sigma_{1i}}a_{\alpha_{1i}} e^{\imath \varphi_i}\\
\vdots\\
\left( -1 \right) ^{\sigma_{(N-1)i}}a_{\alpha_{(N-1)i}} e^{\imath \varphi_i}
\end{pmatrix},
\end{equation} where $\varphi_i$ are phases. We can prove that the condition for it to exist is similar to the special orthogonal matrix.

\begin{lemma}\label{lemmaB}
Given two column vectors 
\noindent
\begin{equation}
\begin{aligned}
\textbf{\textit{s}}^{(i)}=
\begin{pmatrix}
\left( -1 \right) ^{\sigma_{0i}}a_{\alpha_{0i}} e^{\imath \varphi_i}\\
\left( -1 \right) ^{\sigma_{1i}}a_{\alpha_{1i}} e^{\imath \varphi_i}\\
\vdots\\
\left( -1 \right) ^{\sigma_{(N-1)i}}a_{\alpha_{(N-1)i}} e^{\imath \varphi_i}
\end{pmatrix},
\textbf{\textit{s}}^{(j)}=
\begin{pmatrix}
\left( -1 \right) ^{\sigma_{0j}}a_{\alpha_{0j}} e^{\imath\varphi_j}\\
\left( -1 \right) ^{\sigma_{1j}}a_{\alpha_{1j}} e^{\imath \varphi_j}\\
\vdots\\
\left( -1 \right) ^{\sigma_{(N-1)j}}a_{\alpha_{(N-1)j}} e^{\imath \varphi_j}
\end{pmatrix}
\end{aligned},
\end{equation}
then the necessary and sufficient condition for $\textbf{\textit{s}}^{(i)}\perp  \textbf{\textit{s}}^{(j)}$ is that for two new vectors 
\noindent
\begin{equation}
\begin{aligned}
\widetilde{\textbf{\textit{s}}} ^{(i)}=
\begin{pmatrix}
\left( -1 \right) ^{\sigma_{0i}}a_{\alpha_{0i}}\\
\left( -1 \right) ^{\sigma_{1i}}a_{\alpha_{1i}}\\
\vdots\\
\left( -1 \right) ^{\sigma_{(N-1)i}}a_{\alpha_{(N-1)i}} 
\end{pmatrix},
\widetilde{\textbf{\textit{s}}} ^{(j)}=
\begin{pmatrix}
\left( -1 \right) ^{\sigma_{0j}}a_{\alpha_{0j}}\\
\left( -1 \right) ^{\sigma_{1j}}a_{\alpha_{1j}}\\
\vdots\\
\left( -1 \right) ^{\sigma_{(N-1)j}}a_{\alpha_{(N-1)j}} 
\end{pmatrix}
\end{aligned},
\end{equation} we have $\widetilde{\textbf{\textit{s}}} ^{(i)}\perp  \widetilde{\textbf{\textit{s}}} ^{(j)}$.
\end{lemma}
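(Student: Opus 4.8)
The plan is to observe that the complex phase in each column is a single global scalar that factors out of the Hermitian inner product. Concretely, since $\varphi_i$ depends only on the column index $i$ and not on the row, every entry of $\textbf{\textit{s}}^{(i)}$ carries the same factor $e^{\imath\varphi_i}$, so we may write $\textbf{\textit{s}}^{(i)}=e^{\imath\varphi_i}\,\widetilde{\textbf{\textit{s}}}^{(i)}$ and likewise $\textbf{\textit{s}}^{(j)}=e^{\imath\varphi_j}\,\widetilde{\textbf{\textit{s}}}^{(j)}$, where $\widetilde{\textbf{\textit{s}}}^{(i)},\widetilde{\textbf{\textit{s}}}^{(j)}$ are the real vectors defined in the statement.

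First I would compute the Hermitian inner product $\textbf{\textit{s}}^{(i)\dagger}\textbf{\textit{s}}^{(j)}$ directly. Because the entries $(-1)^{\sigma_{ki}}a_{\alpha_{ki}}$ are all real, complex conjugation affects only the phase factor, so
\noindent
\begin{equation}
\textbf{\textit{s}}^{(i)\dagger}\textbf{\textit{s}}^{(j)}
=\sum_{k=0}^{N-1}\overline{(-1)^{\sigma_{ki}}a_{\alpha_{ki}}e^{\imath\varphi_i}}\,(-1)^{\sigma_{kj}}a_{\alpha_{kj}}e^{\imath\varphi_j}
=e^{\imath(\varphi_j-\varphi_i)}\sum_{k=0}^{N-1}(-1)^{\sigma_{ki}\oplus\sigma_{kj}}a_{\alpha_{ki}}a_{\alpha_{kj}}.
\end{equation}
The remaining sum is exactly the (real) inner product $\widetilde{\textbf{\textit{s}}}^{(i)\dagger}\widetilde{\textbf{\textit{s}}}^{(j)}$, so I obtain the clean identity $\textbf{\textit{s}}^{(i)\dagger}\textbf{\textit{s}}^{(j)}=e^{\imath(\varphi_j-\varphi_i)}\,\widetilde{\textbf{\textit{s}}}^{(i)\dagger}\widetilde{\textbf{\textit{s}}}^{(j)}$.

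Finally, since $e^{\imath(\varphi_j-\varphi_i)}$ is a nonzero complex scalar, the product on the left vanishes if and only if the real inner product on the right vanishes. This yields $\textbf{\textit{s}}^{(i)}\perp\textbf{\textit{s}}^{(j)}$ iff $\widetilde{\textbf{\textit{s}}}^{(i)}\perp\widetilde{\textbf{\textit{s}}}^{(j)}$, which is precisely the assertion, and both directions follow simultaneously from the single equivalence. There is no real obstacle here; the only point requiring care is the conjugation in the Hermitian product, namely verifying that the phase of column $i$ appears as $e^{-\imath\varphi_i}$ while that of column $j$ appears as $e^{+\imath\varphi_j}$, so that they combine into the single surviving factor $e^{\imath(\varphi_j-\varphi_i)}$ rather than cancelling. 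Once this is handled correctly, the statement is immediate, and it shows that the complex-phase construction of Xue et al.\cite{2019XueRem} is solvable under exactly the same conditions as the real special orthogonal matrix, hence also infeasible for $n>3$.
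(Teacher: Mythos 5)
Your proposal is correct and follows essentially the same route as the paper's own proof: both compute the Hermitian inner product, factor out the nonzero scalar $e^{\imath(\varphi_j-\varphi_i)}$, and identify the remaining sum with the real inner product $\widetilde{\textbf{\textit{s}}}^{(i)\dagger}\widetilde{\textbf{\textit{s}}}^{(j)}$. Your added remark about tracking the conjugation of the column-$i$ phase is a sensible explicit check of the one point the paper leaves implicit.
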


\begin{proof}
\noindent
\begin{equation}
\begin{aligned}
\textbf{\textit{s}}^{(i)\dagger}  \textbf{\textit{s}}^{(j)}&=\sum_{k=0}^{N-1}\left [ (-1)^{\sigma_{ki} }a_{\alpha_{ki}} e^{-\imath\varphi_i }(-1)^{\sigma_{kj} }a_{\alpha_{kj}} e^{\imath\varphi_j}\right] \\
&=\sum_{k=0}^{N-1}\left [ (-1)^{\sigma_{ki}\oplus\sigma_{kj} } a_{\alpha_{ki}} a_{\alpha_{kj}}e^{\imath\left (-\varphi_i + \varphi_j\right ) }  \right] \\
&=e^{\imath\left (-\varphi_i + \varphi_j\right ) } \sum_{k=0}^{N-1}\left [ (-1)^{\sigma_{ki}\oplus\sigma_{kj} } a_{\alpha_{ki}} a_{\alpha_{kj}} \right] \\
&=e^{\imath\left (-\varphi_i + \varphi_j\right ) }  {\left [\widetilde{\textbf{\textit{s}}}^{(i)\dagger}  \widetilde{\textbf{\textit{s}}}^{(j)}\right]}
\end{aligned},
\end{equation}
thus obviously $\Box$.
\end{proof}

By Lemma~\ref{lemmaB}, all the conclusions of the text can be applied here, therefore, the generalized $N$-order special orthogonal matrix as above also exists when $N=2,4,8$ and doesn’t when $N\le16$.


\begin{thebibliography}{000}
\bibitem{2000LoCla}
H.K. Lo (2000), {\it Classical-communication cost in distributed quantum-information processing: A generalization of quantum-communication complexity},
Phys. Rev. A, 62(1), 012313.

\bibitem{2001BennettRem}
C.H. Bennett and D.P. DiVincenzo and P.W. Shor and J.A. Smolin and B.M. Terhal and W.K. Wootters (2001), {\it Remote State Preparation},
Phys. Rev. Lett., 87(7), 077902.

\bibitem{2003BerryOpt}
D.W. Berry and B.C. Sanders (2003), {\it Optimal Remote State Preperation},
Phys. Rev. Lett., 90(5), 057901.

\bibitem{2005BennettRem}
C.H. Bennett and P. Hayden and D.W. Leung and P.W. Shor and A. Winter (2005), {\it Remote preparation of quantum states},
IEEE Trans. Inf. Theory, 51(1), 56--74.

\bibitem{2010BarreiroRem}
J.T. Barreiro and T.C. Wei and P.G. Kwiat (2010), {\it Remote preparation of single-photon “hybrid” entangled and vector-polarization states},
Phys. Rev. Lett., 105(3), 030407.

\bibitem{2015WangCon}
C. Wang and Z. Zeng and X.H. Li (2015), {\it Controlled remote state preparation via partially entangled quantum channel},
Quantum Inf. Process., 14(3), 1077--1089.

\bibitem{2021ShiCon}
J. Shi (2021), {\it Controlled cyclic remote state preparation of single-qutrit equatorial states},
Mod. Phys. Lett., 36(33), 2150234.

\bibitem{2021WangRem}
M.Y. Wang and Y.H. Zheng and L.X. Fu and F.L. Yan and T. Gao (2021), {\it Remote preparation of a general single-photon hybrid state},
Results Phys., 27, 104497.

\bibitem{2021PengBid}
J.Y. Peng and Y. Xiang (2021), {\it Bidirectional remote state preparation in noisy environment assisted by weak measurement},
Opt. Commun., 499, 127285.

\bibitem{2017AdepojuJoi}
A.G. Adepoju and  B.J. Falaye and G.H. Sun and O. Camacho-Nieto and S.H. Dong (2017), {\it Joint remote state preparation (JRSP) of two-qubit equatorial state in quantum noisy channels},
Phys. Lett. A, 381(6), 581--587.

\bibitem{2021SunDou}
S.Y. Sun and H.S. Zhang (2021), {\it Double-direction quantum cyclic controlled remote state preparation of two-qubit states},
Quantum Inf. Process., 20, 211.

\bibitem{2021QianEff}
Y.J. Qian and Z.H. Xu and S.B. Xue and M. Jiang (2021), {\it Effect of Quantum Noise on Deterministic Remote Preparation of an Arbitrary Two-Qubit State With a Relay Node},
IEEE Access, 99, 1--1.

\bibitem{2011XiaoJoi}
X.Q. Xiao and J.M. Liu and G.H. Zeng (2011), {\it Joint remote state preparation of arbitrary two- and three-qubit states},
J. Phys. B: At. Mol. Opt., 44(9), 075501.

\bibitem{2012ZhanDet}
Y.B. Zhan (2012), {\it Deterministic remote preparation of arbitrary two- and three-qubit states},
EPL, 98, 40005.

\bibitem{2013ZhanDet}
Y.B. Zhan and P.C. Ma (2013), {\it Deterministic joint remote preparation of arbitrary two- and three-qubit entangled states},
Quantum Inf. Process., 12(2), 997--1009.

\bibitem{2016RaRem}
Y.S. Ra and H.T. Lim and Y.H. Kim (2016), {\it Remote preparation of three-photon entangled states via single-photon measurement},
Phys. Rev. A, 94, 042329.

\bibitem{2017ChenCon}
X.B. Chen and Y.R. Sun and G. Xu and H.Y. Jia and Z.G. Qu and Y.X. Yang (2017), {\it Controlled bidirectional remote preparation of three-qubit state},
Quantum Inf. Process., 16(10), 244.

\bibitem{2006DaiCla}
H.Y. Dai and P.X. Chen and L.M. Liang and C.Z. Li (2006), {\it Classical communication cost and remote preparation of the four-particle GHZ class state},
Phys. Lett. A, 355, 285–288.

\bibitem{2019XueRem}
Y. Xue and L. Shi and X.Y. Da and K.H. Zhou, and L.H. Ma and J.H. Wei and L.Q. Yu, and H. Hu (2019), {\it Remote preparation of four-qubit states via two-qubit maximally entangled states},
Quantum Inf. Process., 18(4), 103.

\bibitem{2020SangOpt}
Z.W. Sang (2020), {\it Optimal Remote Preparation of a Four-qubit Entangled State Via a Ten-qubit Entangled State},
Int. J. Theor. Phys., 59(8), 2480--2485.

\bibitem{2020DuDet}
Z.L. Du and X.L. Li (2020), {\it Deterministic joint remote state preparation of four-qubit cluster type with tripartite involvement},
Quantum Inf. Process., 19(1), 39.

\bibitem{2016ChenEco}
H.B. Chen and H. Fu and X.W. Li and P.C. Ma and Y.B. Zhan (2016), {\it Economic scheme for remote preparation of an arbitrary five-qubit Brown-type state},
Pramana, 86(4), 783–-788.

\bibitem{2018WeiOpt}
J.H. Wei and L. Shi and J.W. Luo and Y. Zhu and Q.Y. Kang and L.Q. Yu and H. Wu and J. Jiang and B.X. Zhao (2018), {\it Optimal remote preparation of arbitrary multi-qubit real-parameter states via two-qubit entangled states},
Quantum Inf. Process., 17(6), 141.

\bibitem{2019JiangCon}
S.X. Jiang and R.G. Zhou and R.Q. Xu and W.W. Hu and G.F. Luo (2019), {\it Controlled joint remote preparation of an arbitrary N-qubit state},
Quantum Inf. Process., 18(9), 265.

\bibitem{2019ZhouDet}
K.H. Zhou, and L. Shi and B.B. Luo and Y. Xue and C. Huang and Z.Q. Ma and J.H. Wei (2019), {\it Deterministic Controlled Remote State Preparation of Real-Parameter Multi-Qubit States via Maximal Slice States},
Int. J. Theor. Phys., 58(12), 4079--4092.

\bibitem{2019WeiDet}
J.H. Wei and L. Shi and S.H. Zhao and K.H. Zhou and L.Q. Yu and W. Lu and L.H. Ma and B.X. Zhao (2019), {\it Deterministic joint remote preparation of arbitrary multi-qubit states via three-qubit entangled states},
Quantum Inf. Process., 18(8), 237.

\bibitem{2021PengPer}
J.Y. Peng and M.Q. Bai and L. Tang and Z. Yang and Z.W. Mo (2021), {\it Perfect controlled joint remote state preparation of arbitrary multi-qubit states independent of entanglement degree of the quantum channel},
Quantum Inf. Process., 20, 340.

\bibitem{2022ZhaoMul}
J. Zhao (2022), {\it Multi-party deterministic joint RSP of arbitrary multi-qubit
states via maximal slice states},
Mod. Phys. Lett. B, 36(5), 2150599.

\end{thebibliography}
\end{document}